\newenvironment{nouppercase}{\renewcommand{\uppercasenonmath}[1]{}}{} 
\theoremstyle{plain}
\newtheorem{theorem}{Theorem}
\newtheorem{lemma}[theorem]{Lemma}
\newtheorem{prop}[theorem]{Proposition}
\newcommand{\RR}{\mathbb{R}}
\newcommand{\CC}{\mathbb{C}}
\newcommand{\NN}{\mathbb{N}}
\newcommand{\Cl}{\mathcal{C}}
\newcommand{\Dsl}{{\ooalign{\(D\)\cr\hidewidth\(/\)\hidewidth\cr}}} 
\newcommand{\dnabla}{{\ooalign{\(\nabla\)\cr\hidewidth\(/\)\hidewidth\cr}}} 
\newcommand{\cB}{\mathcal{B}}
\newcommand{\cH}{\mathcal{H}}
\newcommand{\cG}{\mathcal{G}}
\newcommand{\cO}{\mathcal{O}}
\newcommand{\cP}{\mathcal{P}}
\newcommand{\cQ}{\mathcal{Q}}
\newcommand{\dd}{\,\mathrm{d}}
\newcommand{\rmi}{\mathrm{i}}
\newcommand{\dom}{\mathop{\mathcal{D}}}
\newcommand{\qdom}{\mathop{\mathcal{Q}}}
\DeclareMathOperator{\tr}{tr}
\DeclareMathOperator{\vspan}{span}
\DeclareMathOperator{\dist}{dist}
\DeclareMathOperator{\ddiv}{div}
\begin{document}

\title{\Large Dirac operators on hypersurfaces\\ as large mass limits}

\author[A. Moroianu]{\large Andrei Moroianu}
\address[A. Moroianu]{Laboratoire de Math\'ematiques d'Orsay, Univ.~Paris-Sud, CNRS, Universit\'e Paris-Saclay, 91405 Orsay, France}
\email{andrei.moroianu@math.cnrs.fr}
\urladdr{http://moroianu.perso.math.cnrs.fr}

\author[T. Ourmi\`eres-Bonafos]{\large Thomas Ourmi\`eres-Bonafos}
\address[T. Ourmi\`eres-Bonafos]{CNRS \& Universit\'e Paris-Dauphine, PSL  University, CEREMADE, Place de Lattre de Tassigny, 75016 Paris, France}
\email{ourmieres-bonafos@ceremade.dauphine.fr}
\urladdr{https://www.ceremade.dauphine.fr/~ourmieres/}

\author[K. Pankrashkin]{\large Konstantin Pankrashkin}
\address[K. Pankrashkin]{Laboratoire de Math\'ematiques d'Orsay, Univ.~Paris-Sud, CNRS, Universit\'e Paris-Saclay, 91405 Orsay, France}
\email{konstantin.pankrashkin@math.u-psud.fr}
\urladdr{http://www.math.u-psud.fr/~pankrashkin/}

\begin{abstract}
We show that the eigenvalues of the intrinsic Dirac operator on the boundary of a Euclidean domain can be obtained
as the limits of eigenvalues of Euclidean Dirac operators, either in the domain with a MIT-bag type boundary condition
or in the whole space, with a suitably chosen zero order mass term.
\end{abstract}

\subjclass[2010]{81Q05, 53C80, 35P15, 58C40}

\begin{nouppercase} 
\maketitle
\end{nouppercase} 

{\small

\tableofcontents

}

\section{Introduction}

\subsection{Problem setting and main results}

The aim of the present paper is to make a new link between a number of recent papers on Dirac operators in bounded Euclidean domains with the theory of Dirac operators on manifolds, which is a classical topic in Riemannian geometry. Namely, let $\Omega\subset\RR^n$ be a bounded domain with smooth boundary $\Sigma$.  We are going to show that
the intrinsic Dirac operator $\Dsl$, which acts on sections of the spinor bundle of $\Sigma$,
can be interpreted as a limit of Euclidean Dirac operators, either in $\Omega$ with a suitable boundary
condition, or in the whole of $\RR^n$, with a suitably chosen term containing a large mass.

For $n\ge 2$ and $N:=2^{[\frac{n+1}{2}]}$ let $\alpha_1,\dots,\alpha_{n+1}$ be anticommuting Hermitian $N\times N$ matrices with $\alpha_j^2=I_N$, where
$I_N$ is the $N\times N$ identity matrix.
The associated Dirac operator with a mass $m\in\RR$ acts on functions $u:\RR^n\to \CC^N$ (spinors) by the differential expression
\begin{equation}
 \label{eqedm}
D_m u=-\rmi\sum_{j=1}^n \alpha_j \dfrac{\partial u}{\partial x_j} + m \alpha_{n+1} u,
\end{equation}
see e.g. \cite{thaller}. We remark that the expression $D_m$ does not correspond to the intrinsic Dirac operator on $\RR^n$ (see Subsection~\ref{quad2}) and can be interpreted
as follows: the intrinsic operator $\widetilde D$ in $\RR^{n+1}$ is defined as
\[
\widetilde D v=-\rmi\sum_{j=1}^{n+1} \alpha_j \dfrac{\partial v}{\partial x_j},
\]
and acts on functions $v:\RR^{n+1}\to \CC^N$, then assuming that $v$ is of the form $v(x_1,\dots,x_{n+1})=e^{\rmi m x_{n+1}} u(x_1,\dots,x_n)$ one obtains
$\widetilde D v=e^{\rmi m x_{n+1}} D_m u$.

For $x=(x_1,\dots,x_n)\in\RR^n$ we define the associated $N\times N$ matrices $\Gamma(x)$ by
\begin{equation}
   \label{ekg1}
\Gamma(x):=\sum_{j=1}^n x_j\alpha_j.
\end{equation}
Denote by $\nu$ the unit normal at $\Sigma$ pointing to the exterior of~$\Omega$ and consider the $N\times N$ matrices
\begin{equation}
   \label{ekg3}
\cB(s):=-\rmi \alpha_{n+1} \Gamma\big(\nu(s)\big), \quad s\in\Sigma.
\end{equation}
By the Dirac operator $A_m$ in $\Omega$ with a mass $m\in\RR$ and the infinite mass boundary condition (also called MIT Bag boundary condition)
we mean the operator in $L^2(\Omega,\CC^N)$ given by
\[
A_m u=D_m u
\]
on the domain $\dom(A_m)=\big\{u\in H^1(\Omega,\CC^N): \, u=\cB u \text{ on } \Sigma\big\}$,
which is self-adjoint with compact resolvent (see Subsection~\ref{quad1}). In addition, for $m,M\in\RR$ we consider the following operator $B_{m,M}$ in $L^2(\RR^n,\CC^N)$,
which is the Dirac operator in the whole space with the mass $m$ in $\Omega$ and the mass $M$ outside~$\Omega$, i.e.
\begin{gather*}
B_{m,M}=D_0  +  \big[m 1_{\Omega} + M(1-1_\Omega)\big]\alpha_{n+1} \equiv D_m+(M-m)(1-1_\Omega)\,\alpha_{n+1}
\end{gather*}
with domain $\dom(B_{m,M})=H^1(\RR^n,\CC^N)$.
We are going to show that the eigenvalues of the intrinsic Dirac operator $\Dsl$ (whose construction is briefly reviewed in Subsection~\ref{quad2})
and of the Euclidean Dirac operators $A_m$ and $B_{m,M}$,
are related to each other  for suitable values of $m$ and $M$.

For a self-adjoint lower semibounded operator $T$ and $j\in\NN$ we denote by $E_j(T)$
the $j$th eigenvalue of $T$, if it exists,
when enumerated in the non-decreasing order and counted with multiplicities.
First we show that the eigenvalues of $\Dsl^2$ on $\Sigma$
are the limits of the eigenvalues of the square of the MIT Bag Dirac operator $A_m$ on $\Omega$
for large negative $m$:
\begin{theorem}\label{thm1a}
For each $j\in\NN$ there holds $E_j(\Dsl^2)=\lim_{m\to-\infty}E_j(A_m^2)$.
\end{theorem}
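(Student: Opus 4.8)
I would prove this via two-sided min-max estimates, using that $\dom(A_m)$ is the form domain of $A_m^2$ and working in a fixed tubular neighbourhood $\Sigma\times[0,t_0)$ of $\Sigma$, parametrised by $(s,t)\mapsto s-t\nu(s)$. The key structural input is the cylindrical form of $D_m$ near $\Sigma$: writing $\sigma:=\mathrm i\,\Gamma(\nu)$ (a pointwise unitary), one has
\[
D_m=\sigma\Big(\partial_t+\tfrac12 H_t\Big)+\mathcal D_t+m\,\alpha_{n+1},
\]
where $H_t$ is the mean curvature of the parallel hypersurface $\Sigma_t$ and $\mathcal D_t$ a first-order tangential Dirac-type operator on $\Sigma_t$. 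Two facts about this decomposition drive everything. First, the frozen transverse operator $\sigma\partial_t+m\alpha_{n+1}$ on $[0,\infty)$, together with the boundary condition $\cB u(0)=u(0)$, has kernel $\{e^{mt}v:\cB v=v\}$ — here $m<0$ is essential for $L^2$-decay — which by the construction of Subsection~\ref{quad2} is canonically the intrinsic spinor bundle of $\Sigma$; on its orthogonal complement the square of the transverse operator is bounded below by $m^2$. Second, on this boundary-layer profile the two occurrences of the mean curvature — the $\tfrac12 H_t$ in the radial part and the analogous term encoded in $\mathcal D_0$, i.e.\ the difference between the extrinsic and the intrinsic Dirac operator of $\Sigma$ (the generalised-cylinder formula) — cancel, so that $\mathcal D_0$ reduces on the kernel precisely to $\Dsl$. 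For later use I also record the quadratic-form identity
\[
\|A_m u\|_{L^2(\Omega)}^2=\|\nabla u\|_{L^2(\Omega)}^2+m^2\|u\|_{L^2(\Omega)}^2+\int_\Sigma\Big(m+\tfrac12 H\Big)|u|^2\,ds,\qquad u\in\dom(A_m),
\]
obtained by expanding $\|D_mu\|^2$, integrating by parts, and using that on $\Sigma$ one has $\Gamma(\nu)u=\mathrm i\alpha_{n+1}u$ and $\Gamma(\nu)\alpha_{n+1}u=-\mathrm iu$.

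\emph{Upper bound.} Let $\phi_1,\dots,\phi_j$ be $L^2(\Sigma)$-orthonormal eigenspinors of $\Dsl$ with eigenvalues $\lambda_1,\dots,\lambda_j$ (so $\lambda_1^2\le\dots\le\lambda_j^2$ are the first $j$ eigenvalues of $\Dsl^2$), realise them as $\CC^N$-valued fields $\Phi_i$ with values in $\ker(\cB-I_N)$, and set $u_i:=\chi(t)e^{mt}\Phi_i(s)$ with $\chi$ a fixed cut-off equal to $1$ near $t=0$. The cylindrical decomposition gives $D_m u_i=e^{mt}\lambda_i\Phi_i+O\!\big(t\,e^{mt}\big)+O\!\big(e^{-c|m|}\big)$, whence $\|A_m u_i\|^2=\lambda_i^2\|u_i\|^2\,(1+o(1))$ and $\langle u_i,u_{i'}\rangle=\tfrac1{2|m|}\big(\delta_{ii'}+o(1)\big)$ as $m\to-\infty$. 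Applying the min-max principle to $\vspan\{u_1,\dots,u_j\}$ yields $\limsup_{m\to-\infty}E_j(A_m^2)\le\lambda_j^2=E_j(\Dsl^2)$.

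\emph{Lower bound.} By the upper bound $E_j(A_m^2)=O(1)$, so for a normalised $\psi$ in the span of the first $j$ eigenfunctions, $\|A_m\psi\|^2=O(1)$. The form identity and a trace inequality then yield an Agmon-type estimate: the mass of $\psi$ at distance $t$ from $\Sigma$ decays like $e^{-c|m|t}$, so up to an exponentially small error $\psi$ is supported in a fixed tube, where the IMS localisation error is likewise exponentially small. On the tube the cylindrical decomposition lets one bound $\|A_m\psi\|^2$ from below by the quadratic form of the decoupled model operator $\mathcal M_m=T_m^2\otimes 1+1\otimes\Dsl^2$ on $L^2\big([0,\infty)\big)\otimes L^2(\Sigma)$, where $T_m=\sigma\partial_t+m\alpha_{n+1}$ with the boundary condition $\cB u(0)=u(0)$ has kernel $\CC e^{mt}\otimes(\text{spinors of }\Sigma)$ and $T_m^2\ge m^2$ on its complement; all remaining terms (the $t$-dependence of $H_t$ and $\mathcal D_t$, the curvature of $\Sigma_t$, the Jacobian) are $O(t)$ and $O(t)\,\partial$ and hence negligible on the concentration scale $t\lesssim1/|m|$. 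Decomposing $\psi=\psi_\parallel+\psi^\perp$ along $\ker T_m$ and its complement, the transverse gap forces $\|\psi^\perp\|=o(1)$ (otherwise $\|A_m\psi\|^2\ge m^2\|\psi^\perp\|^2-O(1)\to\infty$), while $\psi_\parallel$ corresponds to a spinor field $w$ on $\Sigma$ with $\|A_m\psi\|^2\ge\langle\Dsl^2 w,w\rangle/\|w\|^2-o(1)$ (recall $\|\psi\|=1$). Running this over the whole $j$-dimensional eigenspace produces, for $m$ large, a $j$-dimensional space of spinor fields on $\Sigma$, so min-max for $\Dsl^2$ gives $\liminf_{m\to-\infty}E_j(A_m^2)\ge E_j(\Dsl^2)$.

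The main obstacle is the lower bound, and within it the technical bookkeeping: one must justify the cylindrical decomposition with all remainders controlled as relatively form-bounded perturbations that are small on the boundary layer, establish the Agmon estimate (for which the precise $m+\tfrac12 H$ boundary term is exactly what is needed), and carry out the Feshbach-type reduction onto the transverse zero modes while checking that the transverse gap is $\gtrsim m^2$. The conceptual heart is the cancellation of the mean-curvature terms furnished by the generalised-cylinder formula for the Dirac operator; granted the set-up of Subsections~\ref{quad1}--\ref{quad2}, the upper bound is comparatively routine.
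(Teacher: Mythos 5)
Your outline is correct and shares the conceptual skeleton of the paper's proof (reduction to a boundary layer, transverse profile $e^{mt}$ selected by the boundary condition $u=\cB u$ together with $m<0$, and identification of the effective tangential operator with $\Dsl^2$), but the execution differs in two genuine ways. First, you work at the operator level, using a cylindrical (generalised-cylinder) decomposition of $D_m$ itself and invoking the cancellation of the two mean-curvature terms to see $\Dsl$ appear directly on the transverse kernel; the paper never decomposes the first-order operator but works entirely with the quadratic form of $A_m^2$ (Lemma~\ref{qfa}), straightens the tube at the form level (Lemma~\ref{lem7}), and isolates the geometric identification as a separate statement (Lemma~\ref{lemld}), proved via the Schr\"odinger--Lichnerowicz formula: the effective form is $|\nabla f|^2+(H_2-H_1^2/4)|f|^2$ on $\{f=\cB f\}$, and showing this equals $\Dsl^2$ is the paper's main new observation, so it deserves to be treated as a real input rather than folklore (in particular the multiplicity bookkeeping for even $n$ lives there). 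Second, for the lower bound you propose Agmon decay, IMS localisation and a Feshbach reduction onto the transverse zero modes, whereas the paper uses Dirichlet--Neumann bracketing onto $\Sigma\times(0,\delta)$, exact separation of variables in the eigenbasis of the one-dimensional model operator $S'$, and monotone convergence (Proposition~\ref{prop-mon}); the paper's route buys a proof with no remainder estimates at all, while yours would in principle yield quantitative error bounds. The one place where your sketch is too quick is the claim that the $O(t)$ and $O(t)\,\partial$ remainders are ``negligible on the scale $t\lesssim 1/|m|$'': since $\|\nabla\psi\|_{L^2}^2\sim m^2$ for a normalised eigenfunction, a crude bound on an $O(t)\,\partial$ term only gives $O(1)$, not $o(1)$; one must first separate the normal derivative (which carries the $m^2$ and is absorbed by the transverse model) from the tangential derivatives (which are $O(1)$ only \emph{after} the reduction), exactly the two-sided control that Lemma~\ref{lem7} packages as the factors $(1\pm c\delta)$ in front of $|\nabla_s v|^2$, with the resulting $\delta$-dependent perturbation of $L$ removed by letting $\delta\to0$ at the very end. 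With that point handled, your argument goes through.
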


Then we show that, in turn, for any fixed $m$, the MIT Bag Dirac operators $A_m$ on $\Omega$ can be viewed as
the limits of the Dirac operators $B_{m,M}$ in the whole space with a large mass outside $\Omega$ (which justifies
the use of the term ``infinite mass boundary condition''):

\begin{theorem}\label{thm2}
For each $j\in\NN$  and $m\in\RR$ there holds $E_j(A_m^2)=\lim_{M\to+\infty} E_j(B_{m,M}^2)$.
\end{theorem}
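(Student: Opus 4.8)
The plan is to bracket $E_j(B_{m,M}^2)$ between $E_j(A_m^2)-o(1)$ and $E_j(A_m^2)+o(1)$ via the min--max characterisation of eigenvalues below the essential spectrum. Write $\Omega^c:=\RR^n\setminus\overline\Omega$. The basic tool is the quadratic form identity
\begin{equation}\label{eq:qf}
\|B_{m,M}u\|_{L^2(\RR^n)}^2=\|\nabla u\|_{L^2(\RR^n)}^2+m^2\|u\|_{L^2(\Omega)}^2+M^2\|u\|_{L^2(\Omega^c)}^2+(m-M)\int_\Sigma\langle\cB u,u\rangle\dd S,
\end{equation}
valid for $u\in H^1(\RR^n,\CC^N)$, obtained by writing $B_{m,M}=D_0+(m1_\Omega+M1_{\Omega^c})\alpha_{n+1}$, expanding, using $D_0^2=-\Delta$ on $\RR^n$, and integrating by parts over $\Omega$ and $\Omega^c$ separately; the surface terms produce exactly $\pm\int_\Sigma\langle\cB u,u\rangle\dd S$ (cf.\ Subsection~\ref{quad1}). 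Since $\cB(s)$ is a Hermitian involution, $\int_\Sigma\langle\cB u,u\rangle\dd S=\|\gamma_+u\|_{L^2(\Sigma)}^2-\|\gamma_-u\|_{L^2(\Sigma)}^2$, where $\gamma_\pm u$ are the pointwise projections of the trace of $u$ onto the $\pm1$-eigenbundles of $\cB$, and $\dom(A_m)=\{u\in H^1(\Omega,\CC^N):\gamma_-u=0\}$. Finally, $B_{m,M}$ is a relatively compact perturbation of $D_M$, so $\sigma_{\mathrm{ess}}(B_{m,M}^2)=[M^2,+\infty)$; the trial functions constructed below show that for $M$ large $B_{m,M}^2$ has at least $j$ eigenvalues below $M^2$, so $E_j(B_{m,M}^2)$ is a genuine eigenvalue equal to the $j$-th min--max value over $j$-dimensional subspaces of $H^1(\RR^n,\CC^N)$.

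\emph{Upper bound.} Let $\phi_1,\dots,\phi_j$ be $L^2(\Omega)$-orthonormal eigenfunctions of $A_m$, $A_m\phi_i=\mu_i\phi_i$, $\mu_i^2=E_i(A_m^2)$; by elliptic regularity of the MIT boundary problem they are smooth up to $\Sigma$, and there $\phi_i=\cB\phi_i$ is equivalent to $(\rmi\Gamma(\nu)+\alpha_{n+1})\phi_i=0$. In a collar $\{0<t<\delta_0\}\subset\Omega^c$ ($t=$ distance to $\Sigma$) set $\widetilde\phi_i^M$ equal to $\phi_i$ on $\Omega$, to $\chi(t)e^{-Mt}\phi_i(s)$ in the collar (with $\phi_i(s)$ the boundary value extended constantly along normals and $\chi$ a fixed cutoff, $\chi(0)=1$), and to $0$ elsewhere; this lies in $H^1(\RR^n,\CC^N)$ since the traces match on $\Sigma$. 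Writing $D_0=-\rmi\Gamma(\nu)\partial_t+\cT$ in the collar, with $\cT$ a first-order tangential operator with smooth bounded coefficients, the $O(M)$ part of $D_M\widetilde\phi_i^M$ equals $M\chi e^{-Mt}(\rmi\Gamma(\nu)+\alpha_{n+1})\phi_i=0$; hence $\|B_{m,M}\widetilde\phi_i^M\|^2=\mu_i^2+O(1/M)$ and $\|\widetilde\phi_i^M\|^2=1+O(1/M)$, and the same estimates hold for normalised linear combinations (using orthonormality of the $\phi_i$ and Cauchy--Schwarz in the exterior). Since the $\widetilde\phi_i^M$ restrict to the independent $\phi_i$ on $\Omega$, the min--max principle gives $E_j(B_{m,M}^2)\le E_j(A_m^2)+O(1/M)$, so $\limsup_{M\to+\infty}E_j(B_{m,M}^2)\le E_j(A_m^2)$.

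\emph{Lower bound (the crux).} Assume $\liminf_{M\to+\infty}E_j(B_{m,M}^2)<E_j(A_m^2)$ and pick $M_k\to+\infty$ along which $E_i(B_{m,M_k}^2)\to L_i$ ($i\le j$, $L_1\le\dots\le L_j<E_j(A_m^2)$), with $L^2(\RR^n)$-orthonormal eigenfunctions $u_i^{(k)}$, $B_{m,M_k}u_i^{(k)}=\lambda_i^{(k)}u_i^{(k)}$, $(\lambda_i^{(k)})^2=E_i(B_{m,M_k}^2)$, $\lambda_i^{(k)}\to\mu_i$. Restricting the equation to $\Omega$ gives $D_m u_i^{(k)}=\lambda_i^{(k)}u_i^{(k)}$ there, so $\|D_m u_i^{(k)}\|_{L^2(\Omega)}\le|\lambda_i^{(k)}|$ is bounded; combined with $\|D_m u\|_{L^2(\Omega)}^2=\|\nabla u\|_{L^2(\Omega)}^2+m^2\|u\|_{L^2(\Omega)}^2+m\int_\Sigma\langle\cB u,u\rangle\dd S$ and the interior trace inequality with a small parameter this yields a uniform bound $\|u_i^{(k)}\|_{H^1(\Omega)}\le C$, and then from \eqref{eq:qf} also $\|u_i^{(k)}\|_{L^2(\Omega^c)}\to0$. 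The key geometric estimate is the collar inequality: for $w\in H^1(\Omega^c,\CC^N)$ and $0<\delta<\delta_0$,
\begin{equation}\label{eq:collar}
\|\nabla w\|_{L^2(\Omega^c)}^2+M^2\|w\|_{L^2(\Omega^c)}^2\ge f_\delta(M)\,\|\gamma w\|_{L^2(\Sigma)}^2,\qquad f_\delta(M):=(1-C_\Omega\delta)\,M\tanh(M\delta),
\end{equation}
proved by discarding everything outside the collar, using $|\nabla w|^2\ge|\partial_tw|^2$ and the Fermi-coordinate bound $\dd V\ge(1-C_\Omega\delta)\dd t\,\dd S$, together with the elementary inequality $\int_0^\delta(|v'|^2+M^2|v|^2)\dd t\ge M\tanh(M\delta)\,|v(0)|^2$ on each normal fibre. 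Inserting $u=u_i^{(k)}$ into \eqref{eq:qf}, using $\|B_{m,M_k}u\|^2=(\lambda_i^{(k)})^2$, discarding $\|\nabla u\|_{L^2(\Omega)}^2$ and $m^2\|u\|_{L^2(\Omega)}^2$, and bounding the exterior terms by \eqref{eq:collar} yields
\[
(\lambda_i^{(k)})^2\ge\bigl(f_\delta(M_k)+m-M_k\bigr)\|\gamma_+u_i^{(k)}\|_{L^2(\Sigma)}^2+\bigl(f_\delta(M_k)+M_k-m\bigr)\|\gamma_-u_i^{(k)}\|_{L^2(\Sigma)}^2 .
\]
Since $\|\gamma_+u_i^{(k)}\|_{L^2(\Sigma)}^2\le C$ and $(\lambda_i^{(k)})^2\le C$ uniformly, while $f_\delta(M_k)+M_k-m\ge M_k$ and $M_k-m-f_\delta(M_k)\le(C_\Omega\delta+o(1))M_k$ as $k\to\infty$, dividing forces $\limsup_k\|\gamma_-u_i^{(k)}\|_{L^2(\Sigma)}^2\le C'\delta$; as $\delta$ is arbitrary, $\|\gamma_-u_i^{(k)}\|_{L^2(\Sigma)}\to0$. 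Passing to a further subsequence, $u_i^{(k)}|_\Omega\rightharpoonup\phi_i$ in $H^1(\Omega)$, hence strongly in $L^2(\Omega)$ and in $L^2(\Sigma)$-trace; with $\|u_i^{(k)}\|_{L^2(\Omega^c)}\to0$ the $\phi_i$ are $L^2(\Omega)$-orthonormal, satisfy $D_m\phi_i=\mu_i\phi_i$ in $\Omega$ and $\gamma_-\phi_i=0$, so $\phi_i\in\dom(A_m)$ with $A_m\phi_i=\mu_i\phi_i$. Thus $A_m^2$ has a $j$-dimensional invariant subspace with eigenvalues $\mu_i^2=L_i\le L_j$, whence $E_j(A_m^2)\le L_j=\lim_kE_j(B_{m,M_k}^2)<E_j(A_m^2)$, a contradiction. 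Together with the upper bound this gives $\lim_{M\to+\infty}E_j(B_{m,M}^2)=E_j(A_m^2)$.

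The main obstacle is the lower bound. Beyond the (routine) compactness extraction, the trial-function construction, and the form identity \eqref{eq:qf}, it rests on two interlocking points: the collar estimate \eqref{eq:collar} must carry an asymptotically sharp constant in front of $M$ (a coefficient $cM$ with a fixed $c<1$, or even $M-C\sqrt M$ with the wrong algebraic balance, would not close the argument), and inserting a genuine eigenfunction into \eqref{eq:qf} must convert the large negative surface term $-(M-m)\|\gamma_+u\|^2$ into the \emph{exact} vanishing, in the limit, of the $\cB$-forbidden trace component $\gamma_-u_i^{(k)}$ — this is precisely the mechanism by which the infinite mass degenerates into the MIT boundary condition.
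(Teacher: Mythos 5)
Your upper bound is correct and is essentially the paper's argument in disguise: the explicit profile $\chi(t)e^{-Mt}$ plays the role of the first eigenfunction of the model operator $S$ in Lemma~\ref{lem1dd}, and the cancellation of the $O(M)$ term in $D_M\widetilde\phi_i^M$ is the same mechanism as the vanishing of $2(M-m)\int_\Sigma|\cP_-u|^2\dd s$ on $\dom(A_m)$ in the paper's representation \eqref{eq-bmm1}. Your lower bound, however, follows a genuinely different route (compactness of eigenfunctions rather than form decoupling plus monotone convergence), and it contains a genuine gap at its very first step: the identity
\[
\|D_m u\|_{L^2(\Omega)}^2=\|\nabla u\|_{L^2(\Omega)}^2+m^2\|u\|_{L^2(\Omega)}^2+m\int_\Sigma\langle\cB u,u\rangle\dd s
\]
is false for general $u\in H^1(\Omega,\CC^N)$. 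The correct identity (see \eqref{form1}) carries the additional boundary term $\int_\Sigma\big(\tfrac{H_1}{2}|u|^2-\langle\widetilde D^\Sigma u,u\rangle\big)\dd s$, and the extrinsic Dirac contribution $\int_\Sigma\langle\widetilde D^\Sigma u,u\rangle\dd s$ vanishes \emph{only} when $u=\cB u$ on $\Sigma$ --- precisely the condition you are trying to recover in the limit. This term is a first-order tangential boundary form, comparable in size to $\|\nabla u\|_{L^2(\Omega)}^2$ itself; it cannot be absorbed by the interior trace inequality $\|u\|^2_{L^2(\Sigma)}\le\varepsilon\|\nabla u\|^2_{L^2(\Omega)}+C_\varepsilon\|u\|^2_{L^2(\Omega)}$, which only controls zero-order traces. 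Consequently the uniform bounds $\|u_i^{(k)}\|_{H^1(\Omega)}\le C$ and $\|\gamma_+u_i^{(k)}\|_{L^2(\Sigma)}^2\le C$ are not established, and the entire chain --- the conversion of the large negative surface term into $\|\gamma_-u_i^{(k)}\|_{L^2(\Sigma)}\to0$, the weak-limit extraction, and the final contradiction --- collapses, since dividing by $M_k$ in your key inequality is only useful if the coefficient of $\|\gamma_+u_i^{(k)}\|^2$ multiplies a quantity known to be $O(1)$. (A direct attempt to close the loop fails: absorbing the error $O(\delta M)\|\gamma_+u\|^2_{L^2(\Sigma)}$ into $\|\nabla u\|^2_{L^2(\Omega)}$ forces $\delta\lesssim 1/M$, which destroys the sharp constant $f_\delta(M)\sim M$ in your collar estimate.)

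The paper sidesteps all a priori estimates on eigenfunctions. It rewrites $B_{m,M}^2[u,u]$ as in \eqref{eq-bmm1} with a shift $\varepsilon>0$, drops the coupling across $\Sigma$ to get $E_j(B_{m,M}^2)\ge E_j(K_{m,M,\varepsilon}\oplus K^c_{M,\varepsilon})$, shows via Lemma~\ref{lem11}(b) that $E_1(K^c_{M,\varepsilon})\ge\varepsilon M\to+\infty$ (the $\varepsilon$-margin is essential here), and then observes that the interior forms $K_{m,M,\varepsilon}$ are monotonically increasing in $M$ with limit form domain $\{u:\cP_-u=0 \text{ on }\Sigma\}$, so Proposition~\ref{prop-mon} delivers the convergence of eigenvalues to those of $C_{m,\varepsilon}$, and finally $\varepsilon\to0$. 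If you want to keep your eigenfunction-based strategy, you must supply an independent proof of the uniform $H^1(\Omega)$ bound (this is the substance of the resolvent-convergence papers \cite{BCLTS18,SW} in low dimensions); otherwise the decoupling-plus-monotonicity argument is the efficient way to close the lower bound.
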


Finally, by an additional construction we find an asymptotic regime in which the eigenvalues of $\Dsl^2$ on $\Sigma$
are directly recovered as the limits of the eigenvalues of the square of the Dirac operator $B_{m,M}^2$ on the whole space:

\begin{theorem}\label{thm3}
For each $j\in\NN$ the eigenvalue $E_j(B_{m,M}^2)$ converges
to $E_j(\Dsl^2)$ as $m\to -\infty$ and $M\to+\infty$ with $m/M\to 0$.
\end{theorem}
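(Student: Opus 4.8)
The plan is to deduce Theorem~\ref{thm3} by composing the two limits already established: Theorem~\ref{thm1a} gives $E_j(A_m^2)\to E_j(\Dsl^2)$ as $m\to-\infty$, and Theorem~\ref{thm2} gives $E_j(B_{m,M}^2)\to E_j(A_m^2)$ as $M\to+\infty$ for each fixed $m$. The subtlety is that in the joint regime $m\to-\infty$, $M\to+\infty$ the second convergence must hold \emph{uniformly} — more precisely, with a rate that does not deteriorate too fast as $m\to-\infty$. So the first (and main) task is to upgrade Theorem~\ref{thm2} to a quantitative statement: I would revisit its proof and extract a two-sided bound
\[
\big|E_j(B_{m,M}^2)-E_j(A_m^2)\big|\le \rho_j(m,M),\qquad m\le 0,\ M\ge M_0,
\]
with $M_0$ and $\rho_j$ depending only on $j$, $\Omega$ and the matrices $\alpha_1,\dots,\alpha_{n+1}$, and with $\rho_j(m,M)\to 0$ whenever $M\to+\infty$ and $m/M\to 0$. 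An estimate with the qualitative behaviour $\rho_j(m,M)\le C_{j,\Omega}\big((1+|m|)/M\big)^{c}$ for some $c>0$ is exactly of this type, and seeing such a power of $(1+|m|)/M$ emerge is precisely what singles out the hypothesis $m/M\to0$. Once this is available, Theorem~\ref{thm3} follows at once from the triangle inequality
\[
\big|E_j(B_{m,M}^2)-E_j(\Dsl^2)\big|\le \rho_j(m,M)+\big|E_j(A_m^2)-E_j(\Dsl^2)\big|,
\]
in which the second term tends to $0$ by Theorem~\ref{thm1a}; in particular, from the upper half of the quantitative bound one also gets the a priori inequality $E_j(B_{m,M}^2)\le E_j(\Dsl^2)+1$ for $(m,M)$ in the regime, which keeps several constants under control below.

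For the upper bound $E_j(B_{m,M}^2)\le E_j(A_m^2)+\rho_j(m,M)$ I would feed into the min-max principle a $j$-dimensional trial space obtained from an $L^2(\Omega)$-orthonormal family $u_1^m,\dots,u_j^m$ of eigenfunctions of $A_m^2$ for $E_1(A_m^2),\dots,E_j(A_m^2)$, extending each $u_i^m$ to a function in $H^1(\RR^n,\CC^N)$ that coincides with $u_i^m$ on $\Omega$ and is supported in a tubular neighbourhood of $\Sigma$ on the exterior side, with a normal profile decaying on the scale $M^{-1}$ chosen so that, when the quadratic form of $B_{m,M}^2$ is evaluated on the extension, the contributions of the exterior mass term $M^2\,\mathbf 1_{\RR^n\setminus\Omega}$, of the exterior gradient, and of the surface term proportional to $M-m$ carried by $B_{m,M}^2$ reproduce, to leading order, the MIT-bag quadratic form of $A_m^2$ at $u_i^m$ (here one uses the boundary condition $\cB u_i^m=u_i^m$ and the Green-type identity comparing $\|D_0\cdot\|_{L^2(\Omega)}^2$ and $\|\nabla\cdot\|_{L^2(\Omega)}^2$ on $\Omega$). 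The subtle point, and the reason the two length scales enter, is that as $m\to-\infty$ the eigenfunctions $u_i^m$ concentrate in a boundary layer of width $\sim|m|^{-1}$, so that $\|u_i^m\|_{H^1(\Omega)}$ and $\|u_i^m\|_{L^2(\Sigma)}$ grow like powers of $|m|$; the residual terms produced by the extension then carry a fixed power of $|m|$ against a negative power of $M$, which forces $m/M\to0$.

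For the matching lower bound $E_j(B_{m,M}^2)\ge E_j(A_m^2)-\rho_j(m,M)$ the route is to show that the eigenfunctions $v_1,\dots,v_j$ of $B_{m,M}^2$ for $E_1(B_{m,M}^2),\dots,E_j(B_{m,M}^2)$ are, up to a small error, concentrated in $\overline\Omega$ and satisfy the MIT-bag condition approximately, so that their restrictions to $\Omega$ are quasimodes for $A_m^2$ and the min-max principle for $A_m^2$ gives the inequality. The key analytic ingredient is an Agmon- or Combes--Thomas-type exponential decay estimate for the $v_i$ in $\RR^n\setminus\Omega$, with rate $\sqrt{M^2-E_j(B_{m,M}^2)}\ge\tfrac12 M$ once $M$ is large, which by the a priori bound from the first paragraph is \emph{uniform in} $m\le 0$; this confines the $v_i$ to $\Omega$ and turns the surface term $-(M-m)\int_\Sigma\langle\cB v_i,v_i\rangle$ (very negative unless the trace of $v_i$ lies almost entirely in the $+1$ eigenspace of $\cB$) into an approximate MIT-bag boundary condition, at a cost that can again be estimated in terms of $|m|$ and $M$.

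I expect the main difficulty to be exactly this uniform bookkeeping: carrying the trial-function construction and the quasimode estimate through while keeping every constant expressed as an explicit function of $|m|$ and $M$, reconciling the $|m|^{-1}$ boundary layer inside $\Omega$ with the $M^{-1}$ decay length outside, and checking that no geometric quantity attached to $\Sigma$ secretly depends on $m$. An alternative route, at the cost of not reusing Theorems~\ref{thm1a} and~\ref{thm2}, would be a single boundary-layer analysis of $B_{m,M}$ in a tubular neighbourhood of $\Sigma$ directly in the regime $m\to-\infty$, $M\to+\infty$, $m/M\to 0$, identifying the effective operator on $\Sigma$ as $\Dsl^2$; but composing the two established theorems is more economical.
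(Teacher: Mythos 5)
Your strategy — composing Theorem~\ref{thm1a} with a \emph{uniform} quantitative version of Theorem~\ref{thm2} — is a genuinely different route from the paper's. The paper does not iterate the two theorems; it gives a direct, self-contained argument that recycles the building blocks (the one-dimensional operators $S$, $S'$ of Subsection~\ref{ssec1d}, the tubular-coordinate estimates of Lemma~\ref{lem7}, the extension operator $F_M$ of Lemma~\ref{lem11}, and Proposition~\ref{prop-mon}) but compares $B_{m,M}^2$ to $L$ \emph{directly}: for the upper bound it tensors eigenfunctions of the surface operator $L_\delta$ with the profile $\psi$ of $S$ inside $\Omega_\delta$ and with $F_M$ outside, and for the lower bound it decouples $\Omega$ from $\Omega^c$, passes to the tubular neighbourhood, performs the $\Xi$ expansion, and applies monotone convergence in the joint regime (using the small twist $\varepsilon=\varepsilon_0/|m|$ in \eqref{eq-bmm1} rather than a fixed $\varepsilon$). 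The advantage of the paper's route is that monotonicity replaces quantitative control, so no remainder estimates are needed; the introduction even remarks that obtaining explicit rates would require ``a considerably higher technical effort.'' Your route is conceptually natural and correctly identifies the two length scales $|m|^{-1}$ (interior boundary layer) and $M^{-1}$ (exterior decay) and the mechanism that forces $m/M\to 0$, and the upper-bound half ($E_j(B_{m,M}^2)\le E_j(A_m^2)+C|m|/M$) is fairly close to extracting a rate from the existing proof of Theorem~\ref{thm2} together with Lemma~\ref{lem1dd}. But the lower-bound half is a substantial new undertaking: the paper's lower bound for Theorem~\ref{thm2} rests on monotone convergence of the penalised operators $K_{m,M,\varepsilon}$, which yields no rate, so you would have to develop afresh the Agmon/Combes--Thomas decay for eigenfunctions of $B_{m,M}$ and a careful quasimode correction (to restore the exact MIT-bag boundary condition from the approximate one enforced by the penalty term), all with constants tracked uniformly in $m\le 0$. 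Nothing in your sketch is wrong, but this piece is the hard part of the programme and is not in the paper; the paper's direct argument is considerably more economical precisely because it never needs such remainder estimates.
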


Let us comment on the three theorems. In the recent paper \cite{ALTR16} the operator $A_m$ in three dimensions was considered,
and it was shown that for each $j\in\NN$ one has $\lim_{m\to-\infty}E_j(A_m^2)=E_j(L)$ for some operator $L$ on $\Sigma$ given by its sesquilinear form. Hence, this result is extended in two directions: first, we consider arbitrary dimensions and, second, we show that the operator $L$ in question is in fact unitarily equivalent to $\Dsl^2$, which is our main observation.
Some analogs of Theorem~\ref{thm2} in two and three dimensions were obtained very recently in~\cite{ALTR18,BCLTS18,SW},
and we extend them to all dimensions. The result of Theorem~\ref{thm3} providing an interpretation of $\Dsl$ using an infinite
mass jump on $\Sigma$ does not seem to have previous analogs. In a sense, it can be viewed as a potential-induced
collapse by analogy with Dirac operators on manifolds converging to a lower-dimensional structure~\cite{lott,roos}.
As a possible application of our results, we remark that estimating the central gap (i.e. the first eigenvalue) of $A_m$ or $B_{m,M}$
in the respective asymptotic regime is reduced to the eigenvalue estimate for the Dirac operator $\Dsl$, for which a number of results are available: we refer to the book \cite{ginoux} for a review.

The text is organized as follows. In Subsection~\ref{sec-not} we recall a link between self-adjoint operators
and sesquilinear forms, choose a suitable notation, and then recall two important tools of the spectral analysis:
the min-max characterization of the eigenvalues and the monotone convergence.
In Section~\ref{sec-quad} we construct the sesquilinear forms for the squares of all the Dirac operators in question, which will
allow one to obtain eigenvalue estimates based on the min-max principle:
in Subsection~\ref{quad1} we recall the definition of various curvatures of $\Sigma$
and study $A_m$ and $B_{m,M}$, and in Subsection~\ref{quad2} we introduce an operator $L$,
which already appeared in \cite{ALTR16} for the three-dimensional case, and
prove that it is unitary equivalent to $\Dsl^2$. The unitary equivalence 
is shown using a Schr\"odinger-Lichnerowicz formula for extrinsic Dirac operators whose elementary proof for our Euclidean
setting is given in Appendix~\ref{sec-lichn} for reader's convenience. In Section~\ref{sec-prelim} we collect
some preliminary constructions: in Subsection~\ref{ssec1d} we study the eigenvalues and the eigenfunctions
of one-dimensional Laplacians $S$ and $S'$ with a large parameter in the boundary conditions, and in Subsection~\ref{sec-curv}
we give some computations in tubular coordinates near $\Sigma$.

In Section~\ref{sec-thm1} we prove Theorem~\ref{thm1a}. We first reduce the problem to the spectral analysis
is small tubular $\delta$-neighborhoods of $\Sigma$, and in order to work in $\Sigma\times(0,\delta)$ we use the computations
from Subsection~\ref{sec-curv}. The upper bound is obtained by taking as test functions the tensor products
of the eigenfunctions of (a small perturbation of) the effective operator $L$ on $\Sigma$ with the first eigenfunction
of the model operator $S$ in the normal direction.
For the lower bound we perform a unitary transform, which is just the expansion in eigenfunctions of the second model operator $S'$ in the normal variable,
thus transforming the problem into the study of a monotonically increasing sequence of operators. A simple application of the respective
machinery presented in Subsection~\ref{sec-not} then shows that only the projection onto the lowest eigenfunction of $S'$ contributes
to the asymptotics of the individual eigenvalues, which induces an effective operator acting on $\Sigma$ only.

The proof of Theorem~\ref{thm2} is presented in Section~\ref{sec-thm2}. To establish the upper bound we construct first an extension operator from $\Sigma$ to the exterior of $\Omega$
with a suitable control in terms of the mass $M$, and then use the corresponding extensions of the eigenfunctions of $A_m$ to construct test functions for $B_{m,M}$
used in the min-max principle.
For the lower bound we first decouple the two sides of $\Omega$ in order to deal separately with $\Omega$ and its exterior, then it is easily seen that the exterior
does not contribute to the lowest eigenvalues, while the part in $\Omega$ appears to be monotonically increasing in $M$ and then easily handled with the help of the monotone convergence.
The overall scheme here is very close to the one used in~\cite{SW} for the two-dimensional case.

In Section~\ref{sec-thm3} we prove Theorem~\ref{thm3}. The proof is essentially by combining in a new way various components from the preceding analysis,
but we still provide a complete self-contained argument. The upper bound is obtained  by taking the eigenfunctions of the operator $L$ on $\Sigma$
and extending them on both sides of $\Sigma$ by taking tensor products
with the first eigenfunctions of the model operators $S$ and $S'$ in the two normal directions, and then using them as test functions
in the min-max principle for $B_{m,M}^2$. For the lower bound we again decouple the two sides of $\Sigma$ and eliminate the exterior of $\Omega$
as in Theorem~\ref{thm2}. The analysis of the part in $\Omega$ is then quite similar to the one in Theorem~\ref{thm1a}: one is first reduced to the analysis in a thin tubular neighborhood
of $\Omega$, and then one applies a unitary transform in order obtain a monotone family with an explicit limit operator.
As will be seen from the proof, the domain $\Omega$ and its exterior play symmetric roles,
and, as a result, the eigenvalue convergence in Theorem~\ref{thm3} also holds in the asymptotic regime $m\to+\infty$, $M\to -\infty$, $M/m\to 0$.

Our approach based on the monotone convergence was chosen on purpose in order to obtain the main terms in a transparent way
and to be able to concentrate on the geometric aspects. A more precise analysis involving remainder estimates
and a more detailed operator convergence should be possible in the spirit
of the recent works on specific dimensions, e.g.~\cite{ALTR18,ALTR16,HOBP},
but a rigorous implementation requires a considerably higher technical effort,
and we prefer to discuss the related aspects in a separate forthcoming paper.

\subsection{Notation, min-max principle, monotone convergence}\label{sec-not}

The most part of the subsequent spectral analysis is based on the min-max principle for the eigenvalues of self-adjoint operators
and uses rather sesquilinear forms than operators (in particular, most operators are introduced just through their sesquilinear forms,
while the action and the domain of the operators are not specified explicitly). In order to avoid potential confusions, and to make the presentation
more accessible to non-experts, we recall here
some basic facts of the theory and introduce some notation.

Let $\cG$ be a Hilbert space, then by $\langle \cdot,\cdot\rangle_\cG$ we denote the scalar product in $\cG$, which is assumed antilinear with respect to the \emph{first} argument,
and the associated norm is denoted $\|\cdot\|_\cG$.

A sesquilinear form $t$ in $\cG$ defined on a subspace $\dom(t)$ of $\cG$ is a map
\[
\dom(t)\times\dom(t)\ni(u,v)\mapsto t(u,v)\in \CC
\]
which is antilinear with respect to the first argument and linear with respect to the second one, and it is called Hermitian
if $t(v,u)=\overline{t(u,v)}$ for all $u,v\in\dom(t)$. As a consequence of the polar identity, a Hermitian sesquilinear form $t$
is uniquely determined by its diagonal values $t(u,u)$ with $u\in\dom(t)$.
An Hermitian sesquilinear form $t$ is called lower semibounded
if there is $c\in\RR$ such that $t(u,u)\ge c\|u\|^2_\cG$ for all $u\in\dom(t)$.
Such a form is then called closed if $\dom(t)$ endowed with the scalar product
$\langle u,v\rangle_t:=t(u,v)+(1-c)\langle u,v\rangle_\cG$
is a Hilbert space. With such a sesquilinear form $t$
one associates a self-adjoint operator $T$ in $\cG$ uniquely defined
by the following two conditions:
(a) the domain $\dom(T)$ of $T$ is contained in $\dom(t)$ and (b) $t(u,v)=\langle u, Tv\rangle_\cG$ all $u,v\in\dom(T)$,
and we then say that $T$ is the \emph{self-adjoint operator generated by the form~$t$}.
It is worth noting that $\dom(T)\ne \dom(t)$ in general.

On the other hand, let $T$ be a self-adjoint operator in $\cG$ with  domain $\dom(T)$. It is called lower semibounded
if for some $c\in \RR$ one has $\langle u,T u\rangle_\cG\ge c\|u\|^2_\cG$ for all $u\in\dom(T)$, or $T\ge c$ for short.
In such a case, the completion of $\dom(T)$ with respect to the scalar product $\langle u,v\rangle_Q:=\langle u, Tv\rangle_\cG +(1-c)\langle u,v\rangle_\cG$
is called the \emph{form domain} of $T$ and is denoted by $\qdom(T)$. The map $\dom(T)\times\dom(T)\ni(u,v)\mapsto \langle u, Tv\rangle_\cG$
then uniquely extends to a closed lower semibounded Hermitian sesquilinear form $t$
with domain $\dom(t)=\qdom(T)$, which will be called the \emph{sesquilinear form generated by the operator $T$}.
In turn, $T$ is exactly the self-adjoint operator generated by this form~$t$. To have a shorter writing (and to reduce the number of symbols in use),
we will write
\[
T[u,v]:=t(u,v) \text{ for } u,v\in\qdom(T),
\]
in particular, one has the simple equality $T[u,v]=\langle u, T v\rangle_\cG$ if $v\in\dom(T)$.
We further recall that due to the spectral theorem we have
\begin{gather*}
\qdom(T)=\dom\big({\sqrt{T-c}}\big)=\dom(\sqrt{|T|}),\\
T[u,v]\equiv t(u,v)=\langle \sqrt{T-c}\, u,\sqrt{T-c}\, v\rangle_\cG + c\langle u,v\rangle_\cG, \quad u,v\in \qdom(T),
\end{gather*}
and the operator $T$ has  compact resolvent iff its form domain $\cQ(T)$ endowed with the above scalar product $\langle\cdot,\cdot\rangle_t\equiv \langle\cdot,\cdot\rangle_Q$
is compactly embedded into $\cG$.
It follows from the preceding discussion that a lower semibounded self-adjoint operator $T$ is uniquely determined by the knowledge of its form domain $\qdom(T)$
and of the diagonal values $T[u,u]$ of its sesquilinear form for all $u\in \qdom(T)$. Many operators appearing in the subsequent discussion
will be introduced in this way.

Using the above convention let us recall the min-max characterization of eigenvalues.
Let $T$ be a lower semibounded self-adjoint operator in an infinite-dimensional Hilbert space $\cG$.
For $j\in\NN$ we denote
\[
E_j(T):=\inf_{\substack{S\subset \qdom(T)\\ \dim S=j}} \sup_{\substack{u\in S\\u\ne 0}} \dfrac{T[u,u]}{\|u\|^2_\cG}.
\]
It follows from the min-max principle that $E_j(T)$ is the $j$th eigenvalue of $T$,
when enumerated in the non-decreasing order and counted with multiplicities, provided that it is
strictly below the bottom of the essential spectrum of $T$, and $E_1(T)$ coincides
with the bottom of the spectrum of $T$, see e.g. \cite[Section XIII.1]{RS}.
In particular, if $T$ has compact resolvent, then $E_j(T)$ is the $j$th eigenvalue of $T$ for any $j\in\NN$.
The main consequence of the min-max principle  we are going to use is as follows (the proof directly
follows from the definition):

\begin{prop}\label{prop-incl}
Let $T$ and $T'$ be lower semibounded self-adjoint operators in infinite-dimensional Hilbert spaces
$\cG$ and $\cG'$ respectively. Assume that there exists a linear map $J:\qdom(T)\to\qdom (T')$
such that $\|J u\|_{\cG'}=\|u\|_\cG$ and $T'[Ju,Ju]\le T[u,u]$ for all $u\in\qdom(T)$, then 
$E_j(T')\le E_j(T)$ for any $j\in\NN$.
\end{prop}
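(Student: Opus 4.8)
The plan is to argue directly from the min-max formula for the eigenvalues recalled just above, using the map $J$ to transport trial subspaces from $\qdom(T)$ into $\qdom(T')$. Fix $j\in\NN$ and let $S\subset\qdom(T)$ be an arbitrary subspace with $\dim S=j$. First I would note that the isometry property $\|Ju\|_{\cG'}=\|u\|_\cG$ forces $J$ to be injective; hence its restriction to $S$ is injective and $S':=J(S)$ is a subspace of $\qdom(T')$ with $\dim S'=j$, so that $S'$ is an admissible competitor in the variational characterization of $E_j(T')$.

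Next I would compare Rayleigh quotients. For any nonzero $u\in S$ the hypotheses give
\[
\frac{T'[Ju,Ju]}{\|Ju\|_{\cG'}^2}=\frac{T'[Ju,Ju]}{\|u\|_{\cG}^2}\le\frac{T[u,u]}{\|u\|_{\cG}^2},
\]
and since every nonzero element of $S'$ is of the form $Ju$ with $0\ne u\in S$ (again by injectivity of $J$), passing to the supremum over such $u$ yields
\[
\sup_{\substack{v\in S'\\ v\ne 0}}\frac{T'[v,v]}{\|v\|_{\cG'}^2}\le\sup_{\substack{u\in S\\ u\ne 0}}\frac{T[u,u]}{\|u\|_{\cG}^2}.
\]
Because $S'$ is admissible for $E_j(T')$, the left-hand side dominates $E_j(T')$; taking the infimum over all $j$-dimensional $S\subset\qdom(T)$ on the right-hand side then gives $E_j(T')\le E_j(T)$, which is the assertion.

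I do not expect any serious obstacle here: the argument is a formal manipulation of the min-max quotients, and the only points deserving a line of care are that the isometry property of $J$ guarantees injectivity (so dimensions of trial spaces are preserved and no degeneration occurs), and that the inequality is trivially true whenever the right-hand supremum equals $+\infty$, so one may assume it finite without loss. The fact that $J$ takes values in $\qdom(T')$ — making $T'[Ju,Ju]$ meaningful and $S'$ a genuine subspace of the form domain — is part of the hypothesis, which is precisely why the statement is phrased in terms of the form domains rather than the operator domains.
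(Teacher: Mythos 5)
Your argument is correct and is exactly the direct verification the paper has in mind: the paper omits the proof, remarking only that it ``directly follows from the definition'' of the min-max values, and your transport of $j$-dimensional trial subspaces via the isometry $J$ (whose injectivity preserves dimension), followed by the comparison of Rayleigh quotients and passage to the supremum and infimum, is precisely that verification. No issues.
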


We will also use some classical results on the monotone convergence of operators. The following particular case which will be sufficient for our purposes:

\begin{prop}\label{prop-mon}
Let $\cH$ be a Hilbert space and $\cH_\infty$ be a closed subspace of $\cH$ endowed
with the induced scalar product. Let
\begin{itemize}
\item $T_n$ with $n\in\NN$ be lower semibounded
self-adjoint operators with compact resolvents in $\cH$,
\item $T_\infty$ be a lower semibounded
self-adjoint operator with compact resolvent in~$\cH_\infty$
\end{itemize}
such that the following conditions are satisfied:
\begin{itemize}
\item the sequence $(T_n)$  is monotonically increasing, i.e.
\[
\qdom(T_n)\supset \qdom(T_{n+1}), \quad T_n[u,u]\le T_{n+1}[u,u] \quad \text{for all $n\in\NN$ and $u\in \qdom (T_{n+1})$},
\]
\item one has the equalities
\begin{gather*}
\qdom(T_\infty)=\big\{u\in \bigcap\limits_{n\in \NN} \qdom(T_n): \quad \sup T_n[u,u]<\infty\big\},\\
T_\infty[u,u]=\lim_{n\to+\infty}T_n[u,u] \text{ for each } u\in \qdom(T_\infty),
\end{gather*}
\end{itemize}
then for each $j\in\NN$ there holds $E_j(T_\infty)=\lim_{n\to+\infty} E_j(T_n)$.
\end{prop}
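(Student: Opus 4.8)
The plan is to establish the two inequalities $\limsup_{n\to\infty}E_j(T_n)\le E_j(T_\infty)$ and $E_j(T_\infty)\le\liminf_{n\to\infty}E_j(T_n)$ separately, after first observing that $\big(E_j(T_n)\big)_n$ is non-decreasing. The monotonicity is immediate from the min-max formula: since $\qdom(T_{n+1})\subset\qdom(T_n)$ and $T_n[u,u]\le T_{n+1}[u,u]$ on $\qdom(T_{n+1})$, comparing the quotients $T_n[u,u]/\|u\|^2$ and $T_{n+1}[u,u]/\|u\|^2$ over $j$-dimensional subspaces of $\qdom(T_{n+1})$ gives $E_j(T_n)\le E_j(T_{n+1})$, so that $\mu_j:=\lim_{n\to\infty}E_j(T_n)\in(-\infty,+\infty]$ exists for every $j$.

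For the upper bound I would fix $j$ and use as a test subspace the span $S\subset\qdom(T_\infty)$ of the first $j$ eigenfunctions of $T_\infty$, which exist because $T_\infty$ has compact resolvent and satisfy $T_\infty[u,u]\le E_j(T_\infty)\|u\|^2$ for $u\in S$. By definition $\qdom(T_\infty)\subset\bigcap_n\qdom(T_n)$, and the monotonicity gives $T_n[u,u]\le T_\infty[u,u]$ for every $u\in\qdom(T_\infty)$ and every $n$; inserting $S$ into the min-max formula for $T_n$, and recalling that $\cH_\infty$ carries the norm induced from $\cH$, this yields $E_j(T_n)\le E_j(T_\infty)$ for all $n$, hence $\mu_j\le E_j(T_\infty)<\infty$.

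For the lower bound, now knowing $\mu_j<\infty$, I would for each $n$ choose $\cH$-orthonormal eigenfunctions $u_1^{(n)},\dots,u_j^{(n)}\in\dom(T_n)$ of $T_n$ with eigenvalues $E_1(T_n),\dots,E_j(T_n)\le\mu_j$. Fixing a lower bound $c_1$ of $T_1$ and using $T_1[v,v]\le T_n[v,v]$ on $\qdom(T_n)$, these vectors stay bounded in the form space $\big(\qdom(T_1),\langle\cdot,\cdot\rangle_{T_1}\big)$, which by the compact resolvent of $T_1$ embeds compactly into $\cH$; a diagonal extraction then produces a subsequence along which $u_k^{(n)}\to u_k$ strongly in $\cH$ for $k=1,\dots,j$, and orthonormality passes to the limit, so $S_\infty:=\vspan\{u_1,\dots,u_j\}$ has dimension $j$. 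The crucial step is to check $S_\infty\subset\qdom(T_\infty)$ with $T_\infty[u,u]\le\mu_j\|u\|^2$ on $S_\infty$. Fixing $n$ and $u=\sum_k a_ku_k$, then for $n'\ge n$ the combination $\sum_k a_ku_k^{(n')}$ lies in $\dom(T_{n'})\subset\qdom(T_n)$, obeys $T_n\big[\sum_k a_ku_k^{(n')},\sum_k a_ku_k^{(n')}\big]\le\sum_k|a_k|^2E_k(T_{n'})\le\mu_j\|u\|^2$ by $T_{n'}$-orthogonality of eigenfunctions together with monotonicity, and is bounded in $\qdom(T_n)$; hence it converges weakly in $\qdom(T_n)$ to $u$ (its strong $\cH$-limit pins down the weak limit), so $u\in\qdom(T_n)$ and, by weak lower semicontinuity of the form-norm, $T_n[u,u]\le\mu_j\|u\|^2$ for every $n$. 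Therefore $u\in\bigcap_n\qdom(T_n)$ with $\sup_n T_n[u,u]<\infty$, i.e.\ $u\in\qdom(T_\infty)$, and $T_\infty[u,u]=\lim_n T_n[u,u]\le\mu_j\|u\|^2$ by hypothesis. Plugging $S_\infty$ into the min-max formula for $T_\infty$ gives $E_j(T_\infty)\le\mu_j$, and combined with the upper bound this yields $E_j(T_\infty)=\mu_j=\lim_{n\to\infty}E_j(T_n)$.

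I expect the genuine obstacle to be precisely this lower-bound argument: extracting a strongly convergent subsequence of eigenfunctions and, above all, verifying that the limits lie in the possibly smaller space $\qdom(T_\infty)\subset\cH_\infty$. This is where the compact embedding $\qdom(T_1)\hookrightarrow\cH$ furnished by the compact resolvent hypothesis, the uniform form bounds coming from monotonicity, and weak lower semicontinuity all enter in an essential way; by contrast the upper bound and the monotonicity of $n\mapsto E_j(T_n)$ follow routinely from the min-max characterization of eigenvalues.
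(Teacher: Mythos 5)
Your proof is correct, and it takes a genuinely different route from the paper: the paper does not prove Proposition~\ref{prop-mon} directly at all, but refers to Weidmann \cite{weid}, where the monotone increase of the forms is first upgraded to a generalized strong resolvent convergence of $T_n$ to $T_\infty$ (Satz~3.1) and the eigenvalue convergence is then deduced from that (Satz~3.2). Your argument is instead a self-contained variational one: the monotonicity of $n\mapsto E_j(T_n)$ and the bound $E_j(T_n)\le E_j(T_\infty)$ are immediate from the min-max principle, and the lower bound is obtained by compactness --- uniform boundedness of the $T_n$-eigenfunctions in the form norm of $T_1$ (via $T_1[\cdot,\cdot]\le T_n[\cdot,\cdot]$ and the compact embedding $\qdom(T_1)\hookrightarrow\cH$), strong $\cH$-convergence of a subsequence, and weak lower semicontinuity of each form norm to place the limit vectors in $\bigcap_n\qdom(T_n)$ with $\sup_n T_n[u,u]\le\mu_j\|u\|^2$, hence in $\qdom(T_\infty)$ by the hypothesis characterizing that form domain. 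The individual steps all check out: $T_{n'}\big[\sum_k a_ku_k^{(n')},\sum_k a_ku_k^{(n')}\big]=\sum_k|a_k|^2E_k(T_{n'})$ follows from $T_{n'}[u_k^{(n')},u_l^{(n')}]=E_l(T_{n'})\delta_{kl}$ for $\cH$-orthonormal eigenfunctions, and identifying the weak $\qdom(T_n)$-limit of $\sum_ka_ku_k^{(n')}$ with its strong $\cH$-limit is legitimate because the embedding $\qdom(T_n)\hookrightarrow\cH$ is continuous. What your approach buys is elementarity and self-containedness, precisely at the level of generality (compact resolvents) needed here; what the cited approach buys is the stronger conclusion of generalized strong resolvent convergence, which also gives convergence of spectral projections and extends beyond the compact-resolvent setting.
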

The result follows, for example, from the constructions of \cite[Abs.~3]{weid}: Satz 3.1 establishes
a (generalized) strong resolvent convergence of $T_n$ to $T_\infty$ and Satz~3.2 gives the convergence of the eigenvalues.
An interested reader may refer to the papers \cite{bh,simon,weid}
dealing with the monotone convergence in a more general framework, i.e. beyond densely defined operators with compact resolvents.

\section{Sesquilinear forms}\label{sec-quad}

\subsection{Sesquilinear forms for the squares of Euclidean Dirac operators}\label{quad1}

For the rest of the text we denote
\[
\Omega^c:=\RR^n\setminus\overline{\Omega}.
\]
The shape operator $W:T\Sigma\to T\Sigma$ is given by $W X:=\nabla_X \nu$ with $\nabla$ being the gradient in $\RR^n$,
and its eigenvalues $h_1,\dots,h_{n-1}$ are the principal curvatures of $\Sigma$.
For $k=1,\dots,n-1$ we will denote by $H_k$ the \emph{$k$-th mean curvature of $\Sigma$ with respect to $\nu$}
defined by
\[
H_k=\sum_{1\le j_1<\dots<j_k\le n-1} h_{j_1}\cdot \ldots \cdot h_{j_k},
\]
in particular, $H_1=h_1+\ldots +h_{n-1}=\tr W$ is the mean curvature, $R=2H_2\equiv H_1^2-|W|^2$ with
$|W|^2:=\tr (W^2)$ is the scalar curvature. We set formally $H_k=0$ for $k\ge n$.

\begin{lemma}\label{qfa}
The operator $A_m$ is self-adjoint with  compact resolvent and its eigenfunctions belong to $C^\infty(\overline \Omega,\CC^N)$.
For all $u\in \dom(A_m)$
there holds
\begin{equation}
  \label{qform}
\langle A_m u, A_m u\rangle_{L^2(\Omega,\CC^N)}
=\int_{\Omega} \big(|\nabla u|^2 +m^2|u|^2\big)\dd x +\int_\Sigma\Big(m+ \dfrac{H_1}{2}\Big)\, |u|^2\dd s.
\end{equation}
\end{lemma}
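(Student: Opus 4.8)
The plan is to treat the operator-theoretic claims and the quadratic form identity \eqref{qform} separately. For the operator-theoretic part, first record the algebra: from $\Gamma(\nu)^2=|\nu|^2 I_N=I_N$, the Hermiticity of the $\alpha_j$, and the anticommutation $\alpha_j\alpha_{n+1}=-\alpha_{n+1}\alpha_j$ for $j\le n$, one checks that $\cB(s)$ is a self-adjoint involution with $\cB(s)\Gamma(\nu(s))=-\Gamma(\nu(s))\cB(s)$; hence $\tfrac12(I_N+\cB)$ is an orthogonal projector and the boundary condition $u=\cB u$ it defines is self-adjoint and elliptic for the first order elliptic operator $D_m$ (the Shapiro--Lopatinski and self-adjointness conditions both reduce to $\cB$ being a self-adjoint involution anticommuting with the normal symbol $\Gamma(\nu)$). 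By the standard theory of elliptic boundary value problems $A_m$ is then self-adjoint, solutions of $A_m u=\lambda u$ are smooth up to $\partial\Omega$ (elliptic bootstrap), and $A_m$ has compact resolvent since $\dom(A_m)\subset H^1(\Omega,\CC^N)\hookrightarrow\hookrightarrow L^2(\Omega,\CC^N)$ by Rellich's theorem. (For the low-dimensional cases this is in the cited works.)

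\textbf{Reduction and the easy boundary contribution.} Since the eigenfunctions are smooth and their finite linear combinations are dense in $\dom(A_m)$ for the graph norm --- which on $\dom(A_m)$ is equivalent to the $H^1$ norm, and for which both sides of \eqref{qform} are continuous --- it suffices to prove \eqref{qform} for smooth $u\in\dom(A_m)$. Write $D:=-\rmi\sum_{j}\alpha_j\partial_j$, so $D_m=D+m\alpha_{n+1}$, and use the Green formula obtained by integration by parts,
\[
\langle Du,v\rangle_{L^2(\Omega,\CC^N)}=\langle u,Dv\rangle_{L^2(\Omega,\CC^N)}+\rmi\int_\Sigma\langle u,\Gamma(\nu)v\rangle_{\CC^N}\dd s,\qquad u,v\in H^1(\Omega,\CC^N).
\]
Expanding $\|D_m u\|^2=\|Du\|^2+m^2\|u\|^2+2m\,\mathrm{Re}\,\langle Du,\alpha_{n+1}u\rangle$ and using $D(\alpha_{n+1}u)=-\alpha_{n+1}Du$ (again the anticommutation), the Green formula gives $2\,\mathrm{Re}\,\langle Du,\alpha_{n+1}u\rangle=\rmi\int_\Sigma\langle u,\Gamma(\nu)\alpha_{n+1}u\rangle\dd s$; since $\rmi\,\Gamma(\nu)\alpha_{n+1}=-\rmi\,\alpha_{n+1}\Gamma(\nu)=\cB$ and $\cB u=u$ on $\Sigma$, this equals $\int_\Sigma|u|^2\dd s$. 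For the remaining term, $D^2=-\Delta$ (the off-diagonal parts cancel by $\alpha_j\alpha_k+\alpha_k\alpha_j=0$, $j\ne k$, while $\alpha_j^2=I_N$), so the Green formula once more together with the first Green identity for $\Delta$ yields $\|Du\|^2=\|\nabla u\|^2-\int_\Sigma\langle u,\partial_\nu u\rangle\dd s+\rmi\int_\Sigma\langle u,\Gamma(\nu)Du\rangle\dd s$. Splitting $\nabla u$ on $\Sigma$ into its normal part $(\partial_\nu u)\nu$ and its tangential part $\nabla^\Sigma u$ and using $\Gamma(\nu)^2=I_N$ gives $\rmi\,\Gamma(\nu)Du=\partial_\nu u+\Gamma(\nu)\Gamma(\nabla^\Sigma u)$, so the two normal-derivative terms cancel and altogether
\[
\|D_m u\|^2=\|\nabla u\|^2+m^2\|u\|^2+m\int_\Sigma|u|^2\dd s+\int_\Sigma\langle u,\Gamma(\nu)\Gamma(\nabla^\Sigma u)\rangle_{\CC^N}\dd s .
\]

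\textbf{The mean-curvature term, which is the main obstacle.} It remains to prove
\[
\int_\Sigma\langle u,\Gamma(\nu)\Gamma(\nabla^\Sigma u)\rangle_{\CC^N}\dd s=\frac12\int_\Sigma H_1\,|u|^2\dd s ,
\]
and this is the only genuinely delicate step. I would integrate by parts on the \emph{closed} manifold $\Sigma$ --- applying Gauss' theorem to the complex tangent field $\sum_a\langle u,\Gamma(\nu)\Gamma(e_a)u\rangle\,e_a$ for a local orthonormal frame $(e_a)$ of $T\Sigma$ --- using $\partial_{e_a}\Gamma(\nu)=\Gamma(We_a)$, the Clifford relations $\Gamma(X)\Gamma(Y)+\Gamma(Y)\Gamma(X)=2\langle X,Y\rangle I_N$, $\tr W=H_1$, and, crucially, the boundary condition differentiated along $\Sigma$: $\Gamma(We_a)u+\Gamma(\nu)\partial_{e_a}u=\rmi\,\alpha_{n+1}\partial_{e_a}u$, obtained from $\Gamma(\nu)u=\rmi\,\alpha_{n+1}u$ on $\Sigma$. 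Choosing at each point a frame that diagonalises $W$ and is geodesic there keeps the bookkeeping manageable: one shows in turn that the left-hand side is real, that it equals $\rmi\int_\Sigma\langle u,\alpha_{n+1}\Gamma(\nabla^\Sigma u)\rangle\dd s+\int_\Sigma H_1|u|^2\dd s$, and that $\mathrm{Im}\int_\Sigma\langle u,\alpha_{n+1}\Gamma(\nabla^\Sigma u)\rangle\dd s=\tfrac12\int_\Sigma H_1|u|^2\dd s$; taking real parts gives the claim. The difficulty is entirely organisational --- keeping track of the frame-dependent terms, the Clifford signs, and the factor $\tfrac12$. (Alternatively, one recognises $u\mapsto\Gamma(\nu)\Gamma(\nabla^\Sigma u)$ as the extrinsic Dirac operator of $\Sigma$ and invokes the Gauss-type formula of Appendix~\ref{sec-lichn}, rewriting it as $\Dsl u+\tfrac{H_1}{2}u$ up to conventions; formal self-adjointness of $\Dsl$ on the closed $\Sigma$ then gives the identity directly.) Substituting this back into the previous display yields \eqref{qform} for smooth $u$, and density completes the proof.
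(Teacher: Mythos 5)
Your proposal is correct, and for the only delicate step -- the appearance of the $\tfrac{H_1}{2}$ boundary term -- it takes a genuinely more self-contained route than the paper. The paper imports the integral identity $\int_\Omega|D_0u|^2=\int_\Omega|\nabla u|^2+\int_\Sigma\big(\tfrac{H_1}{2}|u|^2-\langle\widetilde D^\Sigma u,u\rangle\big)\dd s$ from Hijazi--Montiel--Zhang and then kills the extrinsic Dirac term by a purely algebraic pointwise argument: $\widetilde D^\Sigma$ anticommutes with $\Gamma(\nu)$ and commutes with $\alpha_{n+1}$, hence anticommutes with $\cB$, so $\langle\widetilde D^\Sigma u,u\rangle=-\langle\widetilde D^\Sigma u,u\rangle=0$ on $\Sigma$ for $u=\cB u$. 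You instead derive the boundary term from scratch via $D^2=-\Delta$, two Green identities, and the tangential/normal splitting, reducing everything to $T:=\int_\Sigma\langle u,\Gamma(\nu)\Gamma(\nabla^\Sigma u)\rangle\dd s=\tfrac12\int_\Sigma H_1|u|^2\dd s$. Your sketch of that last identity is right and in fact simpler than you anticipate: writing $Z:=\int_\Sigma\langle u,\alpha_{n+1}\Gamma(\nabla^\Sigma u)\rangle\dd s$, the boundary condition in the form $\Gamma(\nu)u=\rmi\alpha_{n+1}u$ applied to the bra gives $T=-\rmi Z$, while anticommuting $\Gamma(\nu)$ past $\Gamma(e_a)$ and substituting the differentiated boundary condition $\Gamma(We_a)u+\Gamma(\nu)\partial_{e_a}u=\rmi\alpha_{n+1}\partial_{e_a}u$ gives $T=\rmi Z+\int_\Sigma H_1|u|^2\dd s$ (using $\sum_a\Gamma(e_a)\Gamma(We_a)=H_1 I$); adding the two yields $2T=\int_\Sigma H_1|u|^2\dd s$ with no need for the divergence theorem on $\Sigma$ or for separately establishing that $T$ is real. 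Two small caveats: your parenthetical alternative is too optimistic, since formal self-adjointness of the extrinsic Dirac operator on the closed $\Sigma$ only shows that $\int_\Sigma\langle u,\widetilde D^\Sigma u\rangle\dd s$ is \emph{real}, not that it vanishes -- the boundary condition is indispensable there as well; and the first part (self-adjointness, compact resolvent, smoothness) is asserted via general elliptic boundary-value theory where the paper points to a specific reference, which is fine but should be backed by a citation in a final write-up. The cross term $2m\Re\langle Du,\alpha_{n+1}u\rangle=m\int_\Sigma|u|^2\dd s$ and the density argument coincide with the paper's.
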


\begin{proof}
Remark first that the map $x\mapsto \Gamma (x)$ in~\eqref{ekg1} gives a representation of the Clifford algebra~$\Cl(0,n)$.
Furthermore, the self-adjointness is not influenced if one adds a bounded operator, hence, 
it is sufficient to consider the case $m=0$. The operator $A_0$ 
is covered e.g. by the analysis of \cite[Section 2]{hmz} by noting that $\cB$ is a chirality operator defining a local boundary condition. Hence, the self-adjointness, the compactness of the resolvent and the smoothness of eigenfunctions
follow from \cite[Proposition~1 and Corollary~2]{hmz}. An interested reader may refer e.g. to \cite{baer}
for a more detailed discussion of boundary value problems for Dirac-type operators.

In order to obtain the representation  \eqref{qform} we use additional  constructions.
The map $\Gamma$ induces the extrinsic Dirac operator $\widetilde D^\Sigma$ in $L^2(\Sigma,\CC^N)$ given by
\[
\widetilde D^\Sigma \psi:=\dfrac{H_1}{2}\,\psi-\Gamma(\nu) \sum_{j=1}^{n-1} \Gamma(e_j) \nabla_{e_j}\psi
\]
with $(e_1,\dots,e_{n-1})$ being an orthonormal frame tangent to $\Sigma$. For  $u\in H^2(\Omega,\CC^N)$ one has the integral identity, see \cite[Section 3, Eq. (13)]{hmw},
\begin{equation*}
\int_\Omega |D_0 u|^2\dd x= \int_\Omega |\nabla u|^2\dd x
+\int_\Sigma \Big( \dfrac{H_1}{2}\, |u|^2 -\langle \widetilde D^\Sigma u, u\rangle\Big)\dd s,
\end{equation*}
where $D_0$ is given by  \eqref{eqedm} with $m=0$. Therefore, for $u\in H^2(\Omega,\CC^N)\cap\dom(A_m)$
one has
\begin{multline}
  \label{form1}
\langle A_m u,A_m u\rangle_{L^2(\Omega,\CC^N)}\equiv
\Big\langle \big(D_0 +m \alpha_{n+1}\big)u,\big(D_0 +m \alpha_{n+1}\big)u\Big\rangle_{L^2(\Omega,\CC^N)}\\
= \langle D_0 u,D_0 u\rangle_{L^2(\Omega,\CC^N)}
+2m\Re \Big(\big\langle D_0 u,\alpha_{n+1}u\big\rangle_{L^2(\Omega,\CC^N)}\Big)+m^2 \big\langle \alpha_{n+1}u,\alpha_{n+1}u\big\rangle_{L^2(\Omega,\CC^N)}\\
=\int_\Omega \big( |\nabla u|^2+m^2|u|^2\big)\dd x
+\int_\Sigma \Big( \dfrac{H_1}{2}\, |u|^2 -\langle \widetilde D_\Sigma u, u\rangle\Big)\dd s\\
+2m\Re\Big( \langle D_0 u,\alpha_{n+1}u\rangle_{L^2(\Omega,\CC^N)}\Big).
\end{multline}
The operator $\widetilde D^\Sigma$ anticommutes with $\Gamma(\nu)$, see \cite[Proposition 1]{hmw}.
As the matrix $\alpha_{n+1}$
anticommutes with all $\Gamma(x)$, it commutes with  $\widetilde D^\Sigma$ by construction.
Therefore, using the boundary condition for $u$ we have the pointwise equalities
\begin{align*}
\langle \widetilde D^\Sigma u, u\rangle&=\big\langle \widetilde D^\Sigma \big[-\rmi \alpha_{n+1}\Gamma(\nu) \big] u, u\big\rangle\\
&=\big\langle \rmi \alpha_{n+1}\Gamma(\nu)  \widetilde D^\Sigma u, u
\big\rangle
=\big\langle   \widetilde D^\Sigma u, -\rmi\Gamma(\nu)\alpha_{n+1} u\big\rangle\\
&=\big\langle  \widetilde D^\Sigma u, \rmi\alpha_{n+1}\Gamma(\nu) u\big\rangle
=-\langle \widetilde D^\Sigma u, u\rangle,
\end{align*}
implying $\langle \widetilde D^\Sigma u, u\rangle=0$ on $\Sigma$.

It remains to transform the third summand on the right-hand side of \eqref{form1}.
Recall that due to the integration by parts for any $v,w\in H^1(\Omega,\CC^N)$ we have
\[
\int_\Omega \sum_{j=1}^n \langle \alpha_j \partial_j v,w\rangle_{\CC^N}\dd x
=-\int_\Omega \sum_{j=1}^n \langle  v, \alpha_j \partial_j w\rangle_{\CC^N}\dd x
+\int_\Sigma \sum_{j=1}^n \langle  \alpha_j \nu_j v, w\rangle_{\CC^N}\dd s,
\]
which then gives
\begin{multline}
   \label{dpart}
\big\langle D_0 u,\alpha_{n+1}u\big\rangle_{L^2(\Omega,\CC^N)}=
\int_\Omega \big\langle D_0 u,\alpha_{n+1}u\big\rangle_{\CC^N}\dd x\\
=\int_\Omega \big\langle  u, D_0\alpha_{n+1}u\big\rangle_{\CC^N}\dd x
+\int_\Sigma \sum_{j=1}^n \langle  -\rmi\alpha_j \nu_j u, \alpha_{n+1} u\rangle_{\CC^N}\dd s\\
=-\int_\Omega \big\langle  \alpha_{n+1}u, D_0 u\big\rangle_{\CC^N}\dd x
+\int_\Sigma \big\langle -\rmi\Gamma(\nu)u,\alpha_{n+1}u\big\rangle_{\CC^N}\dd s.
\end{multline}
Therefore, 
\begin{align*}
2m\Re\Big(\big\langle D_0 u,\alpha_{n+1}u\big\rangle_{L^2(\Omega,\CC^N)}\Big)&=
m\Big(\big\langle D_0 u,\alpha_{n+1}u\big\rangle_{L^2(\Omega,\CC^N)}
+ \big\langle  \alpha_{n+1}u, D_0 u\big\rangle_{L^2(\Omega,\CC^N)}
\Big)\\
&=m\int_\Sigma \big\langle -\rmi\Gamma(\nu)u,\alpha_{n+1}u\big\rangle_{\CC^N}\dd s\\
&=m \int_\Sigma \big\langle -\rmi\alpha_{n+1} \Gamma(\nu)u,u\big\rangle_{\CC^N}\dd s
=m\int_\Sigma |u|^2_{\CC^N}\dd s.
\end{align*}
This shows the sought identity \eqref{qform} for the $H^2$ functions in the domain.
It is then extended to the whole of $\dom(A_m)$ by a standard density argument.
\end{proof}

\begin{lemma}
The operator $B_{m,M}$ is self-adjoint, and for all $u\in \dom(B_{m,M})$ there holds
\begin{multline}
  \label{qform2}
\langle B_{m,M} u, B_{m,M} u\rangle_{L^2(\RR^n,\CC^N)}
=\int_{\Omega} \big(|\nabla u|^2 +m^2|u|^2\big)\dd x
+ \int_{\Omega^c} \big(|\nabla u|^2 +M^2|u|^2\big)\dd x\\
+(M-m)\Big(\int_\Sigma | \cP_- u|^2\dd s -\int_\Sigma | \cP_+ u|^2\dd s\Big),
\end{multline}
where $\cP_\pm(s):=\dfrac{I_N \pm \cB(s)}{2}$ for $s\in\Sigma$.
\end{lemma}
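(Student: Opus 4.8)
The plan is to handle the self-adjointness and the form identity separately, the latter by expanding the square and integrating by parts on each side of $\Sigma$.

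\emph{Self-adjointness.} The free Dirac operator $D_0$ from~\eqref{eqedm} is self-adjoint on $H^1(\RR^n,\CC^N)$ (see~\cite{thaller}, or directly via the Fourier transform, using $\Gamma(\xi)^2=|\xi|^2I_N$ for $\xi\in\RR^n$). The added term $\big[m1_\Omega+M(1-1_\Omega)\big]\alpha_{n+1}$ is a bounded Hermitian multiplication operator, so by the Kato--Rellich theorem $B_{m,M}$ is self-adjoint on the same domain $H^1(\RR^n,\CC^N)$.

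\emph{Preliminary facts.} First, $\cB(s)=-\rmi\alpha_{n+1}\Gamma(\nu(s))$ is Hermitian and satisfies $\cB(s)^2=I_N$, since $\alpha_{n+1}$ anticommutes with every $\Gamma(x)$ and $\Gamma(\nu)^2=|\nu|^2I_N=I_N$; hence $\cP_\pm(s)$ are complementary orthogonal projections and, pointwise on $\Sigma$, $\langle u,\cB u\rangle_{\CC^N}=|\cP_+u|^2-|\cP_-u|^2$. Second, a direct computation (e.g. via the Fourier transform, or from $D_0^2=-\Delta\,I_N$) gives $\|D_0u\|^2_{L^2(\RR^n,\CC^N)}=\int_{\RR^n}|\nabla u|^2\dd x$ for every $u\in H^1(\RR^n,\CC^N)$, and since $\alpha_{n+1}$ is unitary, $\big\|\big[m1_\Omega+M(1-1_\Omega)\big]\alpha_{n+1}u\big\|^2=\int_\Omega m^2|u|^2\dd x+\int_{\Omega^c}M^2|u|^2\dd x$.

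\emph{The cross term.} Expanding $\langle B_{m,M}u,B_{m,M}u\rangle$ and using the above, \eqref{qform2} reduces to the identity
\[
2\Re\big\langle D_0u,\big[m1_\Omega+M(1-1_\Omega)\big]\alpha_{n+1}u\big\rangle_{L^2(\RR^n,\CC^N)}=(m-M)\int_\Sigma\langle u,\cB u\rangle_{\CC^N}\dd s.
\]
I would verify this first for $u\in C^\infty_c(\RR^n,\CC^N)$ and then pass to arbitrary $u\in H^1(\RR^n,\CC^N)$ by density, noting that every term in~\eqref{qform2} depends continuously on $u$ in the $H^1(\RR^n,\CC^N)$-norm — the boundary integrals by continuity of the trace map $H^1\to L^2(\Sigma)$, and the left-hand side because $B_{m,M}\colon H^1\to L^2$ is bounded. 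For such $u$, split the scalar product over $\Omega$ and $\Omega^c$. On $\Omega$, the integration by parts identity~\eqref{dpart} gives $2\Re\langle D_0u,\alpha_{n+1}u\rangle_{L^2(\Omega,\CC^N)}=\int_\Sigma\langle -\rmi\alpha_{n+1}\Gamma(\nu)u,u\rangle_{\CC^N}\dd s=\int_\Sigma\langle u,\cB u\rangle_{\CC^N}\dd s$. The same computation on $\Omega^c$, whose outward unit normal along $\Sigma$ is $-\nu$ (and with no contribution from infinity since $u$ is compactly supported), yields the opposite sign, $2\Re\langle D_0u,\alpha_{n+1}u\rangle_{L^2(\Omega^c,\CC^N)}=-\int_\Sigma\langle u,\cB u\rangle_{\CC^N}\dd s$. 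Multiplying by $m$ and $M$ respectively and adding gives the displayed identity; combining it with the pointwise relation $\langle u,\cB u\rangle=|\cP_+u|^2-|\cP_-u|^2$ and the norm computations above produces exactly~\eqref{qform2}.

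\emph{Main obstacle.} The only genuinely delicate point is justifying the integration by parts over the unbounded region $\Omega^c$, i.e. the absence of a boundary contribution at infinity; this is precisely why one works with compactly supported $u$ first and invokes density only afterwards. The rest is bookkeeping of the sign of $\nu$ and of the projections $\cP_\pm$.
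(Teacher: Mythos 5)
Your proof is correct; both the self-adjointness argument (bounded Hermitian perturbation of $D_0$) and the form identity go through, and your sign bookkeeping for the normal and the projections $\cP_\pm$ checks out against \eqref{qform2}. The route differs from the paper's in one organizational choice that is worth noting. The paper writes $B_{m,M}=D_M+(m-M)1_\Omega\alpha_{n+1}$, so that the entire perturbation of the constant-mass operator is supported in the \emph{bounded} set $\Omega$: the term $\|D_Mu\|^2_{L^2(\RR^n,\CC^N)}=\int_{\RR^n}(|\nabla u|^2+M^2|u|^2)\dd x$ is computed globally with no boundary contribution at all (Fourier transform), and the only integration by parts needed is \eqref{dpart} over $\Omega$, applied once to the cross term $(m-M)\cdot 2\Re\langle D_0u,1_\Omega\alpha_{n+1}u\rangle$. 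You instead split off $D_0$ and keep the full piecewise-constant mass, which forces you to integrate by parts separately on $\Omega$ and on the unbounded exterior $\Omega^c$ — precisely the ``main obstacle'' you flag, which you then dispose of via compact support plus density. Both arguments are valid; the paper's decomposition simply sidesteps the issue at infinity (at the cost of a small algebraic identity $M^2+(m-M)^2+2M(m-M)=m^2$ inside $\Omega$), while yours is more symmetric in $\Omega$ and $\Omega^c$ and makes explicit why the boundary term carries the factor $(m-M)$ rather than $m$ or $M$ alone. Your density step is correctly justified: every term in \eqref{qform2} is continuous in the $H^1(\RR^n,\CC^N)$-norm, the boundary integrals via the trace theorem.
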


\begin{proof}
The self-adjointness is obvious with the help of the Fourier transform, so let us concentrate on the sesquilinear form.
Representing $B_{m,M}=D_M +(m-M) 1_\Omega \alpha_{n+1}$ we have
\begin{multline*}
\langle B_{m,M} u, B_{m,M} u\rangle_{L^2(\RR^n,\CC^N)}\\
\begin{aligned}
&=\langle D_M u +(m-M) 1_\Omega \alpha_{n+1}u, D_M u +(m-M) 1_\Omega \alpha_{n+1} u\rangle_{L^2(\RR^n,\CC^N)}\\
&=\langle D_M u, D_M u\rangle_{L^2(\RR^n,\CC^N)}+(m-M)^2 \langle 1_\Omega \alpha_{n+1}u, 1_\Omega \alpha_{n+1} u\rangle_{L^2(\RR^n,\CC^N)}\\
&\quad+(m-M)\Big(
\langle D_M u, 1_\Omega \alpha_{n+1} u\rangle_{L^2(\RR^n,\CC^N)}+\langle 1_\Omega \alpha_{n+1}u, D_M u\rangle_{L^2(\RR^n,\CC^N)}
\Big)\\
&=\int_{\RR^n} \big(|\nabla u|^2 +M^2|u|^2\big)\dd x + (m-M)^2\int_\Omega |u|^2\dd x\\
&\quad +
(m-M)\Big(
\langle D_M u, 1_\Omega \alpha_{n+1} u\rangle_{L^2(\RR^n,\CC^N)}+\langle 1_\Omega \alpha_{n+1}u, D_M u\rangle_{L^2(\RR^n,\CC^N)}
\Big).
\end{aligned}
\end{multline*}
Now using $D_M=D_0+M\alpha_{n+1}$ we transform the last summand as follows:
\begin{multline*}
(m-M)\Big[
\langle D_M u, 1_\Omega \alpha_{n+1} u\rangle_{L^2(\RR^n,\CC^N)}+\langle 1_\Omega \alpha_{n+1}u, D_M u\rangle_{L^2(\RR^n,\CC^N)}\Big]\\
\begin{aligned}
&= (m-M)\Big[
\langle D_0 u+M\alpha_{n+1} u, 1_\Omega \alpha_{n+1} u\rangle_{L^2(\RR^n,\CC^N)}\\
&\qquad +\langle 1_\Omega \alpha_{n+1}u,  D_0 u+M\alpha_{n+1} u\rangle_{L^2(\RR^n,\CC^N)}\Big]\\
&=2M(m-M)\int_\Omega |u|^2\dd x\\
&\qquad  + (m-M)\Big(
\langle D_0 u, 1_\Omega \alpha_{n+1} u\rangle_{L^2(\RR^n,\CC^N)}+\langle 1_\Omega \alpha_{n+1}u,  D_0 u\rangle_{L^2(\RR^n,\CC^N)}\Big)\\
&=2M(m-M)\int_\Omega |u|^2\dd x+(m-M)\int_\Sigma \langle \cB u, u\rangle_{\CC^N}\dd s,
\end{aligned}
\end{multline*}
where  we used the equality \eqref{dpart} in the last step. This gives
\begin{multline*}
\langle B_{m,M} u, B_{m,M} u\rangle_{L^2(\RR^n,\CC^N)}
=\int_{\Omega} \big(|\nabla u|^2 +m^2|u|^2\big)\dd x\\
+ \int_{\Omega^c} \big(|\nabla u|^2 +M^2|u|^2\big)\dd x
-(M-m)\int_\Sigma \langle \cB u, u\rangle_{\CC^N}\dd s,
\end{multline*}
and it remains to remark that
\begin{multline*}
\langle \cB u, u\rangle_{\CC^N}=\dfrac{1}{2}\,\big\langle (1+\cB) u, u\big\rangle_{\CC^N}- \dfrac{1}{2}\,\big\langle (1-\cB) u, u\big\rangle_{\CC^N}\\
=\langle \cP_+ u, u\rangle_{\CC^N}-\langle \cP_- u, u\rangle_{\CC^N}\equiv |\cP_+u|_{\CC^N}-|\cP_-u|_{\CC^N},
\end{multline*}
where in the last step we used the fact that $\cP_\pm$ are orthogonal projectors.
\end{proof}

\subsection{Intrinsic and extrinsic Dirac operators on Euclidean hypersurfaces}\label{quad2}

The definition of the intrinsic Dirac operator $\Dsl$ on $\Sigma$ with a detailed presentation of preliminary constructions can be found in the monographs \cite{moroianu, fried,ginoux}.
Recall that if $\mathbb{S}\Sigma$ is the intrinsic spinor bundle over $\Sigma$ with the associated spin connection
$\dnabla$ and carrying the natural Hermitian and Clifford module structures, then $\Dsl$ acts on smooth sections $\psi$ of $\mathbb{S}\Sigma$
by $\Dsl \psi=\sum_{j=1}^{n-1} e_j\cdot \dnabla_{e_j} \psi$, where $(e_1,\dots,e_{n-1})$ is an orthonormal frame tangent to $\Sigma$
and $\cdot$ is the Clifford multiplication. For our situation, the study of $\Dsl$ is easier to approach through the so-called extrinsic Dirac operators, which will be more suitable for the subsequent asymptotic analysis, and we explain this link in the present section.

For $n\ge 2$ and $K:=2^{[\frac{n}{2}]}$ let $\beta_1,\dots,\beta_n$ be anticommuting Hermitian $K\times K$ matrices with $\beta_j^2=I_K$.
The intrinsic Dirac operator $D^{\RR^n}$ in $\RR^n$ acts then by
\[
D^{\RR^n}=-\rmi\sum_{j=1}^n \beta_j \dfrac{\partial}{\partial x_j},
\]
and it is a self-adjoint operator in $L^2(\RR^n,\CC^K)$ with  domain $H^1(\RR^n,\CC^K)$. Remark that the expression $D_0$ given in the introduction
does not correspond to the intrinsic Dirac operator on $\RR^n$ as $N\ne K$ in general.
The \emph{extrinsic} Dirac operator $D^\Sigma$ on $\Sigma$ is a self-adjoint operator in $L^2(\Sigma,\CC^K)$ with  domain $H^1(\Sigma,\CC^K)$ and given by
\[
D^\Sigma=\dfrac{H_1}{2}-\beta(\nu) \sum_{j=1}^{n-1} \beta(e_j) \nabla_{e_j},
\]
where $(e_1,\dots,e_{n-1})$ is an orthonormal frame tangent to $\Sigma$, and for $x=(x_1,\dots,x_n)$ we denote
$\beta(x)=\sum_{j=1}^n \beta_j x_j$. It is a fundamental result that
$D^\Sigma$ is unitarily equivalent to $\Dsl$ for odd $n$ and  to $\Dsl\oplus (-\Dsl)$ for even $n$;
for even $n$ the operator $\Dsl$ can be identified with the restriction of $\beta(\nu) D^\Sigma$ on $\ker\big(1-\beta(\nu)\big)$,
see e.g. \cite[Section 2.4]{moroianu}. In other words, the study of the eigenvalues of $(D^\Sigma)^2$ is equivalent to that of $\Dsl^2$, modulo the multiplicities for even $n$.

In turn, a classical tool for the analysis of the eigenvalues of $(D^\Sigma)^2$ is provided by the Schr\"odinger-Lichnerowicz formula
$(D^\Sigma)^2=(\nabla^\Sigma)^*\nabla^\Sigma+\frac{1}{2} \,H_2\, I$ (whose proof we recall in Appendix~\ref{sec-lichn}),
where $\nabla^\Sigma$ is the induced spin connection
\[
\nabla^\Sigma_X=\nabla_X+\dfrac{1}{2}\,\beta(\nu)\beta(WX):\, C^\infty(\Sigma,\CC^K)\to C^\infty(\Sigma,\CC^K), \quad
X\in T\Sigma.
\]
In other words, for $u\in H^1(\Sigma,\CC^K)$ one has
\begin{equation}
  \label{lichn}
\langle D^\Sigma u,D^\Sigma u\rangle_{L^2(\Sigma,\CC^K)}
=\int_\Sigma \Big(|\nabla^\Sigma u|^2 + \dfrac{H_2 |u|^2}{2} \Big)\dd x,
\end{equation}
while in the local coordinates on $\Sigma$ one has
\begin{equation}
  \label{lichn2}
|\nabla^\Sigma u|^2=\sum_{j,k=1}^{n-1} g^{jk} \Big\langle
\partial_j u +\dfrac{1}{2}\,\beta(\nu)\beta(\partial_j\nu) u,\partial_k u +\dfrac{1}{2}\,\beta(\nu)\beta(\partial_k\nu) u
\Big\rangle_{\CC^K},
\end{equation}
where $(g^{jk}):=(g_{jk})^{-1}$ and $(g_{jk})$ is the Riemannian metric on $\Sigma$ induced by the embedding into $\RR^n$.

For the subsequent analysis we introduce the Hilbert space
\begin{equation}
\cH:=\Big\{ f\in L^2(\Sigma,\CC^N):\, f=\cB f\Big\},
\quad
\|f\|^2_\cH:=\int_\Sigma |f|^2\dd s,
\end{equation}
with $\cB$ given in \eqref{ekg3}, and the self-adjoint operator $L$ in $\cH$ given by its sesquilinear form
as follows:
\[
L[f,f]=\int_\Sigma \Big[|\nabla f|^2 +\Big(H_2-\dfrac{H_1^2}{4}\Big) |f|^2\Big]\dd s,
\quad
\qdom(L)=H^1(\Sigma,\CC^N)\cap \cH,
\]
with $\qdom(L)$ being the form domain (see Section~\ref{sec-prelim}).
The operator $L$ will arise naturally in the asymptotic spectral analysis of the Dirac operators $A_m$ and $B_{m,M}$,
and its importance is explained in the following assertion:

\begin{lemma}\label{lemld}
The operator $L$ is unitarily equivalent to $\Dsl^2$.
\end{lemma}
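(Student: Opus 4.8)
The plan is to conjugate $L$ by an explicit fibrewise unitary onto the Hilbert space carrying the extrinsic Dirac operator $D^\Sigma$, and then invoke the Schr\"odinger--Lichnerowicz identity \eqref{lichn} together with the classical identification of $D^\Sigma$ with $\Dsl$ recalled above. As a preliminary step one normalises the matrices: by uniqueness up to unitary equivalence of irreducible complex Clifford module structures, one may assume for odd $n$ that $\alpha_j=\bigl(\begin{smallmatrix}0&\beta_j\\\beta_j&0\end{smallmatrix}\bigr)$ for $j\le n$ and $\alpha_{n+1}=\bigl(\begin{smallmatrix}0&-\rmi I_K\\\rmi I_K&0\end{smallmatrix}\bigr)$, and for even $n$ that $\alpha_j=\beta_j$ for $j\le n$ while $\alpha_{n+1}=\rmi^{n/2}\beta_1\cdots\beta_n$ is the (suitably normalised) complex volume element. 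In either case one computes $\cB=-\rmi\alpha_{n+1}\Gamma(\nu)$ explicitly: for odd $n$ one finds $\cB=\operatorname{diag}\bigl(-\beta(\nu),\beta(\nu)\bigr)$, so that the fibre of $\cH$ is $\ker\bigl(I_K+\beta(\nu)\bigr)\oplus\ker\bigl(I_K-\beta(\nu)\bigr)=\CC^K$, whereas for even $n$ the $+1$-eigenbundle of the Hermitian involution $\cB$ has rank $K/2$; in both cases $\operatorname{rk}\cH$ equals the rank of the intrinsic spinor bundle of $\Sigma$.

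Next, view $\cH$ as the $+1$-eigensubbundle of $\cB$ inside the trivial bundle $\Sigma\times\CC^N$, let $P:=\tfrac12(I_N+\cB)$ be the associated fibrewise orthogonal projection, and equip $\cH$ with the connection $\nabla^{\cH}:=P\circ\nabla$ induced from componentwise differentiation $\nabla$ along $\Sigma$. Since $P$ is an orthogonal projection, for $f\in\qdom(L)$ the splitting $\nabla_X f=\nabla^{\cH}_X f+(I_N-P)\nabla_X f$ is orthogonal, and because $\cB f=f$ one has $(I_N-P)\nabla_X f=\tfrac12(\nabla_X\cB)f=-\tfrac{\rmi}{2}\,\alpha_{n+1}\Gamma(WX)f$, using $\nabla_X\nu=WX$. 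As $\alpha_{n+1}$ and $\Gamma(WX)/|WX|$ are unitary, $|(I_N-P)\nabla_X f|^2=\tfrac14|WX|^2|f|^2$, hence summing over an orthonormal frame of $T\Sigma$ gives $|\nabla f|^2=|\nabla^{\cH}f|^2+\tfrac14|W|^2|f|^2$. Combined with the Gauss relation $2H_2=H_1^2-|W|^2$, this yields
\[
L[f,f]=\int_\Sigma\Bigl(|\nabla^{\cH}f|^2+\tfrac12\,H_2\,|f|^2\Bigr)\dd s .
\]

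It remains to identify $(\cH,\nabla^{\cH})$ with the bundle carrying $D^\Sigma$. Using the normal forms, one writes down the fibrewise unitary $U_0$ --- for odd $n$, $U_0(f_1,f_2):=f_1+f_2$, which is onto $\CC^K$ since the two eigenspaces of $\beta(\nu)$ are orthogonal and complementary, and for even $n$, $U_0 f:=\tfrac1{\sqrt2}\bigl(I_K+\beta(\nu)\bigr)f$, which is an isometry of the fibre of $\cH$ onto $\ker\bigl(I_K-\beta(\nu)\bigr)$ --- and checks by a direct computation (differentiating the projections $\tfrac12(I_K\pm\beta(\nu))$ and using $\nabla_X\nu=WX\perp\nu$) that $U_0\nabla^{\cH}_X U_0^{-1}=\nabla_X+\tfrac12\beta(\nu)\beta(WX)=\nabla^\Sigma_X$. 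Consequently $U_0 L U_0^{-1}$ is the self-adjoint operator with sesquilinear form $\int_\Sigma\bigl(|\nabla^\Sigma g|^2+\tfrac12 H_2|g|^2\bigr)\dd s$, which by \eqref{lichn} is $(D^\Sigma)^2$ when $n$ is odd and the restriction of $(D^\Sigma)^2$ to the $+1$-eigenbundle of $\beta(\nu)$ when $n$ is even (this being a reducing subspace, since $(D^\Sigma)^2$ and $\nabla^\Sigma$ both commute with $\beta(\nu)$). By the classical identification recalled above --- $D^\Sigma$ is unitarily equivalent to $\Dsl$ for odd $n$, while for even $n$ the operator $\Dsl^2$ is unitarily equivalent to the restriction of $(D^\Sigma)^2$ to the $+1$-eigenbundle of $\beta(\nu)$ --- we conclude that $U_0 L U_0^{-1}$, hence $L$, is unitarily equivalent to $\Dsl^2$.

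The main obstacle is the bookkeeping in this last step: reconciling the two matrix sizes $N$ and $K$, treating the even- and odd-dimensional cases separately, and verifying that the connection $\nabla^{\cH}$ naturally induced on the constrained bundle $\cH$ is \emph{exactly} the extrinsic spin connection $\nabla^\Sigma$ rather than merely gauge-equivalent to it; for even $n$ one must in addition keep track of the chirality splitting of the extrinsic spinor bundle induced by Clifford multiplication by $\nu$. By contrast, the analytic core --- the reduction of $L[f,f]$ to $\int_\Sigma\bigl(|\nabla^{\cH}f|^2+\tfrac12 H_2|f|^2\bigr)\dd s$ via the second fundamental form of $\cH\hookrightarrow\Sigma\times\CC^N$ --- is a short and self-contained computation.
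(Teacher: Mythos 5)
Your argument is correct, and it reaches the conclusion by a genuinely different route from the paper. The paper chooses iterative normal forms for the Dirac matrices and then computes $L[Uf,Uf]$ entirely in explicit block form, separately for odd and even $n$; the $\tfrac14|W|^2$ correction appears there as a matrix $V=\sum g^{jk}\Gamma_n(\partial_k\nu)\Gamma_n(\partial_j\nu)$ evaluated via the Clifford relations, and for even $n$ the identification is completed by a doubling argument, $L\oplus L\cong (D^\Sigma)^2\cong\Dsl^2\oplus\Dsl^2$, followed by a multiplicity comparison. You instead treat $\cH$ as the $+1$-eigen\emph{bundle} of $\cB$ with the projected connection $\nabla^{\cH}=P\nabla$, and obtain the key identity $L[f,f]=\int_\Sigma(|\nabla^{\cH}f|^2+\tfrac12 H_2|f|^2)\dd s$ from the orthogonal splitting $\nabla_Xf=\nabla^{\cH}_Xf+\tfrac12(\nabla_X\cB)f$ together with $|(\nabla_X\cB)f|=|WX|\,|f|$ and $2H_2=H_1^2-|W|^2$; this is coordinate-free, works for both parities at once, and makes transparent \emph{why} $L$ is a Bochner Laplacian plus $\tfrac12 H_2$. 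The normal forms and the fibrewise unitary $U_0$ are then only needed to recognise $(\cH,\nabla^{\cH})$ as (a chirality sub-bundle of) the extrinsic spinor bundle with $\nabla^\Sigma$, after which \eqref{lichn} and the classical identification of $D^\Sigma$ with $\Dsl$ finish the proof exactly as in the paper. What the paper's version buys is complete explicitness: every identity is a displayed matrix computation and nothing is delegated to a claim about connections. What yours buys is brevity and a cleaner treatment of even $n$, where you avoid the doubling argument by restricting directly to $\ker(1-\beta(\nu))$, which is legitimate since $\nabla^\Sigma$ (hence $(D^\Sigma)^2$) commutes with $\beta(\nu)$.

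Two points deserve a word more than you give them. First, for even $n$ the algebra generated by $\alpha_1,\dots,\alpha_{n+1}$ has \emph{two} inequivalent irreducible representations in dimension $N$, distinguished by the sign of $\alpha_{n+1}$ relative to $\rmi^{n/2}\alpha_1\cdots\alpha_n$; your normalisation must therefore carry a $\pm$ (as in the paper's \eqref{eqpmg}), and you should note that the argument is symmetric under this sign (it merely exchanges the two eigenbundles of $\beta(\nu)$ and replaces $\Dsl$ by $-\Dsl$, leaving $\Dsl^2$ unchanged). Second, the intertwining $U_0\nabla^{\cH}_XU_0^{-1}=\nabla^\Sigma_X$ is the crux of your last step and is asserted rather than shown; it does hold (for odd $n$ it follows from $\nabla_XP_\pm=\pm\tfrac12\beta(WX)$ and $P_+-P_-=\beta(\nu)$, and for even $n$ from $\beta(WX)$ preserving $\ker(1-\cB)$ together with $(I+\beta(\nu))(1-\rmi\omega)=0$ on that kernel), but the few lines of verification should be written out, since without them the reduction of $L$ to the Schr\"odinger--Lichnerowicz form is incomplete.
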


\begin{proof}
The proof is by direct computation, by constructing an explicit isomorphism between
$L^2(\Sigma,\CC^{N/2})$ and $\cH$ and then by establishing a link with the extrinsic Dirac operator $D^\Sigma$
using the Schr\"odinger-Lichnerowicz formula.

Following the standard rules, see e.g. \cite[Chapter 15]{dg} or \cite[Appendix~E]{wit}, for $n\in\NN$
we define $2^{[\frac{n}{2}]}\times 2^{[\frac{n}{2}]}$ Dirac matrices $\gamma_j(n)$ with $j\in\{1,\dots,n\}$
using the following  iterative procedure:
\begin{itemize}
\item For $n=1$, set $\gamma_1(1):=(1)$.
\item For $n=2$, set $\gamma_1(2):=\begin{pmatrix}
0 & 1 \\ 1 & 0
\end{pmatrix}$ and $\gamma_2(2):=\begin{pmatrix}
0 & -\rmi \\ \rmi & 0
\end{pmatrix}$.
\item For $n=2m+1$ with $m\in\NN$:
\begin{align}
\gamma_j(2m+1)&:=\gamma_j(2m), \quad j=1,\dots,2m, \nonumber\\
\gamma_{2m+1}(2m+1)&:=\pm\rmi^m \gamma_1(2m)\cdot \ldots \cdot \gamma_{2m}(2m)=\pm\begin{pmatrix} -I_{2^{m-1}} & 0 \\ 0 & I_{2^{m-1}} \end{pmatrix},  \label{eqpmg}
\end{align}
\item For $n=2m+2$ with $m\in\NN$:
\begin{align*}
\gamma_j(2m+2)&:=\begin{pmatrix}
0 & \gamma_j(2m+1)\\
\gamma_j(2m+1) & 0
\end{pmatrix}, \quad j=1,\dots,2m+1,\nonumber\\
\gamma_{2m+2}(2m+2)&:=\begin{pmatrix}
0 & -\rmi I_{2^{m}}\\
\rmi  I_{2^{m}}& 0
\end{pmatrix} . \nonumber
\end{align*}
\end{itemize}
One easily checks that at a fixed $n\in\NN$ the matrices $\gamma_j(n)$ are Hermitian and anticommute, the square of each of them is the identity matrix.
Furthermore, if $\big(\gamma'_j(n)\big)$ is another set of matrices with these properties and of the same size, then there exists a unitary matrix $C$
and a suitable choice of $\pm$ in \eqref{eqpmg} such that the equalities $\gamma'_j(n) C=\gamma_j(n) C$ hold for all $j$, see e.g. \cite[Prop.~15.16]{dg}. Therefore, without loss of generality one may assume that
the matrices $\alpha_j$ in the expression \eqref{ekg3} of $\cB$ and
the matrices $\beta_j$ used in the definition of $D^\Sigma$ are chosen in the form
\begin{equation}
  \label{agan}
\alpha_j=\gamma_j(n+1), \quad j=1,\dots,n+1,
\qquad
\beta_j=\gamma_j(n), \quad j=1,\dots,n.
\end{equation}
For $x=(x_1,\dots,x_n)\in\RR^n$ and $q\in\{n,n+1\}$ we define a matrix $\Gamma_q(x)$ by
\[
\Gamma_q(x)=\sum_{j=1}^n x_j \gamma_j(q),
\]
then one has the relations
\begin{gather}
   \label{eq-comm-rel}
\Gamma_n(x)\Gamma_n(y)+\Gamma_n(y)\Gamma_n(x)=2\langle x,y\rangle_{\RR^n}I, \quad x,y\in\RR^n,\\
\Gamma(x)=\Gamma_{n+1}(x), \quad \beta(x)=\Gamma_n(x).\nonumber
\end{gather}

Consider first the case when $n$ is odd, $n=2m+1$ with $m\in\NN$.
Represent $f\in\cH$ as $f=(f_-,f_+)$ with $f_\pm\in L^2(\Sigma,\CC^{N/2})$,
then, under the convention \eqref{agan}, the condition $f=\cB f$
takes the form
\begin{gather*}
\begin{pmatrix}
f_- \\ f_+
\end{pmatrix}=-\rmi \begin{pmatrix} 0 & -\rmi I_{2^m}\\ \rmi I_{2^m}& 0\end{pmatrix}
\begin{pmatrix}
0 & \Gamma_n(\nu) \\
\Gamma_n(\nu) & 0
\end{pmatrix}
\begin{pmatrix}
f_- \\ f_+
\end{pmatrix}, 
\end{gather*}
which holds if and only if $f_\pm=\pm \Gamma_n(\nu)f_\pm$. Therefore, the map
\[
U:L^2(\Sigma,\CC^{N/2})\to \cH, \quad (U f)(s)=\dfrac{1}{2}\begin{pmatrix} \big(1-\Gamma_n(\nu) \big) f\\
\big(1+\Gamma_n(\nu) \big) f\end{pmatrix}
\]
defines a unitary operator, and $Uf \in H^1(\Sigma,\CC^N)$ iff $f\in H^1(\Sigma,\CC^{N/2})$.
As $H_j$ are scalar functions, one has
\begin{equation}
 \label{h1h2}
\Big(H_2-\dfrac{H_1^2}{4}\Big) |U f|^2_{\CC^N} = 
\Big(H_2-\dfrac{H_1^2}{4}\Big) |f|^2_{\CC^{N/2}}.
\end{equation}
In order to compute $\big|\nabla (Uf)\big|^2$ we use local coordinates on $\Sigma$.
One has
\begin{align*}
\big|\nabla (Uf)\big|^2&= \dfrac{1}{4}\sum_{j,k=1}^{n-1}
g^{j,k}\bigg[
\Big\langle \partial_j \Big(\big(1-\Gamma_n(\nu) \big) f\Big),\partial_k \Big(\big(1-\Gamma_n(\nu) \big) f\Big)\Big\rangle_{\CC^{N/2}}
\\
&\qquad+ \Big\langle \partial_j \Big(\big(1+\Gamma_n(\nu) \big) f\Big),\partial_k \Big(\big(1+\Gamma_n(\nu) \big) f\Big)\Big\rangle_{\CC^{N/2}}
\bigg]\\
&= \dfrac{1}{2}\sum_{j,k=1}^{n-1}g^{j,k}\bigg[
\langle \partial_j f,\partial_k f\rangle_{\CC^{N/2}}
+
\Big\langle
\partial_j \big(\Gamma_n(\nu) f\big)
,
\partial_k \big(\Gamma_n(\nu) f\big)
\Big\rangle_{\CC^{N/2}}\bigg].
\end{align*}
We have then
\begin{multline*}
\Big\langle
\partial_j \big(\Gamma_n(\nu) f\big),
\partial_k \big(\Gamma_n(\nu) f\big)
\Big\rangle_{\CC^{N/2}}\\
=\Big\langle \Gamma_n(\nu)\partial_j f + \Gamma_n(\partial_j\nu) f,\Gamma_n(\nu)\partial_k f + \Gamma_n(\partial_k\nu) f\Big\rangle_{\CC^{N/2}}\\
=\Big\langle \partial_j f + \Gamma_n(\nu)\Gamma_n(\partial_j\nu) f,\partial_k f + \Gamma_n(\nu)\Gamma_n(\partial_k\nu) f\Big\rangle_{\CC^{N/2}},
\end{multline*}
and it follows that
\begin{align*}
\big|\nabla (Uf)\big|^2&=\sum_{j,k=1}^{n-1} g^{j,k} \Big\langle\partial_j f + \dfrac{1}{2}\,\Gamma_n(\nu) \Gamma_n(\partial_j\nu) f, \partial_k f + \dfrac{1}{2}\,\Gamma_n(\nu) \Gamma_n(\partial_k\nu) f\Big\rangle_{\CC^{N/2}}\\
&\quad +\frac14\sum_{j,k=1}^{n-1} g^{j,k} \Big\langle\Gamma_n(\partial_k \nu)\Gamma_n(\partial_j\nu)f,f\Big\rangle_{\CC^{N/2}}\\
&= |\nabla^\Sigma f|^2+\dfrac{1}{4}\langle f, V f\rangle_{\CC^{N/2}}, \qquad V:=\sum_{j,k=1}^{n-1} g^{j,k} \Gamma_n(\partial_k \nu)\Gamma_n(\partial_j\nu).
\end{align*}
Using the symmetry of $(g^{j,k})$ and the commutation relation \eqref{eq-comm-rel} we compute
\begin{multline*}
V=\dfrac{1}{2}\sum_{j,k=1}^{n-1} g^{j,k} \Big(\Gamma_n(\partial_j \nu)\Gamma_n(\partial_k\nu)+\Gamma_n(\partial_k \nu)\Gamma_n(\partial_j\nu)\Big)\\
= \sum_{j,k=1}^{n-1} g^{j,k} \langle\partial_j \nu,\partial_k\nu\rangle\, I= |\nabla\nu|^2 I=|W|^2 I=(H_1^2-2H_2) I.
\end{multline*}
By combining with \eqref{h1h2} we arrive at
\begin{gather*}
L[Uf,Uf]=\int_\Sigma \Big( |\nabla^\Sigma f|^2+\dfrac{H_2 |f|^2}{2} \Big)\dd s.
\end{gather*}
Due to the Schr\"odinger-Lichnerowicz formula \eqref{lichn} we conclude that $L=U^*(D^\Sigma)^2 U$,
while $(D^\Sigma)^2$ is unitarily equivalent to $\Dsl^2$ as $n$ is odd. This proves the claim for odd dimensions.

Now consider the case when $n$ is even, $n=2m$ with $m\in\NN$. As for the previous case, we try to find a block representation
for the  condition $f=\cB f$, which now takes the form
\begin{equation}
   \label{loc0}
\Big(I_{2^m} +\rmi \gamma_{2m+1}(2m+1) \sum_{j=1}^{2m} \gamma_j(2m+1)\, \nu_j\Big)f=0.
\end{equation}
We first remark that for $x=(x_1,\dots,x_n)\in\RR^n$ we have the block representation
\begin{gather}
\sum_{j=1}^{2m} \gamma_j(2m+1)x_j\equiv \sum_{j=1}^{2m} \gamma_j(2m)x_j\equiv\Gamma_n (\nu)= \begin{pmatrix} 0 & \lambda(x) \\
\lambda(x)^* & 0 \end{pmatrix},  \label{lambdas}\\
\lambda(x):=\sum_{j=1}^{2m-1} \gamma_j(2m-1) \, x_j -\rmi x_{2m} I_{2^{m-1}}.
\nonumber
\end{gather}
Represent $f=(\psi_-,\psi_+)$ with $\psi_\pm\in L^2(\Sigma,\CC^{N/2})$, then we rewrite
the condition \eqref{loc0} in the block form
\[
\left[
\begin{pmatrix} I & 0 \\ 0 & I\end{pmatrix}
\pm\rmi \begin{pmatrix} -I & 0 \\ 0 & I \end{pmatrix} \begin{pmatrix} 0 & \lambda(\nu) \\ \lambda(\nu)^* & 0\end{pmatrix}
\right]\begin{pmatrix} \psi_- \\ \psi_+\end{pmatrix}=\begin{pmatrix} 0 \\ 0 \end{pmatrix},
\]
where $I:=I_{2^{m-1}}$.
Using $\lambda(\nu)\lambda(\nu)^*=\lambda(\nu)^*\lambda(\nu)=I$ we see that
the condition $f=\cB f$ can be rewritten as $\psi_-=\pm\rmi \lambda(\nu)\,\psi_+$.
Hence, the map
\[
U: L^2(\Sigma,\CC^{N/2})\to \cH,
\quad
U \psi
=
\dfrac{1}{\sqrt{2}}\begin{pmatrix}
\pm\rmi \lambda(\nu) \psi \\ \psi
\end{pmatrix}
\]
defines a unitary operator, and at each point of $\Sigma$ there holds
\begin{equation}
  \label{equu1}
\begin{aligned}
\big|\nabla (U\psi)\big|^2&=
\sum_{j,k=1}^{n-1} g^{j,k} \bigg(\dfrac{1}{2}\,
\Big\langle \rmi\lambda(\nu)\partial_j \psi + \rmi\lambda(\partial_j\nu) \psi,\rmi\lambda(\nu)\partial_k \psi + \rmi\lambda(\partial_k\nu) \psi\Big\rangle_{\CC^{N/2}}\\
&\quad+\dfrac{1}{2}\,\langle\partial_j \psi,\partial_k\psi\rangle_{\CC^{N/2}}\bigg).
\end{aligned}
\end{equation}
We then transform
\begin{multline*}
\dfrac{1}{2}\,\Big\langle \rmi\lambda(\nu)\partial_j \psi + \rmi\lambda(\partial_j\nu) \psi,\rmi\lambda(\nu)\partial_k \psi + \rmi\lambda(\partial_k\nu) \psi\Big\rangle_{\CC^{N/2}}+\dfrac{1}{2}\langle\partial_j \psi,\partial_k\psi\rangle_{\CC^{N/2}}\\
\begin{aligned}
&=\dfrac{1}{2}\,\Big\langle \partial_j \psi + \lambda(\nu)^*\lambda(\partial_j\nu) \psi,\partial_k \psi + \lambda(\nu)^*\lambda(\partial_k\nu) \psi\Big\rangle_{\CC^{N/2}}+\dfrac{1}{2}\langle\partial_j \psi,\partial_k\psi\rangle_{\CC^{N/2}}\\
&=\Big\langle \partial_j \psi + \dfrac{1}{2}\lambda(\nu)^*\lambda(\partial_j\nu) \psi,\partial_k \psi + \dfrac{1}{2}\lambda(\nu)^*\lambda(\partial_k\nu) \psi\Big\rangle_{\CC^{N/2}}\\
&\quad+\dfrac{1}{4}\,\Big\langle \lambda(\nu)^*\lambda(\partial_j\nu) \psi,\lambda(\nu)^*\lambda(\partial_k\nu) \psi\Big\rangle_{\CC^{N/2}}\\
&=\Big\langle \partial_j \psi + \dfrac{1}{2}\lambda(\nu)^*\lambda(\partial_j\nu) \psi,\partial_k \psi + \dfrac{1}{2}\lambda(\nu)^*\lambda(\partial_k\nu) \psi\Big\rangle_{\CC^{N/2}}\\
&\quad+\dfrac{1}{4}\,\Big\langle  \psi,\lambda(\partial_j\nu)^*\lambda(\partial_k\nu) \psi\Big\rangle_{\CC^{N/2}}.
\end{aligned}
\end{multline*}
The substitution into \eqref{equu1} gives
\begin{align*}
\big|\nabla (U\psi)\big|^2&=\sum_{j,k=1}^{n-1} g^{j,k}\Big\langle \partial_j \psi + \dfrac{1}{2}\lambda(\nu)^*\lambda(\partial_j\nu) \psi,\partial_k \psi + \dfrac{1}{2}\lambda(\nu)^*\lambda(\partial_k\nu) \psi\Big\rangle_{\CC^{N/2}}\\
&\quad+\dfrac{1}{4}\langle \psi, V \psi\rangle_{\CC^{N/2}},
\qquad V:=\sum_{j,k=1}^{2m-1}g^{j,k}\lambda(\partial_j\nu)^*\lambda(\partial_k\nu).
\end{align*}
In order to compute $V$ we introduce
\[
\widetilde V:=\sum_{j,k=1}^{2m-1}g^{j,k}\lambda(\partial_j\nu)\lambda(\partial_k\nu)^*,
\]
then
\begin{align*}
\begin{pmatrix}
\widetilde V & 0\\
0 &  V
\end{pmatrix}
&=\sum_{j,k=1}^{2m-1}g^{j,k}\begin{pmatrix}\lambda(\partial_j\nu)\lambda(\partial_k\nu)^* & 0 \\ 0 &\lambda(\partial_j\nu)^*\lambda(\partial_k\nu)\end{pmatrix}\\
&=\sum_{j,k=1}^{2m-1}g^{j,k}\begin{pmatrix}0 & \lambda(\partial_j\nu) \\ \lambda(\partial_j\nu)^* &0\end{pmatrix}
\begin{pmatrix}0 & \lambda(\partial_k\nu) \\ \lambda(\partial_k\nu)^* & 0\end{pmatrix}\\
&= \sum_{j,k=1}^{2m-1}g^{j,k} \Gamma_n(\partial_j \nu)\Gamma_n(\partial_k\nu)\\
&=\dfrac{1}{2}
\sum_{j,k=1}^{2m-1}g^{j,k} \Big( \Gamma_n(\partial_j \nu)\Gamma_n(\partial_k\nu)+\Gamma_n(\partial_k \nu)\Gamma_n(\partial_j\nu)\Big)\\
&=\sum_{j,k=1}^{2m-1}g^{j,k} \langle \partial_j \nu,\partial_k\nu\rangle\, I= |\nabla \nu|^2 I=|W|^2I=(H_1^2-2H_2)I.
\end{align*}
In addition, as the functions $H_j$ are scalar, we have
\[
\Big\langle U\psi, \Big(H_2-\dfrac{H_1^2}{4}\Big)U\psi\Big\rangle_{\cH}
=\Big\langle \psi, \Big(H_2-\dfrac{H_1^2}{4}\Big)\psi\Big\rangle_{L^2(\Sigma,\CC^{N/2})},
\]
and then
\begin{multline*}
L[U\psi,U\psi]\\
=\int_\Sigma \sum_{j,k=1}^{n-1} g^{j,k}\Big\langle \partial_j \psi + \dfrac{1}{2}\lambda(\nu)^*\lambda(\partial_j\nu) \psi,\partial_k \psi + \dfrac{1}{2}\lambda(\nu)^*\lambda(\partial_k\nu) \psi\Big\rangle_{\CC^{N/2}}\dd s\\
+\dfrac{1}{2}\langle \psi, H_2 \psi\rangle_{L^2(\Sigma,\CC^{N/2})}.
\end{multline*}
Now consider the unitary transform $U_0:L^2(\Sigma,\CC^{N/2})\to L^2(\Sigma,\CC^{N/2})$ given by $U_0\psi=\lambda(\nu)^* \psi$,
then a simple computation shows that
\begin{multline*}
L[UU_0\psi,UU_0\psi]\\
=\int_\Sigma \sum_{j,k=1}^{n-1} g^{j,k}\Big\langle \partial_j \psi + \dfrac{1}{2}\lambda(\nu)\lambda(\partial_j\nu)^* \psi,\partial_k \psi + \dfrac{1}{2}\lambda(\nu)\lambda(\partial_k\nu)^* \psi\Big\rangle_{\CC^{N/2}}\dd s\\
+\dfrac{1}{2}\langle \psi, H_2 \psi\rangle_{L^2(\Sigma,\CC^{N/2})}.
\end{multline*}
Using \eqref{lambdas}, for $\psi_\pm\in H^1(\Sigma,\CC^{N/2})$ and $\psi:=(\psi_-,\psi_+)\in H^1(\Sigma,\CC^N)$ one has
\begin{multline*}
L[UU_0\psi_-,UU_0\psi_-]+L[U\psi_+,U\psi_+]\\
=\int_\Sigma \sum_{j,k=1}^{n-1} g^{j,k}
\big\langle\partial_j \psi + \dfrac{1}{2} \Gamma_n(\nu)\Gamma_n(\partial_j\nu)\psi,
\partial_k \psi + \dfrac{1}{2} \Gamma_n(\nu)\Gamma_n(\partial_k\nu)\psi\big\rangle_{\CC^N}\dd s\\
+\dfrac{1}{2}\, \langle \psi, H_2 \psi\rangle_{L^2(\Sigma,\CC^N)}. 
\end{multline*}
By comparing with the Schr\"odinger-Lichnerowicz formula \eqref{lichn}--\eqref{lichn2}
we see that the operator $(U_0^*U^*LUU_0)\oplus (U^*LU)$
is unitarily equivalent to $(D^\Sigma)^2$. As $(D^\Sigma)^2$ is now unitarily equivalent
to $\Dsl^2\oplus\Dsl^2$ (because $n$ is even), it follows that
$L$ is unitarily equivalent to $\Dsl^2$.
\end{proof}

\section{Preliminary constructions for the spectral analysis}\label{sec-prelim}

\subsection{One-dimensional model operators}\label{ssec1d}

\begin{lemma}\label{lem1dd}
Let $\delta>0$ be fixed. For $\alpha>0$, let $S$ be the self-adjoint operator
in $L^2(0,\delta)$ with 
\[
S[f,f]=\int_0^\delta |f'|^2\dd t-\alpha \big|f(0)\big|^2,
\quad
\qdom(S)=\big\{f\in H^1(0,\delta):\, f(\delta)=0\big\},
\]
then for $\alpha\to+\infty$ one has $E_1(S)=-\alpha^2+\cO(e^{-\delta\alpha})$,
and the associated eigenfunction $\psi$ with $\|\psi\|_{L^2(0,\delta)}=1$
satisfies $\big|\psi(0)\big|^2=2\alpha+\cO(1)$.
\end{lemma}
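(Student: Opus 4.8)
The plan is to diagonalise $S$ explicitly by an elementary ODE computation, which produces all the claimed information at once. Integrating by parts in the sesquilinear form shows that $S$ acts as $f\mapsto -f''$ on functions $f\in H^2(0,\delta)$ satisfying the Robin condition $f'(0)=-\alpha f(0)$ and the Dirichlet condition $f(\delta)=0$; since the interval is bounded, $S$ has compact resolvent and $E_1(S)$ is its lowest eigenvalue. Two preliminary observations are useful. First, the trace inequality $|f(0)|^2\le\varepsilon\|f'\|_{L^2(0,\delta)}^2+\varepsilon^{-1}\|f\|_{L^2(0,\delta)}^2$ (valid when $f(\delta)=0$, by writing $|f(0)|^2=-2\Re\int_0^\delta f'\overline f\dd t$), applied with $\varepsilon=1/\alpha$, gives at once $E_1(S)\ge -\alpha^2$. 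Second, for a nonnegative eigenvalue $E\ge 0$ the corresponding eigenfunction with $f(\delta)=0$ is proportional to $\sin\big(\sqrt E(\delta-t)\big)$ (or to $\delta-t$ if $E=0$), and the Robin condition then forces $\sqrt E\cot(\sqrt E\delta)=\alpha$; for $\alpha>1/\delta$ this has no root with $\sqrt E\delta\le\pi$, so the smallest nonnegative eigenvalue exceeds $(\pi/\delta)^2$ uniformly in $\alpha$. Hence for $\alpha$ large $E_1(S)$ is negative; writing $E_1(S)=-\kappa^2$ with $\kappa>0$ the eigenfunction is $\psi(t)=c\,\sinh\big(\kappa(\delta-t)\big)$, and imposing the Robin condition at $t=0$ turns into the transcendental equation
\[
\kappa\coth(\kappa\delta)=\alpha.
\]

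Next I would analyse this equation as $\alpha\to+\infty$. The function $\kappa\mapsto\kappa\coth(\kappa\delta)$ increases strictly from $1/\delta$ to $+\infty$ on $(0,+\infty)$, so for $\alpha>1/\delta$ there is a unique root $\kappa=\kappa(\alpha)$; it is the unique negative eigenvalue, and $\kappa(\alpha)\to+\infty$. Since $\coth x=1+\cO(e^{-2x})$ and $\kappa\le\alpha$, a short bootstrap in $\kappa\coth(\kappa\delta)=\alpha$ yields $\kappa=\alpha+\cO(\alpha e^{-2\delta\alpha})$, whence $\kappa^2=\alpha^2+\cO(e^{-\delta\alpha})$ and
\[
E_1(S)=-\kappa^2=-\alpha^2+\cO(e^{-\delta\alpha}).
\]

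Finally I would compute the normalisation. With $\psi(t)=\sinh\big(\kappa(\delta-t)\big)$, the substitution $u=\delta-t$ gives
\[
\int_0^\delta|\psi|^2\dd t=\frac{\sinh(2\kappa\delta)}{4\kappa}-\frac\delta2=\frac{\sinh(\kappa\delta)\cosh(\kappa\delta)}{2\kappa}-\frac\delta2,\qquad \big|\psi(0)\big|^2=\sinh^2(\kappa\delta).
\]
Dividing and using the eigenvalue relation $\kappa\cosh(\kappa\delta)=\alpha\sinh(\kappa\delta)$ to eliminate $\cosh(\kappa\delta)$, the $L^2$-normalised eigenfunction satisfies
\[
\big|\psi(0)\big|^2=\frac{2\kappa^2}{\alpha-\dfrac{\kappa^2\delta}{\sinh^2(\kappa\delta)}}=2\alpha+\cO(e^{-\delta\alpha}),
\]
because the subtracted term in the denominator is $\cO(\alpha^2 e^{-2\delta\alpha})=\cO(e^{-\delta\alpha})$ and $\kappa^2=\alpha^2+\cO(e^{-\delta\alpha})$; in particular $\big|\psi(0)\big|^2=2\alpha+\cO(1)$, as required. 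There is no real obstacle here: the only points deserving a little care are confirming that the negative eigenvalue found above is indeed the ground state and carrying out the elementary bootstrap for the implicit root $\kappa(\alpha)$.
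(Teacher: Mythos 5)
Your proposal is correct and follows essentially the same route as the paper: reduce to the transcendental equation $\kappa\coth(\kappa\delta)=\alpha$, extract the asymptotics of its unique root, and compute the normalisation of $\sinh\big(\kappa(\delta-t)\big)$ explicitly. The only difference is that you supply a few details the paper leaves implicit (the a priori bound $E_1(S)\ge-\alpha^2$ via the trace inequality and the verification that the negative root is indeed the ground state), which is a welcome but minor refinement.
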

\begin{proof}
One easily see that the operator $S$ acts as $f\to-f''$ defined of the functions $f\in H^2(0,\delta)$ with $f'(0)+\alpha f(0)=f(\delta)=0$.
Let us estimate its first eigenvalue as $\alpha\to+\infty$. Look for negative eigenvalues $E=-k^2$ with $k>0$,
then using the boundary condition at $\delta$ we see that the associated normalized
eigenfunction $\psi$ is of the form $\psi(t)=c\sinh\big(k(\delta-t)\big)$ with $c\ne 0$ being a normalizing constant.
The boundary condition at $0$
gives $0=\psi'(0)+\alpha \psi(0)=-k \cosh (k\delta)+\alpha \sinh (k\delta)$, i.e.
\begin{equation}
 \label{fkd}
F(k\delta)=\alpha\delta, \quad F(x):= x\coth x.
\end{equation}
One easily sees that $F:(0,+\infty)\to (1,+\infty)$ is strictly increasing and bijective,
and for $\alpha\delta>1$ the equation \eqref{fkd} admits a unique solution $k$, and then $k\delta\to +\infty$ for $\alpha\to+\infty$. 
Now rewrite \eqref{fkd} as $k=\alpha \tanh (k\delta)$. Due to $k\delta\to+\infty$ we have $\frac{3}{4}\le \tanh (k\delta)\le 1$
implying $3\alpha/4\le k\le \alpha$. Then using the equation again we have $\alpha \tanh\big(\frac{3}{4} \,\alpha\delta\big)\le k\le \alpha$,
while $\tanh\big(\frac{3}{4} \,\alpha \delta\big)=1+\cO(e^{-3 \delta \alpha/2})$. Therefore, with some $c_1>0$
one has $E_1(S)=-k^2=-\alpha^2\big(1+\cO(e^{-3\delta\alpha/2})\big)\le -\alpha^2+c_1e^{-\delta\alpha}$ as $\alpha\to +\infty$.

In order to find the value of the normalizing constant $c$ we use
\[
1=\|\psi\|^2_{L^2(0,\delta)}=|c|^2\int_0^\delta \sinh^2\big( k(\delta-t)\big)\dd t=|c|^2 \Big(\dfrac{1}{4 k} \sinh (2 k\delta) - \dfrac{\delta}{2}\Big),
\]
then
\[
\big|\psi(0)\big|^2=\big(\sinh^2 (k\delta) \big)\Big(\dfrac{\sinh (2 k\delta)}{4k}-\dfrac{\delta}{2}\Big)^{-1}= 2k+\cO(1)=2\alpha+\cO(1). \qedhere
\]
\end{proof}

\begin{lemma}\label{lem1dr}
Let $\delta>0$ and $\beta\ge 0$ be fixed.
For $\alpha>0$, let $S'$ be the self-adjoint operator in $L^2(0,\delta)$ given by
\[
S'[f,f]=\int_0^\delta |f'|^2\dd t -\alpha \big|f(0)\big|^2 - \beta \big|f(\delta)\big|^2,
\quad
\qdom(S')=H^1(0,\delta),
\]
then for $\alpha\to+\infty$ one has $E_1(S')=-\alpha^2+\cO(e^{-\delta\alpha})$.
Furthermore, there exist $b_\pm>0$ and $b>0$ such that
\begin{equation}
   \label{ejb}
b^- j^2-b \le E_j(S')\le b^+ j^2 \text{ for all $j\ge 2$ and $\alpha\in\RR$}.
\end{equation}
\end{lemma}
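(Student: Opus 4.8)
\emph{Setup.} As in the proof of Lemma~\ref{lem1dd}, one checks by integration by parts that $S'$ acts as $f\mapsto -f''$ on the functions $f\in H^2(0,\delta)$ with $f'(0)+\alpha f(0)=0$ and $f'(\delta)-\beta f(\delta)=0$, and that $S'$ is self-adjoint with compact resolvent and lower semibounded (the boundary terms are infinitesimally form-bounded with respect to $\int_0^\delta|f'|^2+\|\cdot\|^2$, by the one-dimensional trace inequality $|f(0)|^2+|f(\delta)|^2\le\varepsilon\int_0^\delta|f'|^2+C_\varepsilon\|f\|^2$). Its eigenfunctions are smooth.

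\emph{The asymptotics $E_1(S')=-\alpha^2+\cO(e^{-\delta\alpha})$.} The test function $f\equiv1$ gives $S'[f,f]=-(\alpha+\beta)<0$, so for $\alpha>0$ the bottom eigenvalue $E_1(S')$ is a negative eigenvalue, hence equals $-k^2$ where $k>0$ is the \emph{largest} root of the transcendental equation obtained by inserting the $L^2$-solution $f(t)=\cosh\!\big(k(\delta-t)\big)-\tfrac{\beta}{k}\sinh\!\big(k(\delta-t)\big)$ (which already satisfies the boundary condition at $\delta$) into the boundary condition at $0$, namely
\[
(\alpha+\beta)\coth(k\delta)=k+\frac{\alpha\beta}{k}.
\]
Let $\Phi(k)$ be the difference of the two sides. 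Then $\Phi(\alpha)=(\alpha+\beta)\big(\coth(\alpha\delta)-1\big)>0$ while $\Phi(k)\to-\infty$ as $k\to+\infty$, and for $\alpha$ large $\Phi$ is strictly decreasing on $(\alpha,+\infty)$ (there the $\coth'$ term is exponentially small and $-1+\alpha\beta/k^2<0$, since $\beta$ is fixed), so the relevant root $k$ is unique and, comparing with the bound $\coth(k\delta)\le 2$ for $\alpha$ large, satisfies $\alpha<k<2\alpha$. Rewriting the equation as
\[
(\alpha+\beta)\big(\coth(k\delta)-1\big)=\frac{(k-\alpha)(k-\beta)}{k},
\]
the left side is $\cO(\alpha e^{-2\alpha\delta})$ (using $k>\alpha$) while the right side is $\ge\tfrac14(k-\alpha)$ (using $\alpha<k<2\alpha$ and $k-\beta>\alpha/2$), whence $k-\alpha=\cO(\alpha e^{-2\alpha\delta})$ and $E_1(S')=-k^2=-\alpha^2-(k-\alpha)(k+\alpha)=-\alpha^2+\cO(\alpha^2e^{-2\alpha\delta})=-\alpha^2+\cO(e^{-\delta\alpha})$.

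\emph{The bounds \eqref{ejb}.} Here the constants must be uniform in $\alpha\in\RR$, so the idea is to bracket $S'$ by $\alpha$-independent Laplacians, taking care of the single deep bound state. For the upper bound, take $V_j:=\vspan\{\sin((2\ell-1)\pi t/(2\delta)):\ \ell=1,\dots,j\}\subset H^1(0,\delta)$: every $f\in V_j$ vanishes at $0$, so $S'[f,f]=\int_0^\delta|f'|^2-\beta|f(\delta)|^2\le\int_0^\delta|f'|^2\le(\pi j/\delta)^2\|f\|^2$, and the min-max principle gives $E_j(S')\le(\pi/\delta)^2j^2$, i.e.\ one may take $b^+=\pi^2/\delta^2$. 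For the lower bound I would split on the sign of $\alpha$. If $\alpha\le0$, then $S'[f,f]\ge\int_0^\delta|f'|^2-\beta|f(\delta)|^2$ for all $f\in H^1(0,\delta)$, an $\alpha$-independent form which agrees with $\int_0^\delta|f'|^2$ on the codimension-one subspace $\{f(\delta)=0\}$; the min-max principle then yields $E_j(S')\ge((2j-3)\pi/(2\delta))^2$ for $j\ge2$ (the $(j{-}1)$-st eigenvalue of the Neumann--Dirichlet Laplacian). If $\alpha\ge0$, the trace inequality gives $\beta|f(\delta)|^2\le\tfrac12\int_0^\delta|f'|^2+C_0\|f\|^2$ with $C_0$ depending only on $\beta,\delta$, hence $S'[f,f]\ge\tfrac12\big(\int_0^\delta|f'|^2-2\alpha|f(0)|^2\big)-C_0\|f\|^2$; the form in brackets agrees with $\int_0^\delta|f'|^2$ on the codimension-one subspace $\{f(0)=0\}$, so the min-max principle gives $E_j(S')\ge\tfrac12((2j-3)\pi/(2\delta))^2-C_0$ for $j\ge2$. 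In both cases, using $(2j-3)^2\ge 2j^2-18$, one obtains constants $b^->0$ and $b>0$, independent of $\alpha$ and $j$, with $E_j(S')\ge b^-j^2-b$ for all $j\ge2$.

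\emph{Main obstacle.} The delicate point is the \emph{exponentially small} remainder in the first assertion: a soft variational comparison (e.g.\ conjugating by $e^{\alpha t}$) only yields $E_1(S')=-\alpha^2+\cO(\alpha)$, so one really must use the explicit transcendental equation, and in particular one has to localize its relevant root in an interval such as $(\alpha,2\alpha)$ before the bootstrap $k-\alpha=\cO(\alpha e^{-2\alpha\delta})$ can be carried out. In \eqref{ejb} the only subtlety is the uniformity in $\alpha$: since $S'$ has, for $|\alpha|$ large, one eigenvalue of size $\sim-\alpha^2$, the comparison Laplacians must be chosen as \emph{codimension-one} restrictions (dropping the Dirichlet index from $j$ to $j-1$) so that precisely this bound state is discarded.
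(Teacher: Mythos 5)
Your proof is correct and follows essentially the same route as the paper: the first eigenvalue is pinned down via the (equivalent, $\coth$-form of the) secular equation together with a monotonicity/localization argument on $(\alpha,\infty)$, and the bounds \eqref{ejb} come from comparing $S'$ with $\alpha$-independent one-dimensional Laplacians through a codimension-one (rank-one perturbation) min-max argument, which is exactly the paper's Dirichlet--Neumann bracketing $E_{j-1}(S'_N)\le E_j(S')\le E_j(S'_D)$ with explicit eigenvalues in place of Weyl asymptotics. The only cosmetic difference is that you absorb the $\beta$-boundary term by a trace inequality and a case split on the sign of $\alpha$, whereas the paper keeps the Robin condition at $\delta$ in both comparison operators; both are sound.
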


\begin{proof}
The operator $S'$ clearly acts as $f\mapsto -f''$
on the functions $f\in H^2(0,\delta)$ with $f'(0)+\alpha f(0)=f'(\delta)-\beta f(\delta)=0$.
To estimate $E_1(S')$ we remark that a value $E=-k^2$ with $k>0$
is an eigenvalue of $S'$ iff one can find  $(C_1,C_2)\in\CC^2\setminus\big\{(0,0)\big\}$ such that the function
$f:t\mapsto C_1e^{kt}+C_2 e^{-kt}$ belongs to its domain. The boundary conditions give
\begin{align*}
0&=f'(0)+\alpha f(0)=(\alpha+k)C_1 + (\alpha-k)C_2,\\
0&=f'(\delta)-\beta f(\delta)=(k-\beta)e^{k\delta} C_1 - (k+\beta)e^{-k\delta}C_2,
\end{align*}
and one has a non-zero solution iff the determinant of the system vanishes, i.e.
iff $k$ solves $(k+\alpha)(k+\beta)e^{-k\delta}=(k-\alpha)(k-\beta)e^{k\delta}$, which we rewrite
as
\begin{equation}
    \label{eq-gh}
g(k)=h(k), \quad g(k):=\dfrac{k+\alpha}{k-\alpha}, \quad h(k):=\dfrac{k-\beta}{k+\beta} \,e^{2k\delta}.
\end{equation}
Both $g$ and $h$ are continuous, and $g$ is strictly decreasing
on $(\alpha,+\infty)$ with $g(\alpha^+)=+\infty$ and $g(+\infty)$=1, while
$h$ is strictly increasing on $(\alpha,+\infty)$ being the product of two strictly increasing positive
functions (we assume without loss of generality that $\alpha>\beta$), and
$h(\alpha^+)=e^{2\alpha\delta}(\alpha-\beta)/(\alpha+\beta) <+\infty$
and $h(+\infty)=+\infty$. Therefore, there exists a unique solution $k$ of \eqref{eq-gh} with $k\in(\alpha,+\infty)$.
To obtain the required estimate we use again the monotonicity of $h$ on $(\alpha,+\infty)$:
\[
\dfrac{k+\alpha}{k-\alpha}=g(k)=h(k)> h(\alpha^+)=\dfrac{\alpha-\beta}{\alpha+\beta} \,e^{2\alpha\delta}.
\]
We bound the last term from below very roughly by $e^{3\alpha\delta/2}$ then
\[
\dfrac{k+\alpha}{k-\alpha}\ge e^{3\alpha\delta/2}, \quad k\le \alpha\,\dfrac{1+e^{-3\alpha\delta/2}}{1-e^{-3\alpha \delta/2}}=
\alpha\big(1+\cO(e^{-3\alpha\delta/2})\big).
\]
By combining with $k>\alpha$ we arrive at the sought estimate 
\[
E_1(S')=-k^2=-\alpha^2\big(1+\cO(e^{-3\alpha\delta/2})\big)=-\alpha^2+\alpha^2\cO(e^{-3\alpha\delta/2})
=-\alpha^2+\cO(e^{-\alpha\delta}).
\]
To estimate $E_j(S')$ with $j\ge 2$ we remark that by the min-max principle
for any $\alpha\in\RR$ one has $E_{j-1}(S'_N)\le E_j(S')\le E_j(S'_D)$, where
the operator $S'_{D/N}$ acts in $L^2(0,\delta)$ as $f\mapsto -f''$
on the functions $f\in H^2(0,\delta)$ with the Dirichlet/Neumann boundary condition at $0$
and $f'(\delta)-\beta f(\delta)$. As the eigenvalues of both $S'_{D/N}$
satisfy the Weyl asymptotics $E_j(S'_{D/N})\sim \pi^2j^2/\delta^2$ as $j\to +\infty$, one arrives at the
inequalities.
\end{proof}

\subsection{Tubular coordinates}\label{sec-curv}

Recall that the shape operator $W$ and curvatures of $\Sigma$ were defined in Subsection~\ref{quad1}.
In what follows we will actively use tubular coordinates on both sides of $\Sigma$. In this section,
\[
\text{let $\Omega_*$ be either $\Omega$ or $\Omega^c$,}
\]
and let $\nu_*$ be the unit normal on $\Sigma$ pointing to the exterior of $\Omega_*$, i.e.
\[
\nu_*:=\nu,\  W_*:=W \text{ for $\Omega_*=\Omega$}, \quad \nu_*:=-\nu,\  W_*:=-W \text{ for $\Omega_*=\Omega^c$}.
\]
The principal curvatures and the (higher) mean curvatures of $\Sigma$ with respect to $\nu_*$ will be denoted
by $h^*_j$ and $H^*_k$ respectively, i.e.
\begin{gather*}
\text{$h_j^*:=h_j$ and $H^*_k=H_k$ for $\Omega_*=\Omega$},\\
\text{$h_j^*:=-h_j$ and $H^*_k=(-1)^k H_k$ for $\Omega_*=\Omega^c$.}
\end{gather*}
For small $\delta>0$ denote
\[
\Pi_\delta:=\Sigma\times(0,\delta), \quad
\Omega_*^\delta=\big\{x\in \Omega_*: \dist(x,\Sigma)<\delta\big\}
\]
It is a well known result in differential geometry that there exists a small $\delta_0>0$
such that for sufficiently small $\delta>0$ the map
\[
\Phi_*: \Pi_\delta\to \Omega_*^\delta,
\quad 
(s,t)\mapsto s-t\nu_*(s),
\]
is a diffeomorphism, and $\dist\big(\Phi_*(s,t),\partial U\big)=t$ for $(s,t)\in\Pi_\delta$.
Consider the associated unitary map
\[
\Theta_\delta: L^2(\Omega_*^\delta) \to L^2(\Pi_\delta),
\quad u\mapsto \sqrt{\det (\Phi_*')}\, u\circ \Phi_*
\]
We will use several times the following computations:
\begin{lemma}\label{lem7}
For $\gamma\in\RR$ denote
\[
J_\gamma(u)\equiv J(u):=\int_{\Omega_*^\delta} |\nabla u|^2 \dd x +\int_{\Sigma} \Big(\gamma+\dfrac{H^*_1}{2}\Big) |u|^2\dd s,
\quad u\in H^1(\Omega_*^\delta).
\]
There exist $\delta_0>0$ and $c>0$ such that for any $\gamma\in\RR$ and $\delta\in(0,\delta_0)$
the following assertions hold true with  $v:=\Theta_\delta u$:
\begin{itemize}
\item[(a)] for any $u\in H^1(\Omega_*^\delta)$ with $u=0$ on $\partial \Omega_*^\delta\setminus\Sigma$ one has
\[
J(u)\le \int_{\Pi_\delta} \bigg[(1+c\delta) |\nabla_s v|^2 + |\partial_t v|^2 + \Big(H^*_2 -\dfrac{(H^*_1)^2}{4}+c\delta\Big)
|v|^2 \bigg]\dd s\dd t +\gamma\int_{\Sigma} \big|v(s,0)\big|^2\dd s,
\]
\item[(b)] for any $u\in H^1(\Omega_*^\delta)$ one has
\begin{multline*}
J(u)\ge \int_{\Pi_\delta} \bigg[(1-c\delta) |\nabla_s v|^2 + |\partial_t v|^2 + \Big(H^*_2 -\dfrac{(H^*_1)^2}{4}-c\delta\Big)
|v|^2 \bigg]\dd s\dd t\\
 +\gamma\int_{\partial U} \big|v(s,0)\big|^2\dd s-c\int_{\Sigma} \big|v(s,\delta)\big|^2\dd s,
\end{multline*}
\end{itemize}
where $\nabla_s$ is the gradient on $\Sigma$, i.e. with respect to the coordinates $s\in\Sigma$.
\end{lemma}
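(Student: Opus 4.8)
The statement is a comparison between the Dirichlet-type quadratic form $J$ on the tubular neighbourhood $\Omega_*^\delta$ and a model form on the cylinder $\Pi_\delta$, with all errors controlled by $c\delta$ uniformly in $\gamma$. The natural strategy is to push everything through the diffeomorphism $\Phi_*$ and expand the pulled-back metric in powers of $t$. In Fermi (tubular) coordinates $(s,t)$ with $s\in\Sigma$ and $t=\dist(\cdot,\Sigma)\in(0,\delta)$, the Euclidean metric on $\Omega_*^\delta$ takes the block form $\dd t^2 + g_t$, where $g_t$ is the $t$-dependent metric on $\Sigma$ obtained by flowing along $-\nu_*$, and $g_t = \big((I - tW_*)^\top g_0 (I-tW_*)\big)$ in an orthonormal frame diagonalising $W_*$; concretely $\det g_t = \det g_0 \prod_{j}(1-t h_j^*)^2$. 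Since $\Sigma$ is smooth and compact, the principal curvatures are bounded, so for $\delta$ small enough all factors $1-th_j^*$ are bounded above and below by positive constants, and one has two-sided bounds $(1-c\delta)|\xi|_{g_0}^2 \le |\xi|_{g_t}^2 \le (1+c\delta)|\xi|_{g_0}^2$ for cotangent vectors, with $c$ depending only on $\Sigma$. This immediately gives the $(1\pm c\delta)|\nabla_s v|^2$ terms once one records that the tangential part of $|\nabla u|^2$ becomes $g_t^{jk}\partial_j(\cdot)\partial_k(\cdot)$.

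**Key steps, in order.** First I would set $w := u\circ\Phi_*$ (the plain pullback) and compute $|\nabla u|^2\circ\Phi_* = |\partial_t w|^2 + g_t^{jk}\partial_j w\,\partial_k w$ together with the volume element $\dd x = \sqrt{\det g_t}\,\dd s\,\dd t$; then pass to $v = \Theta_\delta u = (\det g_t / \det g_0)^{1/4}\, w$ (up to the fixed factor coming from $\det\Phi_*'$), which is the unitary normalisation that turns the $L^2$ norm on $\Omega_*^\delta$ into the flat $L^2(\Pi_\delta)$ norm. The substitution $w = \rho^{-1/2} v$ with $\rho := \det g_t/\det g_0 = \prod_j (1-th_j^*)^2$ produces, from the $|\partial_t w|^2$ term after integration by parts in $t$, exactly a potential term $\tfrac14(\partial_t\log\rho)^2 - \tfrac12\partial_t^2\log\rho$ acting on $|v|^2$, plus boundary contributions at $t=0$ and $t=\delta$. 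A direct Taylor expansion at $t=0$ gives $\partial_t\log\rho = -\sum_j \tfrac{2h_j^*}{1-th_j^*} = -2H_1^* + O(t)$, hence $\tfrac14(\partial_t\log\rho)^2 - \tfrac12\partial_t^2\log\rho = (H_1^*)^2 - \big((H_1^*)^2 - 2H_2^*\big) + O(t) = 2H_2^* - O(t)$; wait — one must be careful with the normalisation: the term I want is $H_2^* - \tfrac14 (H_1^*)^2$, so I would instead keep $v$ normalised via $\rho^{1/4}$ so that the $L^2$-unitarity is with respect to the surface measure on $\Sigma$ at each slice, giving the substitution factor that yields precisely $H_2^* - \tfrac14(H_1^*)^2$ in the limit. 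In any case the point is that the $t$-independent part of the resulting potential is $H_2^* - \tfrac14(H_1^*)^2$ and the remainder is $O(t)=O(\delta)$ pointwise, uniformly on $\Sigma$.

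**Boundary terms and the two inequalities.** The boundary term at $t=0$ coming from the integration by parts is $-\tfrac12\partial_t\log\rho\big|_{t=0}\,|v(s,0)|^2 = H_1^*\,|v(s,0)|^2$ in the $\rho^{1/2}$ normalisation (adjusted accordingly for the $\rho^{1/4}$ one), which combines with the explicit $\big(\gamma + \tfrac{H_1^*}{2}\big)|u|^2$ boundary integral in $J$ so that the curvature pieces at $t=0$ cancel and one is left with exactly $\gamma\int_\Sigma |v(s,0)|^2\,\dd s$, as in the statement. For part (a), the hypothesis $u=0$ on $\partial\Omega_*^\delta\setminus\Sigma$ kills the boundary term at $t=\delta$, and one only needs the \emph{upper} bounds $g_t^{jk}\le (1+c\delta)g_0^{jk}$ and potential $\le H_2^* - \tfrac14(H_1^*)^2 + c\delta$, so the inequality goes through cleanly. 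For part (b) there is no vanishing hypothesis at $t=\delta$, so the integration by parts leaves a boundary term of the form $\tfrac12\partial_t\log\rho\big|_{t=\delta}\,|v(s,\delta)|^2$, which is bounded in absolute value by $c\,|v(s,\delta)|^2$ (again since curvatures are bounded and $\delta$ small) — this is precisely the $-c\int_\Sigma |v(s,\delta)|^2\,\dd s$ term allowed in the statement; one then uses the \emph{lower} bounds $g_t^{jk}\ge (1-c\delta)g_0^{jk}$ and potential $\ge H_2^* - \tfrac14(H_1^*)^2 - c\delta$.

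**Main obstacle.** The only genuinely delicate point is bookkeeping: getting the normalisation factor in $\Theta_\delta$ to match so that the curvature constant is exactly $H_2^* - \tfrac14(H_1^*)^2$ (not $2H_2^*$ or some other combination), and tracking which boundary term survives where. The analytic content — expanding $\rho = \prod_j(1-th_j^*)^2$, doing one integration by parts in $t$, and using compactness of $\Sigma$ for the uniform $O(\delta)$ remainders — is routine. I would organise the computation once for a general slice-dependent weight, isolate the $t=0$ and $t=\delta$ boundary contributions symbolically, and only at the end specialise and read off (a) and (b); the $\gamma$-uniformity is automatic since $\gamma$ enters only through the single boundary term at $t=0$ which is reproduced verbatim.
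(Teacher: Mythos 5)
Your proposal follows essentially the same route as the paper's proof: pull back by $\Phi_*$, bound the induced metric two-sidedly by $(1\pm c\delta)g$, conjugate by the square root of the Jacobian weight $\varphi=\prod_j(1-th_j^*)$, integrate by parts in $t$ to produce the potential $H_2^*-\tfrac14(H_1^*)^2+\cO(\delta)$ and the boundary terms at $t=0$ (cancelling the $\tfrac{H_1^*}{2}$ against the one in $J$) and at $t=\delta$ (absent in (a), bounded by $c|v(s,\delta)|^2$ in (b)). The only points you gloss over are minor and self-correcting: your symbolic potential $\tfrac14(\partial_t\log\rho)^2-\tfrac12\partial_t^2\log\rho$ carries a normalisation/sign slip (with $v=\varphi^{1/2}w$ the correct expression is $\tfrac{\partial_t^2\varphi}{2\varphi}-\tfrac{(\partial_t\varphi)^2}{4\varphi^2}$, which does evaluate to $H_2^*-\tfrac14(H_1^*)^2$ at $t=0$ as you state), and the weight also depends on $s$, so $\nabla_s v\neq\varphi^{1/2}\nabla_s w$ and one must absorb the resulting cross term via Cauchy--Schwarz and $|\nabla_s\varphi|=\cO(\delta)$ into the $\pm c\delta|v|^2$ errors, exactly as the paper does.
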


\begin{proof}
The metric $G$ on $\Pi_\delta$ induced by the map $\Phi_*$
is given by $G=g\circ (1- tW_*) + \dd t^2$, with $g$ being
the metric on $\Sigma$ induced by the embedding in $\RR^n$, and the volume form is
$\det G \dd s \dd t=\varphi \dd s \dd t$ with $\dd s$ being the volume form on $\Sigma$
and the weight
\begin{equation}
  \label{eq-varphi}
\varphi(s,t)=\prod\nolimits_{j=1}^{n-1}\big(1-t h^*_j(s)\big)=1+\sum\nolimits_{j\ge 1}(-t)^j H^*_{j}(s).
\end{equation}
Denote $w:=u\circ \Phi_*$, then the standard change of variables gives, for any $u\in H^1(\Omega_*^\delta)$,
\[
J(u)=\int_{\Pi_\delta} |\nabla w|^2\varphi\dd s \dd t
+\int_\Sigma\Big(\gamma+\dfrac{H_1^*}{2}\Big)\, \big|w(s,0)\big|^2\dd s,
\]
and we remark that the condition $u=0$ on $\partial \Omega_*^\delta\setminus\Sigma$ is equivalent to $w(\cdot,\delta)=0$.
Due to the above representation of the metric $G$, for a suitable fixed $c_0>0$ one can estimate, uniformly in $u$,
\[
(1-c_0\delta)|\nabla_s w|^2 + |\partial_t w|^2\le
|\nabla w|^2
\le (1+c_0\delta)|\nabla_s w|^2 + |\partial_t w|^2,
\]
with $\nabla_s$ being the gradient on $\Sigma$ (i.e. with respect to the variable $s$), which gives
\begin{multline}
  \label{eq15}
\int_{\Pi_\delta} \big((1-c_0\delta)|\nabla_s w|^2+|\partial_t w|^2\big)\varphi\dd s \dd t
+\int_\Sigma\Big(\gamma+\dfrac{H_1^*}{2}\Big)\, \big|w(s,0)\big|^2\dd s\\
\le J(u)
\le
\int_{\Pi_\delta} \big((1+c_0\delta)|\nabla_s w|^2+|\partial_t w|^2\big)\varphi\dd s \dd t
+\int_\Sigma\Big(\gamma+\dfrac{H_1^*}{2}\Big)\, \big|w(s,0)\big|^2\dd s.
\end{multline}
Recall that $w=\varphi^{-\frac{1}{2}} v$, and that $\varphi=1$ on $\Sigma$. Hence,
\[
\Big(\gamma+\dfrac{H_1^*}{2}\Big)\, \big|w(s,0)\big|^2=
\Big(\gamma+\dfrac{H_1^*}{2}\Big)\, \big|v(s,0)\big|^2,
\]
which allows to transform the last summand in \eqref{eq15}. In addition,
\[
|\nabla_s w|^2\varphi=\Big|\nabla_s v -\dfrac{1}{2\varphi}v \nabla_s \varphi\Big|^2=
|\nabla_s v|^2 +\dfrac{|v|^2}{4\varphi^2} |\nabla_s\varphi|^2 - \dfrac{1}{\varphi}
\Re \big(\langle \nabla_s v, v\nabla_s\varphi\rangle\big).
\]
The Cauchy-Schwarz inequality gives
$\big|\Re \langle \nabla_s v, v\nabla_s\varphi\rangle\big|\le \delta |\nabla_s v|^2+ |v|^2|\nabla_s \varphi|^2/\delta$,
and in view of the expression \eqref{eq-varphi}
for $\varphi$ one has $|\nabla_s \varphi|^2\le c_1\delta^2$ for some $c_1>0$ and all $t\in(0,\delta)$. Therefore, for a suitable $c_2>0$
one estimates, uniformly in $u$,
\[
(1-c_2\delta)|\nabla_s v|^2-c_2\delta |v|^2
\le (1\pm c_0\delta)|\nabla_s w|^2\varphi
\le 
(1+c_2\delta)|\nabla_s v|^2+c_2 \delta |v|^2.
\]
We represent now
\[
|\partial_t w|^2\varphi=\Big|\partial_t v -\dfrac{1}{2\varphi}v \,\partial_t \varphi\Big|^2
=|\partial_t v|^2 - \dfrac{\partial_t \varphi}{2\varphi} \partial_t \big(|v|^2\big)+\dfrac{(\partial_t \varphi)^2}{4\varphi^2} |v|^2
\]
and performing an integration by parts with respect to $t$ in the middle term we have
\begin{multline*}
\int_{\Pi_\delta} |\partial_t w|^2\varphi\dd s\dd t
=\int_{\Pi_\delta} \bigg(
|\partial_t v|^2+ \Big( \partial_t \Big(\dfrac{\partial_t \varphi}{2\varphi}\Big) +\dfrac{(\partial_t \varphi)^2}{4\varphi^2}\Big)
|v|^2
\bigg)\dd s \dd t\\
-\int_\Sigma \dfrac{H_1^*}{2} \big|v(s,0)\big|^2\dd s -\int_\Sigma \dfrac{(\partial_t\varphi)(s,\delta)}{2\varphi(s,\delta)}
\big|v(s,\delta)\big|^2\dd s,
\end{multline*}
while the last summand vanishes for $v(\cdot,\delta)=0$, i.e. for $u=0$ on $\partial \Omega_*^\delta\setminus\Sigma$.
Putting the above estimates together we obtain
\begin{align*}
J(u)&\le\int_{\Pi_\delta} \bigg((1+c_2\delta)|\nabla_s v|^2+|\partial_t v|^2
 +\Big(\dfrac{\partial^2_t \varphi}{2\varphi} -\dfrac{(\partial_t\varphi)^2}{4\varphi^2}+c_2\delta\Big)|v|^2\dd s\dd t\\
&\quad+\gamma\int_\Sigma\big|v(s,\delta)\big|^2\dd s, \quad
u\in H^1(\Omega_*^\delta), \quad u=0 \text{ on } \partial \Omega_*^\delta\setminus\Sigma,\\
J(u)&\ge \int_{\Pi_\delta} \bigg((1-c_2\delta)|\nabla_s v|^2+|\partial_t v|^2
 +\Big(\dfrac{\partial^2_t \varphi}{2\varphi} -\dfrac{(\partial_t\varphi)^2}{4\varphi^2}-c_2\delta\Big)|v|^2\dd s\dd t\\
&\quad+\gamma\int_\Sigma\big|v(s,0)\big|^2\dd s -\int_\Sigma \dfrac{(\partial_t\varphi)(s,\delta)}{2\varphi(s,\delta)}
\big|v(s,\delta)\big|^2\dd s, \quad u\in H^1(\Omega_*^\delta).
\end{align*}
It remains to estimate, with a suitable $c_3>0$,
\[
\Big\|\dfrac{(\partial_t\varphi)(\cdot,\delta)}{2\varphi(\cdot,\delta)}\Big\|_{L^\infty(\Sigma)}\le c_3,
\quad
\Big\|
\dfrac{\partial^2_t \varphi}{2\varphi} -\dfrac{(\partial_t\varphi)^2}{4\varphi^2} -\Big(H^*_2-\dfrac{(H^*_1)^2}{4}\Big)\Big\|_{L^\infty(\Sigma)}\le c_3 \delta
\]
and to choose $c:=\max\{c_2,c_3\}$.
\end{proof}

\section{Proof of Theorem~\ref{thm1a}}\label{sec-thm1}

We are going to show that $E_j(A_m^2)\to E_j(\Dsl^2)$ for each $j\in \NN$ as $m\to-\infty$.
Due to Lemma~\ref{lemld} for each $j\in\NN$ there holds $E_j(\Dsl^2)=E_j(L)$, hence, it is sufficient
to prove that
\begin{equation} \label{conv01}
E_j(L)=\lim_{m\to-\infty} E_j(A_m^2)\text{ for each $j\in\NN$}.
\end{equation}

\subsection{Dirichlet-Neumann bracketing}

For small $\delta>0$ denote
$\Omega_\delta:=\big\{x\in\Omega:\, \dist(x,\Sigma)<\delta\big\}$ and $\Pi_\delta:=\Sigma\times(0,\delta)$
and consider the diffeomorphisms $\Phi:\Pi_\delta\to \Omega_\delta$ given by $(s,t)\mapsto s-t\nu(s)$
together with the associated unitary maps
$\Theta_\delta: L^2(\Omega_\delta,\CC^N)\to L^2(\Pi_\delta,\CC^N)$, 
$\Theta_\delta u= \sqrt{\det (\Phi')}\,\, u\circ\Phi$.

Consider the self-adjoint operator $Z^+_m$ in $L^2(\Omega_\delta,\CC^N)$ given by
\begin{align}
  \label{loc-5}
Z^+_m[u,  u]& =\int_{\Omega_\delta} \big(|\nabla u|^2 +m^2|u|^2\big)\dd x +\int_\Sigma\Big(m+ \dfrac{H_1}{2}\Big)\, |u|^2\dd s,\\
\qdom(Z^+_m)&=\big\{u\in H^1(\Omega_\delta,\CC^N): u=\cB u \text{ on } \Sigma,
\quad u=0 \text{ on } \partial\Omega_\delta\setminus\Sigma\big\},\nonumber
\end{align}
the self-adjoint operator $Z^-_m$ in $L^2(\Omega_\delta,\CC^N)$ given by
\begin{align}
  \label{loc-6}
Z^-_m[u,  u]& =\int_{\Omega_\delta} \big(|\nabla u|^2 +m^2|u|^2\big)\dd x +\int_\Sigma\Big(m+ \dfrac{H_1}{2}\Big)\, |u|^2\dd s,\\
\qdom(Z^-_m)&=\big\{u\in H^1(\Omega_\delta,\CC^N): u=\cB u \text{ on } \Sigma\big\}, \nonumber
\end{align}
and the self-adjoint operator $Z'_m$ in $L^2(\Omega^c_\delta,\CC^N)$ given by
\[
Z'_m[u,  u] =\int_{\Omega^c_\delta} \big(|\nabla u|^2 +m^2|u|^2\big)\dd x, \quad
\qdom(Z'_m)=H^1(\Omega^c_\delta,\CC^N),
\]
with $\Omega_\delta^c := \Omega \setminus \overline{\Omega_\delta}$.
Due to the min-max principle for any $j\in\NN$ and we have the eigenvalue
inequality
\[
E_j(Z^-_m\oplus Z'_m)\le E_j(A_m^2)\le E_j(Z^+_m).
\]
(It is sufficient to apply Proposition~\ref{prop-incl}:
for the left inequality one takes
$T=A_m^2$, $T':=Z^-_m\oplus Z'_m$,
and $J:L^2(\Omega,\CC^N)\mapsto (f^-,f')\in L^2(\Omega_\delta,\Omega^c_\delta)$ defined by $f^-:=f|_{\Omega_\delta}$
and $f':=f|_{\Omega^c_\delta}$, while for the right inequality 
one takes $T:=Z^+_m$, $T':=A_m^2$ and $J:L^2(\Omega_\delta)\to L^2(\Omega)$  the extension by zero.)
Noting that $Z'_m\ge m^2$ we deduce that
\begin{equation}
 \label{loc-8}
E_j(Z^-_m)\le E_j(A_m^2)\le E_j(Z^+_m) \text{ for any $j\in\NN$ with $E_j(Z^+_m)< m^2$}.
\end{equation}
Using the change of coordinates of Lemma~\ref{lem7} to bound $Z^\pm_m[\Theta_\delta^*v,\Theta_\delta^*v]$ from above and below we then obtain
\[
E_j(Z^+_m)\le E_j(Y^+_m), \quad E_j(Z^-_m)\ge E_j(Y^-_m) \text{ for any $j\in\NN$}
\]
with $Y^\pm_m$ being the self-adjoint operators in $L^2(\Pi_\delta,\CC^N)$ given by
\begin{align*}
Y^+_m[v,v]&=\int_{\Pi_\delta} \bigg[(1+c\delta) |\nabla_s v|^2 + |\partial_t v|^2 + \Big(m^2+H_2 -\dfrac{H_1^2}{4}+c\delta\Big)
|v|^2 \bigg]\dd s\dd t\\
&\quad  +m\int_{\Sigma} \big|v(s,0)\big|^2\dd s,\\
\qdom(Y^+_m)&=\big\{v\in H^1(\Pi_\delta,\CC^N): v(\cdot ,0)=\cB v(\cdot,0) \text{ and } v(\cdot,\delta)=0\big\},\\
Y^-_m[v,v]&=\int_{\Pi_\delta} \bigg[(1-c\delta) |\nabla_s v|^2 + |\partial_t v|^2 + \Big(m^2+H_2 -\dfrac{H_1^2}{4}-c\delta\Big)
|v|^2 \bigg]\dd s\dd t\\
&\quad+m\int_{\Sigma} \big|v(s,0)\big|^2\dd s-c \int_{\Sigma} \big|v(s,\delta)\big|^2\dd s,\\
\qdom(Y^-_m)&=\big\{v\in H^1(\Pi_\delta,\CC^N): v(\cdot ,0)=\cB v(\cdot,0)\big\},
\end{align*}
where $c$ is independent of $\delta\in(0,\delta_0)$ and $m\in \RR$ is arbitrary. Therefore, we arrive at the two-sided estimate
\begin{equation}
 \label{loc-9}
E_j(Y^-_m)\le E_j(A_m^2)\le E_j(Y^+_m) \text{ for any $j\in\NN$ with $E_j(Y^+_m)< m^2$}.
\end{equation}

\subsection{Upper bound}
To obtain an upper bound for the eigenvalues of $Y^+_m$ let us consider the self-adjoint operator $S$ in $L^2(0,\delta)$ with
\[
S[f,f]=\int_{0}^\delta |f'|^2\dd t + m\big|f(0)\big|^2, \quad \qdom(S)=\big\{f\in H^1(0,\delta):\, f(\delta)=0\big\}
\]
and let $\psi$ be an eigenfunction for the first eigenvalue normalized by $\|\psi\|^2_{L^2(0,\delta)}=1$. The analysis of Lemma~\ref{lem1dd} shows that
for some $b>0$ one has $E_1(S)\le-m^2+be^{-\delta|m|}$ as $(-m)$ is large, and then
$S[\psi,\psi] + m^2\le be^{-\delta|m|}$.

Let $c>0$ be the same as in the above expressions for $Y^\pm_m$. For small $a\in\RR$, let $L_a$ be the self-adjoint operator in $\cH$ given by 
\begin{equation}
  \label{lavv}
	\begin{aligned}
L_a[g,g]&=\int_\Sigma \Big[(1+ca)|\nabla g|^2 +\Big(H_2-\dfrac{H_1^2}{4}+ca\Big) |g|^2\Big]\dd s,\\
\qdom(L_a)&=H^1(\Sigma,\CC^N)\cap \cH.
\end{aligned}
\end{equation}
Remark that for $a=0$ we recover exactly the operator $L$ and that due to the min-max principle one has
\begin{equation}
  \label{conv03}
	E_j(L)=\lim_{a\to 0} E_j(L_a) \text{ for each $j\in\NN$.}
\end{equation}
Let $j\in\NN$ be fixed and  $g_1,\dots,g_j$ be linearly independent eigenfunctions of $L_\delta$ for the first $j$ eigenvalues, then
the subspace $G:=\vspan(g_1,\dots,g_j)$ is $j$-dimensional and $L_\delta[g,g]/\|g\|^2_\cH\le E_j(L_\delta)$ for any $0\ne g\in G$.
Consider the subspace
\[
V=\{v \in L^2(\Pi_\delta,\CC^N): \,v(s,t)=g(s)\psi(t), \  g\in G\}\subset \qdom (Y^+_m),
\]
then for $v\in V$ with $v(s,t)=g(s)\psi(t)$ and $g\in G$ one has $\|v\|^2_{L^2(\Pi_\delta,\CC^N)}=\|g\|^2_\cH$
and
\begin{multline*}
Y^+_m[v,v]=L_\delta[g,g] \|\psi\|^2_{L^2(0,\delta)} + \Big(S[\psi,\psi] + m^2\|\psi\|^2_{L^2(0,\delta)}\Big)\|g\|^2_\cH\\
\le L_\delta[g,g] + be^{-\delta|m|}\|g\|^2_\cH\le \big( E_j(L_\delta) + be^{-\delta|m|}\big)\|g\|^2_\cH\\
\equiv \big( E_j(L_\delta) + be^{-\delta|m|}\big)\|v\|^2_{L^2(\Pi_\delta,\CC^N)}.
\end{multline*}
As $\dim V=\dim G=j$, it follows by the min-max principle that
\[
E_j(Y^+_m)\le \sup_{0\ne v\in V}\dfrac{Y^+_m[v,v]}{\|v\|^2_{L^2(\Pi_\delta,\CC^N)}}\le E_j(L_\delta) + be^{-\delta|m|},
\]
hence, $\limsup_{m\to-\infty}E_j(Y^+_m)\le E_j(L_\delta)$. As $\delta>0$ can be chosen arbitrarily small, the convergence \eqref{conv03}
implies  $\limsup_{m\to-\infty}E_j(Y^+_m)\le E_j(L)$, and then due to the upper bound \eqref{loc-9} we arrive at
\begin{equation}
  \label{conv04}
	\limsup_{m\to-\infty}E_j(A_m^2)\le E_j(L).
\end{equation}

\subsection{Lower bound}
Now let us pass to a lower bound for $E_j(Y^-_m)$. In the constructions below, the constant $c>0$ is the same as in the expression for $Y^-_m$.
Let $S'$ be the self-adjoint operator in $L^2(0,\delta)$ with
\[
S'[f,f]=\int_0^\delta |f'|^2\dd t +m \big|f(0)\big|^2 - c \big|f(\delta)\big|^2,
\quad
\qdom (S')=H^1(0,\delta).
\]
Let $\psi_k\in L^2(0,\delta)$ with $k\in\NN$ be real-valued eigenfunctions of $S'$ for the eigenvalues $E_k(S')$ forming an orthonormal basis in $L^2(0,\delta)$, 
which induces the unitary transforms $\Theta:L^2(0,\delta)\to \ell^2(\NN)$ given by $(\Theta f)_k=\langle \psi_k,f\rangle_{L^2(0,\delta)}$, $k\in\NN$.
Recall that due to the analysis of Lemma~\ref{lem1dr} we have, with some $b^\pm>0$, $b>0$ and $b_0>0$,
\begin{gather}
    \label{conv05}
E_1(S')\ge -m^2-b e^{-\delta|m|} \text{ as } m\to-\infty, \\
    \label{conv06}
b^- k^2- b_0\le E_k(S')\le b^+ k^2 \text{ for all $k\ge 2$ and $m\in \RR$.}
\end{gather}

Let us give some more details on the subsequent constructions. Let $Y_m$ be the self-adjoint
operator whose sesquilinear form is given by the same expression as the one for $Y^-_m$ but on the larger form domain $\qdom(Y_m)=H^1(\Pi_\delta,\CC^N)$.
It follows easily that the new operator $Y_m$ admits a separation of variables. Namely, for small $a\in\RR$
we consider the self-adjoint operator $\Lambda_a$ in $L^2(\Sigma,\CC^N)$ given by
\[
\Lambda_a[g,g]=\int_\Sigma \Big[(1+ca)|\nabla g|^2 +\Big(H_2-\dfrac{H_1^2}{4}+ca\Big) |g|^2\Big]\dd s,
\quad
\qdom(\Lambda_a)=H^1(\Sigma,\CC^N),
\]
i.e. its sesquilinear form is given by the same expression as the one for $L_a$ in \eqref{lavv}
but without the restriction $g\in\cH$.
Now, if one identifies $L^2(\Pi_\delta,\CC^N)=L^2(0,\delta)\otimes L^2(\Sigma,\CC^N)$, then
$Y_m=(S'+m^2)\otimes 1 + 1\otimes \Lambda_{-\delta}$. Using the unitary transform
\begin{gather*}
\Xi:L^2(\Pi_\delta)\to \ell^2(\NN)\otimes L^2(\Sigma,\CC^N),\\
\Xi v=(v_k), \quad v_k:=\int_0^\delta \psi_k(t)v(t,\cdot)\dd t\in L^2(\Sigma,\CC^N),
\end{gather*}
and the spectral theorem we see that the operator $\widehat Y_m:=\Xi Y_m\Xi^*$ is given by
\[
\widehat Y_m\big[(v_k),(v_k)\big]=\sum_{k\in\NN} \Big( \Lambda_{-\delta}[v_k,v_k] +\big(E_k(S')+m^2\big)\|v_k\|^2_{L^2(\Sigma,\CC^N)}\Big),
\]
while  the form domain $\qdom(\widehat Y_m)$ consists of all $(v_k)\in \ell^2(\NN)\otimes L^2(\Sigma,\CC^N)$ with
$v_k\in H^1(\Sigma,\CC^N)$ such that the right-hand side of the preceding expression is finite.
Using the two-sided estimate \eqref{conv06} we can rewrite
\begin{multline}
\qdom(\widehat Y_m)=\Big\{ (v_k)\in \ell^2(\NN)\otimes L^2(\Sigma,\CC^N):\ 
v_k\in H^1(\Sigma,\CC^N) \text{ for each $k\in \NN$}\\
\text{and } \sum_{k\in\NN} \Big(\|v_k\|^2_{H^1(\Sigma,\CC^N)}+k^2\|v_k\|^2_{L^2(\Sigma,\CC^N)}\Big)<\infty\Big\}.
   \label{qdomy0}
\end{multline}
As the sesquilinear form for $Y^-_m$ is simply the restriction of that for $Y_m$ on the functions $v$
with  $v(\cdot,0)=\cB v(\cdot,0)$, for the operator $\widehat Y^-_m:=\Xi Y^-_m \Xi^*$ we have
\begin{equation}
   \label{qdomy1}
\qdom(\widehat Y^-_m)=\big\{ \widehat v=(v_k)\in \qdom(\widehat Y_m): (1-\cB) (\Xi^*\widehat v)(\cdot,0)=0\big\}.
\end{equation}
Using the lower bounds \eqref{conv05} and \eqref{conv06} for $E_k(S')$, for all $\widehat v=(v_k)\in \qdom(\widehat Y^-_m)$
we obtain the inequality $\widehat Y^-_m[\widehat v,\widehat v]\ge w_m(\widehat v,\widehat v)$
with the sesquilinear form $w_m$ defined on $\dom (w_m):=\qdom(\widehat Y^-_m)$ by
\begin{multline*}
w_m(\widehat v,\widehat v):=\Lambda_{-\delta}[v_1,v_1]-b e^{-\delta|m|}\|v_1\|^2_{L^2(\Sigma,\CC^N)}\\
+\sum_{k\ge 2} \Big( \Lambda_{-\delta}[v_k,v_k] + (b^-k^2-b_0 +m^2)\|v_k\|^2_{L^2(\Sigma,\CC^N)}\Big).
\end{multline*}
It follows from representation \eqref{qdomy0} that
the form $w_m$  is lower semibounded and from reprentation \eqref{qdomy1} that it is closed. Thus, it defines a self-adjoint operator $W_m$
in $\ell^2(\NN)\otimes L^2(\Sigma,\CC^N)$ with compact resolvent. For any $j\in\NN$ we have then
\begin{equation}
   \label{conv07a}
E_j(A_m^2)\ge E_j(Y^-_m)=E_j(\widehat Y^-_m)\ge E_j(W_m).
\end{equation}
We are now in the classical situation for the monotone convergence (Proposition~\ref{prop-mon}) to analyze the eigenvalues of $W_m$.
Namely, consider the set
\begin{equation}
  \label{conv08}
\cQ_\infty:=\Big\{\widehat v=(v_k)\in \bigcap_{m<0} \qdom (W_m)\equiv \qdom(\widehat Y^-_m):\quad \sup_{m<0} W_m[\widehat v,\widehat v]<+\infty\Big\}.
\end{equation}
It is easily seen that a vector $\widehat v=(v_k)\in \qdom(\widehat Y^-_m)$ belongs to $\cQ_\infty$ if and only if
$v_k=0$ for $k\ge 2$ and $0=(1-\cB) (\Xi^*\widehat v)(\cdot,0)\equiv\psi_1(0)(1-\cB)v_1 $, i.e. $v_1\in\cH$. This gives the equality
\[
\cQ_\infty=\big\{ \widehat v=e_1\otimes v_1:\, v_1\in H^1(\Sigma,\CC^N)\cap \cH\}, \quad e_1=(1,0,0,\dots)\in \ell^2(\NN).
\]
For each $\widehat v\in \cQ_\infty$ one has
\begin{align*}
\lim_{m\to-\infty} W_m[\widehat v,\widehat v]&=\lim_{m\to-\infty}  \Big(\Lambda_{-\delta}[v_1,v_1]-c_1 e^{-\delta|m|}\|v_1\|^2_{L^2(\Sigma,\CC^N)}\Big)\\
&=\Lambda_{-\delta}[v_1,v_1]\equiv L_{-\delta}[v_1,v_1];
\end{align*}
we recall that $L_a$ was defined in \eqref{lavv}.
Let $W_\infty$ be the self-adjoint operator in the Hilbert space $\cH_\infty:=e_1\otimes \cH$ with
$\qdom(W_\infty)=\cQ_\infty$ and $W_\infty[e_1\otimes v_1,e_1\otimes v_1]=L_{-\delta}[v_1,v_1]$,
then the monotone convergence principle (Proposition~\ref{prop-mon})
gives $\lim_{m\to-\infty} E_j(W_m)=E_j(W_\infty)$ for each $j\in\NN$. On the other hand,
the operator $W_\infty$ is unitarily equivalent to $L_{-\delta}$, and by combining with \eqref{conv07a} we have
$\liminf_{m\to-\infty}E_j(A_m)\ge E_j(L_{-\delta})$. As $\delta$ can be arbitrarily small, the convergence \eqref{conv03}
implies $\liminf_{m\to-\infty}E_j(A_m)\ge E_j(L)$. In combination with the upper bound \eqref{conv04}
one arrives at the sought limit \eqref{conv01}, which proves Theorem~\ref{thm1a}.

\section{Proof of Theorem~\ref{thm2}}\label{sec-thm2}

\subsection{Preliminary estimates}
We are going to prove that for each $m\in\RR$ and $j\in\NN$ one has
$\lim_{M\to+\infty}E_j(B_{m,M}^2)=E_j(A_m^2)$. We recall that $\qdom(B_{m,M}^2)\equiv \dom(B_{m,M})=H^1(\RR^n,\CC^N)$, and
\begin{equation}
  \label{qform3}
\begin{aligned}	
B_{m,M}^2[u, u]&\equiv \langle B_{m,M} u, B_{m,M} u\rangle_{L^2(\RR^n,\CC^N)}\\
&=\int_{\Omega} \big(|\nabla u|^2 +m^2|u|^2\big)\dd x
+ \int_{\Omega^c} \big(|\nabla u|^2 +M^2|u|^2\big)\dd x\\
&\qquad+(M-m)\Big(\int_\Sigma | \cP_- u|^2\dd s -\int_\Sigma | \cP_+ u|^2\dd s\Big),\\
&=\int_{\Omega} \big(|\nabla u|^2 +m^2|u|^2\big)\dd x 
+ \int_{\Omega^c} \big(|\nabla u|^2 +M^2|u|^2\big)\dd x\\
&\qquad+2(M-m)\int_\Sigma | \cP_- u|^2\dd s + (m-M)\int_\Sigma |u|^2\dd s
\end{aligned}
\end{equation}
where $\cP_\pm(s):=\dfrac{1 \pm \cB(s)}{2}$ for $s\in\Sigma$, while
\begin{gather*}
\qdom(A_{m}^2)\equiv \dom(A_{m})=\big\{u\in H^1(\Omega,\CC^N):\ \cP_- u=0\text{ on }\Sigma \big\},\\
A_m^2[u,u]\equiv\langle A_m u, A_m u\rangle_{L^2(\Omega,\CC^N)}
=\int_{\Omega} \big(|\nabla u|^2 +m^2|u|^2\big)\dd x +\int_\Sigma\Big(m+ \dfrac{H_1}{2}\Big)\, |u|^2\dd s.
\end{gather*}
Taking any $\varepsilon\in\RR$ we rewrite the above expression for $B_{m,M}^2[u,u]$ as
\begin{multline}
B_{m,M}^2[u, u]=\int_{\Omega} \big(|\nabla u|^2 +m^2|u|^2\big)\dd x\\
 +\int_\Sigma \Big(m-\varepsilon+\dfrac{H_1}{2}\Big)|u|^2\dd s + 2(M-m)\int_\Sigma | \cP_- u|^2\dd s\\
 +\int_{\Omega^c} \big(|\nabla u|^2 +M^2|u|^2\big)\dd x -\int_\Sigma \Big(M-\varepsilon+\dfrac{H_1}{2}\Big)|u|^2\dd s.
  \label{eq-bmm1}
\end{multline}
Let us start with an additional estimate which will allow us to  control the term in the last line of \eqref{eq-bmm1}.

\begin{lemma}\label{lem11}
For $\gamma>0$ let $R_\gamma$ be the self-adjoint operator in $L^2(\Omega^c)$ given by
\begin{equation}
R_\gamma[u,u]=\int_{\Omega^c} |\nabla u|^2 \dd x -\int_{\Sigma} \Big(\gamma+\dfrac{H_1}{2}\Big) |u|^2\dd s,
\quad
\qdom(R_\gamma)=H^1(\Omega^c),
\end{equation}
then:
\begin{itemize}
\item[(a)] For some fixed $C>0$ and all large $\gamma>0$ there exists a linear map $F_\gamma: H^1(\Sigma)\to H^1(\Omega^c)$ such that
for all $f\in H^1(\Sigma)$ one has $F_\gamma f = f$ on $\Sigma$ and
\[
R_\gamma[F_\gamma f,F_\gamma f]+\gamma^2\|F_\gamma f\|_{L^2(\Omega^c)}^2\le \dfrac{C}{\gamma}\|f\|^2_{H^1(\Sigma)}.
\]
\item[(b)] For some $C_0>0$ there holds $E_1(R_\gamma)\ge -\gamma^2-C_0$ for $\gamma\to+\infty$.
\end{itemize}
\end{lemma}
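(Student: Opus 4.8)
Both parts are proved in the exterior collar of $\Sigma$, i.e.\ with $\Omega_*=\Omega^c$ in the notation of Subsection~\ref{sec-curv}; recall that then $\nu_*=-\nu$, $H_1^*=-H_1$, $H_2^*=H_2$. Fix a small $\delta_0>0$ as in Lemma~\ref{lem7}, write $\Omega^c_{\delta_0}:=\{x\in\Omega^c:\dist(x,\Sigma)<\delta_0\}$ for the collar, and let $\Theta_{\delta_0}$, $\Pi_{\delta_0}$ be the associated objects. The key observation is that if $u\in H^1(\Omega^c)$ is supported in $\overline{\Omega^c_{\delta_0}}$ with zero trace on $\{\dist(\cdot,\Sigma)=\delta_0\}$, then, since $H_1^*=-H_1$, one has \emph{exactly} $R_\gamma[u,u]=J_{-\gamma}(u)$ with $J$ as in Lemma~\ref{lem7}. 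Hence Lemma~\ref{lem7}, applied with its parameter equal to $-\gamma$, transports the problem to the flat cylinder $\Pi_{\delta_0}$ with a precise control of the metric weight $\varphi$.

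\textbf{Part (a).} The plan is to extend $f$ by tensoring with the one-dimensional profile solving the model eigenvalue equation $\chi''=\gamma^2\chi$, namely $\chi_\gamma(t):=\sinh(\gamma(\delta_0-t))/\sinh(\gamma\delta_0)$, which satisfies $\chi_\gamma(0)=1$, $\chi_\gamma(\delta_0)=0$, $\chi_\gamma'(0)=-\gamma\coth(\gamma\delta_0)$. Set $F_\gamma f:=\Theta_{\delta_0}^{-1}(\chi_\gamma\otimes f)$, extended by $0$ to $\Omega^c\setminus\Omega^c_{\delta_0}$; this lies in $H^1(\Omega^c)$ since $\chi_\gamma(\delta_0)=0$, it is linear in $f$, and $F_\gamma f=f$ on $\Sigma$ because $\varphi\equiv1$ on $\Sigma$ (see \eqref{eq-varphi}). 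Inserting $v(s,t)=\chi_\gamma(t)f(s)$ into the upper bound of Lemma~\ref{lem7}(a) and separating variables bounds $R_\gamma[F_\gamma f,F_\gamma f]+\gamma^2\|F_\gamma f\|^2_{L^2(\Omega^c)}$ by $(1+c\delta_0)\|\chi_\gamma\|^2_{L^2(0,\delta_0)}\|\nabla f\|^2_{L^2(\Sigma)}+\big(\|\chi_\gamma'\|^2_{L^2(0,\delta_0)}+\gamma^2\|\chi_\gamma\|^2_{L^2(0,\delta_0)}-\gamma+C'\|\chi_\gamma\|^2_{L^2(0,\delta_0)}\big)\|f\|^2_{L^2(\Sigma)}$, where $C'$ bounds $|H_2-\tfrac14H_1^2|+c\delta_0$ on $\Sigma$. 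A one-line integration by parts gives $\|\chi_\gamma'\|^2_{L^2(0,\delta_0)}+\gamma^2\|\chi_\gamma\|^2_{L^2(0,\delta_0)}=\gamma\coth(\gamma\delta_0)$, and an elementary computation gives $\|\chi_\gamma\|^2_{L^2(0,\delta_0)}=\tfrac1{2\gamma}\coth(\gamma\delta_0)-\tfrac{\delta_0}{2}\sinh^{-2}(\gamma\delta_0)=\tfrac1{2\gamma}+O(e^{-2\gamma\delta_0})$. Hence for $\gamma$ large the coefficient of $\|f\|^2_{L^2(\Sigma)}$ equals $\gamma(\coth(\gamma\delta_0)-1)+O(1/\gamma)=O(1/\gamma)$ and that of $\|\nabla f\|^2_{L^2(\Sigma)}$ is $\le(1+c\delta_0)/\gamma$, which yields the assertion with $C$ depending only on $\Sigma$ and $\delta_0$.

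\textbf{Part (b).} The plan is an IMS localization. Choose real cutoffs with $\eta_1^2+\eta_2^2\equiv1$, $\eta_1\equiv1$ on a smaller collar, $\operatorname{supp}\eta_1\subset\Omega^c_{\delta_0}$ and $\eta_1$ vanishing near $\{\dist(\cdot,\Sigma)=\delta_0\}$, so $\eta_2\equiv0$ near $\Sigma$. The localization identity gives $R_\gamma[u,u]\ge R_\gamma[\eta_1u,\eta_1u]+\|\nabla(\eta_2u)\|^2_{L^2(\Omega^c)}-C_{\mathrm{loc}}\|u\|^2_{L^2(\Omega^c)}$ with $C_{\mathrm{loc}}:=\||\nabla\eta_1|^2+|\nabla\eta_2|^2\|_{L^\infty}$ (the boundary integral only involves $\eta_1u$ since $\eta_1\equiv1$ on $\Sigma$), and the middle term is $\ge0$. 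For the collar part, Lemma~\ref{lem7}(b) with parameter $-\gamma$, after discarding the nonnegative tangential term and bounding $H_2-\tfrac14H_1^2-c\delta_0\ge-C_2$, gives $R_\gamma[\eta_1u,\eta_1u]\ge\int_{\Pi_{\delta_0}}|\partial_tv|^2\dd s\dd t-C_2\|v\|^2_{L^2(\Pi_{\delta_0})}-\gamma\int_\Sigma|v(s,0)|^2\dd s$ with $v:=\Theta_{\delta_0}(\eta_1u)$ (the $v(\cdot,\delta_0)$ term from Lemma~\ref{lem7}(b) is absent since $\eta_1u$ vanishes there). For each fixed $s$, since $v(s,\delta_0)=0$ one has $|v(s,0)|^2=-\int_0^{\delta_0}\partial_t|v(s,t)|^2\dd t\le\tfrac1\gamma\int_0^{\delta_0}|\partial_tv|^2\dd t+\gamma\int_0^{\delta_0}|v|^2\dd t$ by Young's inequality; integrating over $s$ and substituting, the $\int|\partial_tv|^2$ terms cancel and $R_\gamma[\eta_1u,\eta_1u]\ge-(\gamma^2+C_2)\|v\|^2_{L^2(\Pi_{\delta_0})}=-(\gamma^2+C_2)\|\eta_1u\|^2_{L^2(\Omega^c)}$. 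Combining, $R_\gamma[u,u]\ge-(\gamma^2+C_2+C_{\mathrm{loc}})\|u\|^2_{L^2(\Omega^c)}$, i.e.\ $E_1(R_\gamma)\ge-\gamma^2-C_0$ with $C_0:=C_2+C_{\mathrm{loc}}$.

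\textbf{Main obstacle.} The only real subtlety is obtaining the \emph{sharp} coefficient $-\gamma^2$ rather than a multiple of it. In (b) this forces one to keep solely the normal derivative $|\partial_tv|^2$ in Young's inequality and to use the weight-precise form of Lemma~\ref{lem7}(b), so that no constant sneaks in front of $\gamma^2$; in (a) it forces the profile $\chi_\gamma$ to be the exact model eigenfunction (compare Lemma~\ref{lem1dd}), which is what makes $\|\chi_\gamma'\|^2_{L^2(0,\delta_0)}+\gamma^2\|\chi_\gamma\|^2_{L^2(0,\delta_0)}-\gamma$ exponentially small. The rest is routine bookkeeping with the tubular change of variables of Subsection~\ref{sec-curv}.
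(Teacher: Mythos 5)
Your proposal is correct. For part~(a) you take essentially the same route as the paper: extend $f$ into the exterior collar by tensoring with a one-dimensional profile vanishing at the far end and equal to $1$ on $\Sigma$, then apply Lemma~\ref{lem7}(a). The one cosmetic difference is the choice of profile: the paper uses the exact first eigenfunction $\psi$ of the model operator $S$ (normalized by $\psi(0)=1$) and invokes the spectral estimates $E_1(S)\le-\gamma^2+b$, $\|\psi\|^2\le b/\gamma$ from Lemma~\ref{lem1dd}, whereas you take the closed-form function $\chi_\gamma(t)=\sinh(\gamma(\delta_0-t))/\sinh(\gamma\delta_0)$ (which solves $\chi''=\gamma^2\chi$ but is not the exact eigenfunction, since the latter has decay rate $k$ solving $k\coth(k\delta_0)=\gamma$) and verify by direct computation that $\|\chi_\gamma'\|^2+\gamma^2\|\chi_\gamma\|^2-\gamma=\gamma(\coth(\gamma\delta_0)-1)$ is exponentially small. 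Either choice delivers the sharp $-\gamma^2$ and the $O(1/\gamma)$ remainder.

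For part~(b) you take a genuinely different decoupling. The paper uses Neumann bracketing: it drops the matching condition across the inner boundary of the collar, obtaining $E_1(R_\gamma)\ge E_1(R_\gamma^0\oplus R_\gamma')$ with $\qdom(R_\gamma^0)=H^1(\Omega^c_\delta)$ (free boundary at the inner end), and must then invoke Lemma~\ref{lem1dr} for the one-dimensional model $S'$ on $H^1(0,\delta)$ with Robin data at both endpoints. You instead run an IMS localization with a partition $\eta_1^2+\eta_2^2\equiv1$, picking up an $O(1)$ error $C_{\mathrm{loc}}$; the cutoff $\eta_1 u$ then \emph{vanishes} at the far collar boundary, which lets you replace Lemma~\ref{lem1dr} by the elementary one-line bound $|v(s,0)|^2\le\gamma^{-1}\int_0^{\delta_0}|\partial_t v|^2\dd t+\gamma\int_0^{\delta_0}|v|^2\dd t$ for $v(\cdot,\delta_0)=0$, which cancels the normal derivative term exactly and leaves the sharp coefficient $-\gamma^2$. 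Your route is slightly more self-contained (no appeal to Lemma~\ref{lem1dr}, at the cost of a cutoff and the IMS error term), while the paper's Neumann bracketing avoids smooth cutoffs but requires the more general spectral estimate. Both yield $E_1(R_\gamma)\ge-\gamma^2-C_0$, which is all that is needed.
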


\begin{proof}
For a small $\delta>0$ consider the sets $\Omega^c_\delta:=\big\{x\in\Omega^c:\, \dist(x,\Sigma)<\delta\big\}$ and $\Pi_\delta:=\Sigma\times(0,\delta)$
together with the the diffeomorphisms $\Phi^c:\Pi_\delta\to \Omega^c_\delta$ given by $\Phi^c(s,t)\mapsto s+t\nu(s)$
and the associated unitary maps $\Theta^c_\delta: L^2(\Omega^c_\delta)\to L^2(\Pi_\delta)$ with $\Theta^c_\delta u= \sqrt{\det \big((\Phi^c)'\big)}\,\, u\circ \Phi^c$.

Let us prove (a). Consider the self-adjoint operator $S$ in $L^2(0,\delta)$ given by
\[
S[f,f]=\int_{0}^\delta |f'|^2\dd t -\gamma\big|f(0)\big|^2, \quad \qdom(S)=\big\{f\in H^1(0,\delta):\, f(\delta)=0\big\}
\]
and let $\psi$ be an eigenfunction for the first eigenvalue normalized by $\psi(0)=1$. By Lemma~\ref{lem1dd}, with some
$b>0$ one has $E_1(S)\le-\gamma^2+b$ and $\|\psi\|^2_{L^2(0,\delta)}\le b/\gamma$ as $\gamma$ is large.
For $f\in H^1(\Sigma)$ define $v\in H^1(\Pi_\delta)$ by $v=f\otimes\psi$, i.e. $v(s,t)=f(s) \psi(t)$,
and then set
\[
(F_\gamma f)(x):=\begin{cases}
(\Theta^c_\delta)^{-1} v & \text{ in } \Omega^c_\delta,\\
0 &\text{ in } \Omega^c\setminus \Omega^c_\delta.
\end{cases}
\]
Due to $f\in H^1(\Sigma)$ and $\psi(\delta)=0$ one has $F_\gamma f\in H^1(\Omega^c)$, and the equality
$F_\gamma f|_\Sigma=v(\cdot,0)=f$ holds by construction. Furthermore,
using the result and the notation of Lemma~\ref{lem7}(a) we obtain, with some $a>0$,
\begin{multline*}
R_\gamma[F_\gamma f,F_\gamma f] + \gamma^2 \|F_\gamma f\|^2=J_{-\gamma}(F_\gamma f)+ \gamma^2\|F_\gamma f\|^2\\
\begin{aligned}
&\le \int_{\Pi_\delta}\Big( a|\nabla_s v|^2 + |\partial_t v|^2 + (\gamma^2+a)|v|^2\big)\dd s\dd t -\gamma\int_\Sigma |F_\gamma f|^2\dd s\\
&=\int_{\Sigma} \Big(a|\nabla_s f|^2\dd s + \big(E_1(S) +\gamma^2 +a\big) \|f\|^2_{L^2(\Sigma)}\Big)\dd s\, \|\psi\|^2_{L^2(0,\delta)}\\
&\le \int_{\Sigma}\Big(a|\nabla_s f|^2\dd s + \big(be^{-\delta \gamma} +a\big) \|f\|^2_{L^2(\Sigma)}\Big)\dd s \, \dfrac{b}{\gamma}\le \dfrac{C}{\gamma} \|f\|^2_{H^1(\Sigma)}
\end{aligned}
\end{multline*}
with $C:= b\big(b +a\big)$. Hence, the assertion (a) is proved.

To prove (b) we remark first that due to the min-max principle
one has the inequality $E_1(R_\gamma)\ge E_1(R^0_\gamma\oplus R'_\gamma)$ where $R^0_\gamma$ is the operator in $L^2(\Omega^c_\delta)$ given by
\[
R^0_\gamma[u,u]=\int_{\Omega^c_\delta} |\nabla u|^2 \dd x -\int_{\Sigma} \Big(\gamma+\dfrac{H_1}{2}\Big) |u|^2\dd s,
\quad
\qdom(R^0_\gamma)=H^1(\Omega^c_\delta),
\]
and $R'_\gamma$ is the self-adjoint operator in $L^2(\Omega'_\delta)$, with $\Omega'_\delta:=\Omega^c\setminus \overline{\Omega^c_\delta}$, given by
\[
R'_\gamma[u,u]=\int_{\Omega'_\delta} |\nabla u|^2 \dd x, \quad \qdom(R'_\gamma)=H^1(\Omega_\delta').
\]
Due to $R'_\gamma\ge 0$ one has $E_1(R_\gamma)\ge \min\{E_1(R^0_\gamma),0\}$. By Lemma~\ref{lem7}(b) one has
$E_1(R^0_\gamma)\ge E_1(X_\gamma)$ with $R$ being the self-adjoint operator in $L^2(\Pi_\delta)$
with
\[
X_\gamma[v,v]= \int_{\Pi_\delta} \bigg[a' |\nabla_s v|^2 + |\partial_t v|^2 -a' |v|^2 \bigg]\dd s\dd t\\
 -\gamma\int_{\Sigma} \big|v(s,0)\big|^2\dd s-a'\int_{\partial U} \big|v(s,\delta)\big|^2\dd s
\]
and $\qdom(X_\gamma)=H^1(\Omega^c_\delta)$, with some $a'>0$. Let $S'$ be the self-adjoint operator
in $L^2(0,\delta)$ given by
\[
S'[f,f]=\int_0^\delta |f'|^2\dd t -\gamma \big|f(0)\big|^2 - a' \big|f(\delta)\big|^2,
\quad
\qdom(S')=H^1(0,\delta).
\]
As $|\nabla_s v|^2\ge 0$, due to Fubini's theorem one has $E_1(X_\gamma)\ge E_1(S')-a'$, and
now it is sufficient to remark that by Lemma~\ref{lem1dr} one has $E_1(S')\ge -\gamma^2-a_0$ with some $a_0>0$ as $\gamma\to+\infty$.
\end{proof}

\subsection{Upper bound}

Pick $m\in\RR$ and $j\in\NN$, and let $u_1,\dots,u_j$ be linearly independent eigenfunctions of $A_m^2$ for the first $j$ eigenvalues, then
for any function $u\in V:=\vspan(u_1,\dots,u_j)$ there holds $A_m^2[u,u]\le E_j(A_m^2)\|u\|^2_{L^2(\Omega,\CC^N)}$. Recall that
due to Lemma~\ref{qfa} one has $V\subset C^\infty(\overline{\Omega},\CC^N)$, and then
\[
a:=\sup\big\{ \|u\|^2_{H^1(\Sigma,\CC^N)}: u\in V \text{ with } \|u\|^2_{L^2(\Omega,\CC^N)}=1\big\}<\infty.
\]
Using the linear map $F_\gamma$ as in Lemma~\ref{lem11}(a),
for  $u\in V$ define $\widetilde u \in H^1(\RR^n,\CC^N)$ by
\[
\widetilde u=\begin{cases}
u & \text{ in } \Omega,\\
(F_M\otimes 1)(u|_\Sigma) & \text{ in } \Omega^c.
\end{cases}
\]
with $1$ understood as the identity operator in $\CC^N$, then for any $u\in V$
we have
\begin{multline*}
\int_{\Omega^c} \big(|\nabla \widetilde u|^2 +M^2|\widetilde u|^2\big)\dd x -\int_\Sigma \Big(M+\dfrac{H_1}{2}\Big)| \widetilde u|^2\dd s\\
\equiv \Big((R_M+M^2)\otimes 1\Big)[\widetilde u,\widetilde u]\le \dfrac{C}{M} \|u\|^2_{H^1(\Sigma,\CC^N)}\le \dfrac{C a}{M} \|u\|^2_{L^2(\Omega,\CC^N)}
\end{multline*}
with $C>0$ independent of $u$.
Noting that for $u\in V$ we have $\cP_- u=0$ on $\Sigma$ and substituting the preceding upper bound into \eqref{eq-bmm1} with the choice $\varepsilon=0$
we arrive at
\[
B_{m,M}^2[\widetilde u,\widetilde u]=A_m^2[u,u] +\big((R_M+M^2)\otimes 1\big)[\widetilde u,\widetilde u]
\le \Big(E_j(A_m^2) + \dfrac{C a}{M}\Big)\|u\|^2_{L^2(\Omega,\CC^N)}.
\]
For $u\in V$ there holds $\|\widetilde u\|^2_{L^2(\RR^n,\CC^N)}\ge \|u\|^2_{L^2(\Omega,\CC^N)}$,
and $\widetilde V:=\{\widetilde u:\, u\in V\}$ is therefore a $j$-dimensional subspace of $H^1(\RR^n,\CC^N)\equiv \qdom(B_{m,M}^2)$.
The min-max principle gives
\begin{multline*}
E_j(B_{m,M}^2)\le \sup_{0\ne v\in \widetilde V} \dfrac{B_{m,M}^2[v,v]}{\|v\|^2_{L^2(\RR^n,\CC^N)}}=\sup_{0\ne u\in V} \dfrac{B_{m,M}^2[\widetilde u,\widetilde u]}{\|\widetilde u\|^2_{L^2(\RR^n,\CC^N)}}\\
\le \sup_{0\ne u\in V} \dfrac{\Big(E_j(A_m^2) + \dfrac{C a}{M}\Big)\|u\|^2_{L^2(\Omega,\CC^N)}}{\|\widetilde u\|^2_{L^2(\RR^n,\CC^N)}}\le E_j(A_m^2) + \dfrac{C a}{M},
\end{multline*}
which implies $\limsup_{M\to+\infty}E_j(B_{m,M}^2)=E_j(A_m^2)$.

\subsection{Lower bound}

Now we use the representation \eqref{eq-bmm1} with an arbitrary fixed $\varepsilon>0$.
By the min-max principle, for any $j\in\NN$ one has
\begin{equation}
   \label{ejej0}
E_j(B_{m,M}^2)\ge E_j(K_{m,M,\varepsilon} \oplus K^c_{M,\varepsilon})
\end{equation}
where $K_{m,M,\varepsilon}$ is the self-adjoint operator in $L^2(\Omega,\CC^N)$ with the form
domain $\qdom(K_{m,M,\varepsilon})=H^1(\Omega,\CC^N)$ and
\begin{multline*}
K_{m,M,\varepsilon}[u,u]
=\int_{\Omega} \big(|\nabla u|^2 +m^2|u|^2\big)\dd x\\
+\int_\Sigma \Big(m-\varepsilon+\dfrac{H_1}{2}\Big)|u|^2\dd s + 2(M-m)\int_\Sigma | \cP_- u|^2\dd s,
\end{multline*}
and $K^c_{M,\varepsilon}$ is the self-adjoint operator in $L^2(\Omega^c,\CC^N)$ with the form domain
$\qdom(K^c_{M,\varepsilon})=H^1(\Omega^c,\CC^N)$ and
\[
K^c_{M,\varepsilon}[u,u]=\int_{\Omega^c} \big(|\nabla u|^2 +M^2|u|^2\big)\dd x -\int_\Sigma \Big(M-\varepsilon+\dfrac{H_1}{2}\Big)|u|^2\dd s.
\]
Using the operator $R_\gamma$ from  Lemma~\ref{lem11} one easily
sees that $K^c_{M,\varepsilon}=(R_{M-\varepsilon}\otimes 1)+M^2$ with $1$ being the identity in $\CC^N$, and then, using Lemma~\ref{lem11}(b),
 $E_1(K^c_{M,\varepsilon})=E_1(R_{M-\varepsilon})+M^2\ge \varepsilon M$ as $M$ is large. Due to the upper bound proved in the preceding subsection we know already
that for each fixed $j\in\NN$ there holds $E_j(B_{m,M}^2)=\cO(1)$ for large $M$, hence, Eq.~\eqref{ejej0} implies
\[
E_j(B_{m,M}^2)\ge \min\big\{E_j(K_{m,M,\varepsilon}), E_1(K^c_{M,\varepsilon})\big\}=E_j(K_{m,M,\varepsilon}) \text{ as } M\to+\infty.
\]
As the operators $K_{m,M,\varepsilon}$ are increasing with respect to $M$,
with the help of the monotone convergence (Proposition~\ref{prop-mon}) for each $j\in\NN$
one obtains $\lim_{M\to+\infty} E_j(K_{m,M,\varepsilon})=E_j(C_{m,\varepsilon})$, where
$C_{m,\varepsilon}$ is the self-adjoint operator in $L^2(\Omega,\CC^N)$ given by
\begin{gather*}
C_{m,\varepsilon}[u,u]
=\int_{\Omega} \big(|\nabla u|^2 +m^2|u|^2\big)\dd x
+\int_\Sigma \Big(m-\varepsilon+\dfrac{H_1}{2}\Big)|u|^2\dd s,\\
\qdom(C_{m,\varepsilon})=\big\{
u\in H^1(\Omega,\CC^N):\, \cP_- u=0 \text{ on } \Sigma
\big\}\equiv \qdom (A_m^2).
\end{gather*}
This shows that $\liminf_{M\to+\infty} E_j(B_{m,M}^2)\ge E_j(C_{m,\varepsilon})$. As $\varepsilon>0$ is arbitrary
and we have the obvious limit $\lim_{\varepsilon\to 0} E_j(C_{m,\varepsilon})=E_j(C_{m,0})\equiv E_j(A_m^2)$,
we arrive at the sought lower bound $\liminf_{M\to+\infty} E_j(B_{m,M}^2)\ge E_j(A_m^2)$, which finishes the proof.

\section{Proof of Theorem~\ref{thm3}}\label{sec-thm3}

We are going to show that for each $j\in\NN$ the eigenvalues $E_j(B_{m,M}^2)$ converge to $E_j(\Dsl^2)$ as $m\to -\infty$ and $M\to+\infty$ with $m/M\to 0$.
Due to Lemma~\ref{lemld} for each $j\in\NN$ there holds $E_j(\Dsl^2)=E_j(L)$, hence, it is sufficient
to prove that $E_j(B_{m,M}^2)$ converges to $E_j(L)$ in the same asymptotic regime.
The proof is essentially by combining in a new way some constructions used in the proofs of Theorems~\ref{thm1a} and~\ref{thm2}.

\subsection{Upper bound}

Let us recall the important technical ingredients. For small $\delta>0$ consider the sets
$\Omega_\delta:=\{x\in\Omega:\, \dist(x,\Sigma)<\delta\}$ and $\Pi_\delta:=\Sigma\times(0,\delta)$
as well as the diffeomorphisms $\Phi:\Pi_\delta\to \Omega_\delta$ given by $\Phi(s,t)=s-t\nu(s)$
and the associated unitary maps $\Theta_\delta: L^2(\Omega_\delta,\CC^N)\to L^2(\Pi_\delta,\CC^N)$ with
$\Theta_\delta u= \sqrt{\det(\Phi')}\,\, u\circ\Phi$.

Consider the self-adjoint operator $S$ in $L^2(0,\delta)$ with
\[
S[f,f]=\int_{0}^\delta |f'|^2\dd t + m\big|f(0)\big|^2, \quad \qdom(S)=\big\{f\in H^1(0,\delta):\, f(\delta)=0\big\}
\]
and let $\psi$ be an eigenfunction for the first eigenvalue normalized by $\|\psi\|^2_{L^2(0,\delta)}=1$.
By Lemma~\ref{lem1dd} with some $b>0$ one has
\[
E_1(S)\le-m^2+be^{-\delta|m|},
\quad
\big|\psi(0)\big|^2\le b |m|,
\text{ as $(-m)$ is large}.
\]
Also recall that due to Lemma~\ref{lem11}(a) one can find $c>0$ such that for $\delta\in(0,\delta_0)$
and $u\in H^1(\Omega_\delta)$ with $u=0$ on $\partial \Omega_\delta\setminus\Sigma$ there holds,
with $w:=\Theta_\delta u$,
\begin{multline}
   \label{upp01}
\int_{\Omega_\delta} |\nabla u|^2\dd x +\int_{\partial U} \Big(m+\dfrac{H_1}{2}\Big) |u|^2\dd s\\
\le
\int_{\Pi_\delta} \bigg[(1+c\delta) |\nabla_s w|^2 + |\partial_t w|^2 + \Big(H_2 -\dfrac{H_1^2}{4}+c\delta\Big)
|w|^2 \bigg]\dd s\dd t\\
+m\int_{\Sigma} \big|w(s,0)\big|^2\dd s.
\end{multline}
We will use the representation \eqref{eq-bmm1} with $\varepsilon=0$, i.e.
\begin{multline}
B_{m,M}^2[u, u]=\int_{\Omega} \big(|\nabla u|^2 +m^2|u|^2\big)\dd x +\int_\Sigma \Big(m+\dfrac{H_1}{2}\Big)|u|^2\dd s + 2(M-m)\int_\Sigma | \cP_- u|^2\dd s\\
+ \int_{\Omega^c} \big(|\nabla u|^2 +M^2|u|^2\big)\dd x -\int_\Sigma \Big(M+\dfrac{H_1}{2}\Big)|u|^2\dd s, \quad u\in H^1(\RR^n,\CC^N).
  \label{eq-bmm2}
\end{multline}
For small $a\in\RR$ consider the operator $L_a$ in $\cH$ given by
\begin{equation}
  \label{lavv1}
	\begin{aligned}
L_a[g,g]&=\int_\Sigma \Big[(1+ca)|\nabla g|^2 +\Big(H_2-\dfrac{H_1^2}{4}+ca\Big) |g|^2\Big]\dd s,\\
\qdom(L_a)&=H^1(\Sigma,\CC^N)\cap \cH.
\end{aligned}
\end{equation}
Finally, by Lemma~\ref{lem11} for large $M>0$ there exists $C>0$ and a linear extension map $F_M:H^1(\Sigma,\CC^N)\to H^1(\Omega^c,\CC^N)$
with $(F_M f) |_\Sigma=f$ and
\[
\int_{\Omega^c} \big(|\nabla F_M f|^2 +M^2|F_M f|^2\big)\dd x -\int_\Sigma \Big(M+\dfrac{H_1}{2}\Big)|F_M f|^2\dd s\le \dfrac{C}{M}\, \|f\|^2_{H^1(\Sigma,\CC^N)}.
\]
for all $f\in H^1(\Sigma,\CC^N)$.

Let $j\in\NN$ and $v_1,\dots,v_j$ be linearly independent eigenfunctions of $L_\delta$ for the first $j$ eigenvalues, then for 
$v\in V:=\vspan(v_1,\dots,v_j)$ one has $L_\delta[v,v]\le E_j(L_\delta)\,\|v\|^2_\cH\equiv E_j(L_\delta)\,\|v\|^2_{L^2(\Sigma,\CC^N)}$.
Denote
\[
a_0:=\sup\big\{ \|v\|^2_{H^1(\Sigma,\CC^N)}:\, v\in V \text{ with } \|v\|^2_{\cH}=1\big\}<\infty.
\]
For $v\in V$ construct $u\in H^1(\RR^n,\CC^N)$ as follows:
\[
u=\begin{cases}
\Theta_\delta^{-1} (v\otimes \psi)& \text{ in } \Omega_\delta,\\
\psi(0)F_M v& \text{ in } \Omega^c,\\
0& \text{ in } \Omega\setminus \Omega_\delta.
\end{cases}
\]
By construction one has
\[
\|u\|^2_{L^2(\RR^n,\CC^N)}\ge \|u\|^2_{L^2(\Omega_\delta,\CC^N)}=\|v\|^2_{L^2(\Sigma,\CC^N)} \|\psi\|^2_{L^2(0,\delta)}=\|v\|^2_{L^2(\Sigma,\CC^N)}
\equiv \|v\|^2_\cH,
\]
hence, the subspace $U:=\{u:\, v\in V\}\subset H^1(\RR^n,\CC^N)$ is $j$-dimensional. By the above properties of $F_M$ and $\psi$
one has
\begin{multline*}
\int_{\Omega^c} \big(|\nabla u|^2 +M^2|u|^2\big)\dd x -\int_\Sigma \Big(M+\dfrac{H_1}{2}\Big)|u|^2\dd s\\
=\big|\psi(0)\big|^2 \bigg(\int_{\Omega^c} \big(|\nabla F_M v|^2 +M^2|F_M v|^2\big)\dd x -\int_\Sigma \Big(M+\dfrac{H_1}{2}\Big)|F_M v|^2\dd s\bigg)\\
\le \big|\psi(0)\big|^2 \,\dfrac{C}{M}\, \|v\|^2_{H^1(\Sigma,\CC^N)}\le b|m| \, \dfrac{C}{M}\,a_0\|v\|^2_{L^2(\Sigma,\CC^N)}\equiv a_0bC\dfrac{|m|}{M}\,\|v\|^2_{\cH},
\end{multline*}
and due to \eqref{upp01} there holds
\begin{multline*}
\int_{\Omega} \big(|\nabla u|^2 +m^2|u|^2\big)\dd x +\int_\Sigma \Big(m+\dfrac{H_1}{2}\Big)|u|^2\dd s + 2(M-m)\int_\Sigma | \cP_- u|^2\dd s\\
\begin{aligned}
\equiv
&\int_{\Omega_\delta} \big(|\nabla u|^2 +m^2|u|^2\big)\dd x +\int_\Sigma \Big(m+\dfrac{H_1}{2}\Big)|u|^2\dd s\\
&\le \int_0^\delta\int_{\Sigma} \bigg[(1+c\delta) \big|\nabla_s (v\otimes\psi)\big|^2  + \big|\partial_t (v\otimes\psi)\big|^2\\
&\qquad
+ \Big(m^2+H_2 -\dfrac{H_1^2}{4}+c\delta\Big) \big|(v\otimes\psi)\big|^2 \bigg]\dd s\dd t
+m\int_{\Sigma} \big|(v\otimes\psi)(s,0)\big|^2\dd s\\
&=\Big(\int_\Sigma \Big[(1+c\delta)|\nabla v|^2 +\Big(H_2-\dfrac{H_1^2}{4}+c\delta\Big) |v|^2\Big]\dd s\Big)\, \|\psi\|^2_{L^2(0,\delta)}\\
&\qquad+\Big( \int_{0}^\delta |\psi'|^2\dd t + m\big|\psi(0)\big|^2+m^2 \|\psi\|^2_{L^2(0,\delta)}\Big)\,\|v\|^2_{L^2(\Sigma,\CC^N)}\\
&=L_\delta[v,v] +\big(E_1(S) +m^2\big) \|v\|^2_\cH\le \big(E_j(L_\delta)+be^{-\delta|m|}\big)\|v\|^2_{\cH}.
\end{aligned}
\end{multline*}
Inserting the preceding inequalities into the expression \eqref{eq-bmm2} for $B^2_{m,M}$ one sees that
for all $u\in U$ there holds
\[
B_{m,M}^2[u,u]\le \Big(E_j(L_\delta)+be^{-\delta|m|}+a_0bC\dfrac{|m|}{M}\Big)\|v\|^2_{\cH}, \quad \|u\|^2_{L^2(\RR^n,\CC^N)}\ge \|v\|^2_\cH,
\]
and the min-max principle gives
\[
E_j(B_{m,M}^2)\le \max_{0\ne u\in U} \dfrac{B_{m,M}^2[u,u]}{\|u\|^2_{L^2(\RR^n,\CC^N)}}
=\max_{0\ne v\in V} \dfrac{B_{m,M}^2[u,u]}{\|u\|^2_{L^2(\RR^n,\CC^N)}}
\le E_j(L_\delta)+be^{-\delta|m|}+a_0bC\dfrac{|m|}{M}.
\]
Therefore, one has $\limsup_{m\to -\infty,\, m/M\to 0} E_j(B_{m,M}^2)\le E_j(L_\delta)$.
As $\delta$ can be chosen arbitrarily small and $\lim_{\delta\to 0} E_j(L_\delta)=E_j(L_0)\equiv E_j(L)$ one arrives at
\begin{equation}
  \label{eq-thm3a}
\limsup_{m\to -\infty,\, m/M\to 0} E_j(B_{m,M}^2)\le E_j(L).
\end{equation}

\subsection{Lower bound}
Now we will use the representation \eqref{eq-bmm1} with $\varepsilon=\varepsilon_0/|m|$ and an arbitrary but fixed $\varepsilon_0>0$, i.e.
\begin{multline*}
B_{m,M}^2[u, u]\\
=\int_{\Omega} \big(|\nabla u|^2 +m^2|u|^2\big)\dd x +\int_\Sigma \Big(m- \dfrac{\varepsilon_0}{|m|}-\dfrac{H_1}{2}\Big)|u|^2\dd s + 2(M-m)\int_\Sigma | \cP_- u|^2\dd s\\
+ \int_{\Omega^c} \big(|\nabla u|^2 +M^2|u|^2\big)\dd x -\int_\Sigma \Big(M - \dfrac{\varepsilon_0}{|m|}+\dfrac{H_1}{2}\Big)|u|^2\dd s, \quad u\in H^1(\RR^n,\CC^N).
\end{multline*}
Due to the min-max principle for any $j\in\NN$ one has
\begin{equation}
  \label{ebm01}
E_j(B_{m,M}^2)\ge E_j(K_{m,M}\oplus K^c_{m,M}),
\end{equation}
where
$K_{m,M}$ is the self-adjoint operator in $L^2(\Omega,\CC^N)$ with the form domain given by $\qdom(K_{m,M})=H^1(\Omega,\CC^N)$
and
\begin{multline*}
K_{m,M}[u,u]=\int_{\Omega} \big(|\nabla u|^2 +m^2|u|^2\big)\dd x\\
+\int_\Sigma \Big(m- \dfrac{\varepsilon_0}{|m|}-\dfrac{H_1}{2}\Big)|u|^2\dd s + 2(M-m)\int_\Sigma | \cP_- u|^2\dd s
\end{multline*}
and $K^c_{m,M}$ is the self-adjoint operator in $L^2(\Omega^c,\CC^N)$ with $\qdom(K^c_{m,M})=H^1(\Omega^c,\CC^N)$
and
\[
K^c_{m,M}[u,u]=\int_{\Omega^c} \big(|\nabla u|^2 +M^2|u|^2\big)\dd x -\int_\Sigma \Big(M - \dfrac{\varepsilon_0}{|m|}+\dfrac{H_1}{2}\Big)|u|^2\dd s.
\]
Using the operator $R_\gamma$ from Lemma~\ref{lem11} we see that in the asymptotic regime under consideration we have, with some $C_0>0$,
\begin{multline*}
E_1(K^c_{m,M})=E_1(R_{M-\varepsilon_0/m})+M^2\ge M^2-\Big( M- \dfrac{\varepsilon_0}{|m|}\Big)^2-C_0\\
=2\varepsilon_0 \dfrac{M}{|m|}-\dfrac{\varepsilon_0^2}{m^2}-C_0\to+\infty.
\end{multline*}
As we have already the upper bound $E_j(B_{m,M}^2)=\cO(1)$, it follows from \eqref{ebm01} that
$E_j(B_{m,M}^2)\ge E_j(K_{m,M})$. One can assume in addition that $M\ge 0$ and $m\le 0$,
then $2(M-m)\ge -2m\ge 2|m|$,
which implies
\begin{equation}
    \label{ebm02}
E_j(B_{m,M}^2)\ge E_j(K_m),
\end{equation}
with $K_m$ being the self-adjoint operator in $L^2(\Omega,\CC^N)$ with $\qdom(K_m)=H^1(\Omega,\CC^N)$
and
\[
K_m[u,u]=
\int_{\Omega} \big(|\nabla u|^2 +m^2|u|^2\big)\dd x +\int_\Sigma \Big(m- \dfrac{\varepsilon_0}{|m|}-\dfrac{H_1}{2}\Big)|u|^2\dd s + 2|m|\int_\Sigma | \cP_- u|^2\dd s.
\]
In order to obtain a lower bound for the eigenvalues of $K_m$ we take a small $\delta>0$
and consider the domains $\Omega_\delta=\big\{x\in\Omega: \dist(x,\delta)\big\}$ and $\Omega^c_\delta:=\Omega\setminus\overline{\Omega_\delta}$,
then due to the min-max principle one has
\begin{equation}
    \label{ekm03}
E_j(K_m)\ge E_j(K'_m\oplus K''_m),
\end{equation}
where $K'_m$ is the self-adjoint operator in $L^2(\Omega_\delta,\CC^N)$ with the form domain
$\qdom(K_m')=H^1(\Omega_\delta,\CC^N)$ and
\[
K'_m[u,u]=\int_{\Omega_\delta} \big(|\nabla u|^2 +m^2|u|^2\big)\dd x +\int_\Sigma \Big(m- \dfrac{\varepsilon_0}{|m|}-\dfrac{H_1}{2}\Big)|u|^2\dd s + |m|\int_\Sigma | \cP_- u|^2\dd s,
\]
while $K''_m$ is the self-adjoint operator in $L^2(\Omega_\delta^c,\CC^N)$ with 
\[
\qdom(K''_m)=H^1(\Omega^c_\delta,\CC^N),
\quad
K''_m[u,u]=\int_{\Omega^c_\delta} \big(|\nabla u|^2 +m^2|u|^2\big)\dd x,
\]
and $E_1(K''_m)\ge m^2\to +\infty$. By combining \eqref{ebm02} and~\eqref{ekm03} one sees that
$E_j(B_{m,M}^2)\ge E_j(K'_m\oplus K''_m)$. As we already have proved the upper bound
$E_j(B^2_{m,M})=\cO(1)$, it follows that
\begin{equation}
   \label{ebm04}
E_j(B_{m,M}^2)\ge E_j(K_m').
\end{equation}
Using now the diffeomorphism
\[
\Phi:\Pi_\delta\to \Omega_\delta,  \quad \Pi_\delta:=\Sigma\times(0,\delta),
\quad
\Phi(s,t)\mapsto s-t\nu(s),
\]
and the unitary maps
$\Theta_\delta: L^2(\Omega_\delta,\CC^N)\to L^2(\Pi_\delta,\CC^N)$, 
$\Theta_\delta u= \sqrt{\det (\Phi')}\,\, u\circ\Phi$,
with the help of Lemma~\ref{lem7}(b) one obtains
$E_j(K'_m)=E_j(\Theta_\delta^* K'_m\Theta_\delta)\ge E_j(K^0_m)$
with $K^0_m$ being the self-adjoint operator in $L^2(\Pi_\delta,\CC^N)$ given by
\begin{multline}
   \label{ekk0}
K^0_m[v,v]=\int_{\Pi_\delta} \bigg[(1-c\delta) |\nabla_s v|^2 + |\partial_t v|^2 + \Big(H_2 -\dfrac{H_1^2}{4}-c\delta\Big)
|v|^2 \bigg]\dd s\dd t\\
 +\Big(m-\dfrac{\varepsilon_0}{|m|}\Big)\int_\Sigma \big|v(s,0)\big|^2\dd s-c\int_\Sigma \big|v(s,\delta)\big|^2\dd s
+|m|\int_\Sigma \big|\cP_- v(s,0)\big|^2\dd s
\end{multline}
on the form domain $\qdom(K^0_m)=H^1(\Pi_\delta,\CC^N)$,
where $c>0$ is chosen independent of $\delta$ and $v$.
With this choice of  $c$, let $S'$ be the self-adjoint operator in $L^2(0,\delta)$ with
\[
S'[f,f]=\int_0^\delta |f'|^2\dd t +\Big(m-\dfrac{\varepsilon_0}{|m|}\Big) \big|f(0)\big|^2 - c \big|f(\delta)\big|^2,
\quad
\qdom (S')=H^1(0,\delta).
\]
and $\psi_k\in L^2(0,\delta)$ with $k\in\NN$ be its eigenfunctions for the eigenvalues $E_k(S')$ forming an orthonormal basis in $L^2(0,\delta)$. Due to Lemma~\ref{lem1dr} we have, with some $b^\pm>0$, $b>0$ and $b_0>0$,
\begin{gather}
    \label{conv05a}
E_1(S')\ge -\Big(|m|+\dfrac{\varepsilon_0}{|m|}\Big)^2-b e^{-\delta|m|}\ge -m^2-3\varepsilon_0 \text{ as } m\to-\infty, \\
    \label{conv06a}
b^- k^2- b_0\le E_k(S')\le b^+ k^2 \text{ for all $k\ge 2$ and $m\in \RR$.}
\end{gather}
For small $a\in\RR$, in addition to the operator $L_a$ in $\cH$
defined in \eqref{lavv1} we consider the self-adjoint operator $\Lambda_a$ in $L^2(\Sigma,\CC^N)$ given by
\[
\Lambda_a[g,g]=\int_\Sigma \Big[(1+ca)|\nabla g|^2 +\Big(H_2-\dfrac{H_1^2}{4}+ca\Big) |g|^2\Big]\dd s,
\quad
\qdom(\Lambda_a)=H^1(\Sigma,\CC^N).
\]

Let $K^1_m$ be the self-adjoint operator in $L^2(\Pi_\delta)$ having the same form domain
as $K^0_m$ and with the sesquilinear form obtained from the one of $K^0_m$ by omitting the last summand in \eqref{ekk0},
then $K^1_m$ admits a separation of variables: using the identification
$L^2(\Pi_\delta)\simeq L^2(0,\delta)\otimes L^2(\Sigma,\CC^N)$ one has
$K^1_m=S'\otimes 1+1\otimes\Lambda_{-\delta}$.
Using the unitary transform
\[
\Theta:L^2(0,\delta)\to \ell^2(\NN), \quad
(\Theta f)_k=\langle \psi_k,f\rangle_{L^2(0,\delta)}, \quad k\in\NN,
\]
the identification $L^2(\Pi_\delta)\simeq L^2(0,\delta)\otimes L^2(\Sigma,\CC^N)$
and another unitary transform
\begin{gather*}
\Xi:=\Theta\otimes 1:L^2(\Pi_\delta)\to \ell^2(\NN)\otimes L^2(\Sigma,\CC^N),\\
\Xi v=(v_k)=:\widehat v, \quad v_k:=\int_0^\delta \psi_k(t)\,v(t,\cdot)\dd t\in L^2(\Sigma,\CC^N),
\end{gather*}
for the self-adjoint operator $\widehat K^1_m:=\Xi K^1_m \Xi^*$ in $\ell^2(\NN)\otimes L^2(\Sigma,\CC^N)$ one has
\[
\widehat K^1_m[\widehat v,\widehat v]=
\sum_{k\in\NN} \Big( \Lambda_{-\delta}[v_k,v_k] +\big(E_k(S')+m^2\big)\|v_k\|^2_{L^2(\Sigma,\CC^N)},
\]
while $\qdom(K^1_m)$ consists of all $\widehat v\in \ell^2(\NN)\otimes L^2(\Sigma,\CC^N)$ with
$v_k\in H^1(\Sigma,\CC^N)$ such that the right-hand side of the preceding expression is finite.
Using the two-sided estimate \eqref{conv06a} one can rewrite
\begin{multline}
\qdom(\widehat K^1_m)=\Big\{ \widehat v=(v_k)\in \ell^2(\NN)\otimes L^2(\Sigma,\CC^N):\,
v_k\in H^1(\Sigma,\CC^N) \text{ for each $k\in \NN$}\\
\text{and } \sum_{k\in\NN} \Big(\|v_k\|^2_{H^1(\Sigma,\CC^N)}+k^2\|v_k\|^2_{H^1(\Sigma,\CC^N)}\Big)<\infty\Big\}.
   \label{qdomy0a}
\end{multline}
For the operator $\widehat K^0_m:=\Xi K^0_m \Xi^*$ one has the same form domain and
\[
\widehat K^0_m[\widehat v,\widehat v]=
\sum_{k\in\NN} \Big( \Lambda_{-\delta}[v_k,v_k] +\big(E_k(S')+m^2\big)\|v_k\|^2_{L^2(\Sigma,\CC^N)}\Big)
+|m|\int_\Sigma \big|\cP_- \Xi^* \widehat v(\cdot,0)\big|^2\dd s.
\]
Using the lower bounds \eqref{conv05a} and \eqref{conv06a} for $E_k(S')$, for any
$\widehat v\in \qdom(\widehat K^0_m)$
we obtain the inequality $\widehat K^0_m[\widehat v,\widehat v]\ge w_m(\widehat v,\widehat v)$
with the sesquilinear form $w_m$ in $\ell^2(\NN)\otimes L^2(\Sigma,\CC^N)$
defined on $\dom (w_m):=\qdom(\widehat K^0_m)$ by
\begin{multline*}
w_m(\widehat v,\widehat v):=\Lambda_{-\delta}[v_1,v_1]-3\varepsilon_0\|v_1\|^2_{L^2(\Sigma,\CC^N)}\\
+\sum_{k\ge 2} \Big( \Lambda_{-\delta}[v_k,v_k] + (b^-k^2-b_0 +m^2)\|v_k\|^2_{L^2(\Sigma,\CC^N)}\Big)
+
|m|\int_\Sigma \big|\cP_-\Xi^* \widehat v(\cdot,0)\big|^2\dd s.
\end{multline*}
Using the above representation \eqref{qdomy0a} one sees that the form $w_m$  is lower semibounded and closed, hence it generates a self-adjoint operator $W_m$
in $\ell^2(\NN)\otimes L^2(\Sigma,\CC^N)$ with compact resolvent, and then $E_j(\Hat K^0_m)\ge E_j(W_m)$ for all $j\in\NN$.
By summarizing all the preceding constructions, for any $j\in\NN$ in the asymptotic regime under consideration one has
\begin{equation}
   \label{conv07b}
E_j(B_{m,M}^2)\ge E_j(W_m).
\end{equation}
For the analysis of the eigenvalues of $W_m$ as $m\to-\infty$ we are now in the classical situation for the monotone convergence (Proposition~\ref{prop-mon}), as $W_m$
are increasing with respect to $|m|$. Namely, consider the set
\[
\cQ_\infty:=\Big\{\widehat v=(v_k)\in \bigcap_{m<0} \qdom (W_m)\equiv \qdom(\widehat K^0_m), \quad \sup_{m<0} W_m[\widehat v,\widehat v]<+\infty\Big\},
\]
then a vector $\widehat v=(v_k)\in \qdom(\widehat K^0_m)$ belongs to $\cQ_\infty$
iff the following two conditions are satisfied: (i) $v_k=0$ for all $k\ge 2$ and (ii) $\cP_- \Xi^* \widehat v(\cdot,0)=0$.
The condition (i) gives $v=e_1\otimes v_1$ with $e_1=(1,0,0,\dots)\in\ell^2(\NN)$, and then the condition
(ii) reduces to $\cP_- v_1=0$, i.e. $v_1\in\cH$.
Therefore,
\[
\cQ_\infty=\big\{e_1\otimes v_1: \, v_1\in H^1(\Sigma,\CC^N)\cap\cH\big\}.
\]
Moreover, for any $e_1\otimes v_1\in\cQ_\infty$ one has
\[
\lim_{m\to-\infty} W_m[e_1\otimes v_1,e_1\otimes v_1]=L_{-\delta}[v_1,v_1] -3\varepsilon_0\|v_1\|^2_{\cH},
\]
while we recall that $L_{-\delta}$ is defined as in \eqref{lavv1}.
Therefore, if one denotes by $W_\infty$ the self-adjoint operator in $e_1\otimes\cH$ given by
\[
W_\infty[e_1\otimes v_1,e_1\otimes v_1]=L_{-\delta}[v_1,v_1] -3\varepsilon_0\|v_1\|^2_{\cH},
\]
then it follows by the monotone convergence (Proposition~\ref{prop-mon})
that for each $j\in\NN$ there holds $\lim_{m\to-\infty}E_j(W_m)=E_j(W_\infty)\equiv E_j(L_{-\delta})-3\varepsilon_0$.
By \eqref{conv07b} one has
$\liminf_{M\to+\infty,m\to-\infty,m/M\to 0}\ge E_j(L_{-\delta})-3\varepsilon_0$.
As both $\delta$ and $\varepsilon_0$ can be chosen arbitrarily small and we have the convergence $\lim_{a\to0}E_j(L_a)=E_j(L)$,
we arrive at the inequality $\liminf_{M\to+\infty,m\to-\infty,m/M\to 0} E_j(B_{m,M}^2)\ge E_j(L_)$.
By combining it with the upper bound \eqref{eq-thm3a} we arrive at the result of Theorem~\ref{thm3}.

\appendix

\section{Schr\"odinger-Lichnerowicz formula for extrinsic Dirac operators on Euclidean hypersurfaces}\label{sec-lichn}

Let $\Sigma\subset\RR^n$ be a smooth compact hypersurface with the outer unit normal field $\nu$ and endowed with the Riemannian metric induced by the embedding.
Recall that the standard scalar product in $\RR^n$ gives  rise to the induced scalar product in $T\Sigma$, which we simply denote by $\langle\cdot,\cdot\rangle$ in this section.
Denote by $W$ the Weingarten operator, $W X=\nabla_X \nu$ for $X\in T\Sigma$, with $\nabla$ being the gradient in $\RR^n$.
Recall that the Levi-Civita connection $\nabla'$ on $\Sigma$
is given by the Gauss formula
\[
\nabla'_X Y=\nabla_X Y+\langle WX,Y\rangle\nu, \quad X,Y\in T\Sigma.
\]
We denote
\[
H_1:=\tr W, \quad |W|^2:=\tr (W^2), \quad H_2:=\dfrac{H_1^2-|W|^2}{2},
\]
i.e. $H_1$ is the mean curvature and $H_2$ is the half of the scalar curvature of $\Sigma$.

Let $N\in\NN$ and $\gamma_1,\dots,\gamma_n$ be $N\times N$ anticommuting Hermitian matrices satisfying
$\gamma_j^2=I$, with $I$ being the identity matrix, then the matrices
\[
\gamma(x):=\sum_{j=1}^n x_j \gamma_j, \quad x=(x_1,\dots,x_n)\in\RR^n,
\]
satisfy the commutation relation $\gamma(x)\gamma(y)+\gamma(x)\gamma(y)=2\langle x,y\rangle I_N$ for all $x,y\in\RR^n$.
Let us recall the definition of the associated extrinsic Dirac operator $D^\Sigma$
on $\Sigma$ following \cite{hmw}.
The induced spin connection $\nabla^\Sigma$ on $\Sigma$ is defined by
\[
\nabla^\Sigma_X \psi=\nabla_X+\dfrac{1}{2}\,\gamma(\nu)\gamma(WX):\, C^\infty(\Sigma,\CC^N)\to C^\infty(\Sigma,\CC^N), \quad
X\in T\Sigma,
\]
then $D^\Sigma$ acts on functions $\psi\in C^\infty(\Sigma,\CC^N)$ by
\[
D^\Sigma \psi:=-\gamma(\nu) \sum_{j=1}^{n-1} \gamma(e_j) \nabla^\Sigma_{e_j} \psi
\]
with $(e_1,\dots,e_{n-1})$ being an orthonormal frame of $T\Sigma$. Recall that
$\gamma(e_j)$ anticommute with $\gamma(\nu)$ and, furthermore,
\begin{equation}
    \label{eqwh1}
\sum_{j=1}^{n-1} \gamma(e_j) \gamma(We_j)=H_1\, I
\end{equation}
(which is seen by testing on an eigenbasis of $W$), and we may rewrite
\[
D^\Sigma \psi=\dfrac{H_1}{2}\,\psi-\gamma(\nu) \sum_{j=1}^{n-1} \gamma(e_j) \nabla_{e_j}\psi,
\]
Being viewed as an operator in $L^2(\Sigma,\CC^N)$, the operator $D^\Sigma$
is known to be essentially self-adjoint on $C^\infty(\Sigma,\RR^N)$.
We would like to provide a elementary direct proof, adapted to the Euclidean setting, of the eminent Schr\"odinger-Lichnerowicz formula
\begin{equation}
   \label{lichnapp}
(D^\Sigma)^2=(\nabla^\Sigma)^*\nabla^\Sigma+\dfrac{H_2}{2}\, I,
\end{equation}
where the first term on the right-hand side is the Bochner Laplacian associated with the above spin connection $\nabla^\Sigma$,
which is a self-adjoint operator in $L^2(\Sigma,\CC^N)$.
(We refer to the original papers \cite{Lic,Sch} and the monographs \cite{moroianu,fried,ginoux} for a more general setting.)

In what follows we use the standard identification of $T\Sigma$ and $T^*\Sigma$ with the help of the musical isomorphism.
Remark first that for $\psi\in C^\infty(\Sigma,\CC^N)$ we have the decomposition
\begin{equation}
    \label{nabla1}
\nabla^\Sigma \psi=\sum_{j=1}^{n-1} e_j\otimes \nabla^\Sigma_{e_j} \psi=
\sum_{j=1}^{n-1} e_j\otimes \Big(\nabla_{e_j}+\dfrac{1}{2}\,\gamma(\nu)\gamma(W e_j)\Big) \psi.
\end{equation}
Let us compute the adjoint $(\nabla^\Sigma)^*:T^*\Sigma\otimes C^\infty(\Sigma,\CC^N)\to C^\infty(\Sigma,\CC^N)$.
For $X\in T\Sigma\simeq T^* \Sigma$ and $\varphi,\psi\in C^\infty(\Sigma,\CC^N)$ we have
\begin{multline*}
\big\langle(\nabla^\Sigma)^*(X\otimes \varphi),\psi\big\rangle_{L^2(\Sigma,\CC^N)}=\langle X\otimes\varphi,\nabla^\Sigma\psi\rangle_{T^*\Sigma\otimes L^2(\Sigma,\CC^N)}\\
=\big\langle \varphi, \nabla_X \psi+\dfrac{1}{2}\,\gamma(\nu)\gamma(WX)\psi\big\rangle_{L^2(\Sigma,\CC^N)}
=\big\langle \varphi, \nabla_X \psi\big\rangle_{L^2(\Sigma,\CC^N)}\\
+\Big\langle \dfrac{1}{2}\,\gamma(WX) \gamma(\nu)\varphi,\psi\Big\rangle_{L^2(\Sigma,\CC^N)}.
\end{multline*}
Using Leibniz rule and the divergence theorem we have
\begin{align*}
\langle \varphi, \nabla_X \psi\rangle_{L^2(\Sigma,\CC^N)}&=
\int_\Sigma X\langle \varphi,\psi\rangle_{\CC^N} \dd s-\langle \nabla_X \varphi, \psi\rangle_{L^2(\Sigma,\CC^N)}\\
&=-\big\langle (\ddiv_\Sigma X) \varphi +\nabla_X \varphi,\psi\big\rangle_{L^2(\Sigma,\CC^N)},
\end{align*}
where $\ddiv_\Sigma$ is the divergence on $\Sigma$, 
\[
\ddiv_\Sigma X=\sum_{j=1}^{n-1} \langle e_j,\nabla'_{e_j} X\rangle.
\]
Therefore,
\[
(\nabla^\Sigma)^*(X\otimes \varphi)=-(\ddiv_\Sigma X)\,\varphi-\nabla_X \varphi + \frac{1}{2}\,\gamma(WX)\gamma(\nu)\varphi. \nonumber
\]
By combining \eqref{nabla1} with the last expression, for $\psi\in C^\infty (\Sigma,\CC^N)$ one obtains
\begin{equation}
  \label{eq-nab2}
\begin{aligned}
(\nabla^\Sigma)^* \nabla^\Sigma\psi&=\sum_{j=1}^{n-1} (\nabla^\Sigma)^* \Big[e_j\otimes \Big(\nabla_{e_j}+\dfrac{1}{2}\,\gamma(\nu)\gamma(W e_j)\Big) \psi\Big]\\
&=-\sum_{j=1}^{n-1} (\ddiv_\Sigma e_j) \Big(\nabla_{e_j}+\dfrac{1}{2}\,\gamma(\nu)\gamma(W e_j)\Big) \psi\\
&\quad +\sum_{j=1}^{n-1}\bigg\{
-\nabla_{e_j}\Big(\nabla_{e_j}+\dfrac{1}{2}\,\gamma(\nu)\gamma(W e_j)\Big)\psi\\
&\quad+\dfrac{1}{2}\,\gamma(We_j)\gamma(\nu)\Big(\nabla_{e_j}+\dfrac{1}{2}\,\gamma(\nu)\gamma(W e_j)\Big)\psi
\bigg\}=:S_1+S_2.
\end{aligned}
\end{equation}
To simplify $S_1$ we first use the Leibniz rule and the orthogonality of $(e_j)$ to obtain
\[
\ddiv_\Sigma e_j=\sum_{k=1}^{n-1} \langle e_k,\nabla'_{e_k} e_j\rangle=-\sum_{k=1}^{n-1} \langle \nabla'_{e_k} e_k, e_j\rangle
\]
and 
\begin{align*}
S_1&=\sum_{j,k=1}^{n-1} \langle \nabla'_{e_k} e_k, e_j\rangle\nabla_{e_j}\psi
+\dfrac{1}{2}\sum_{j,k=1}^{n-1} \langle \nabla'_{e_k} e_k, e_j\rangle \gamma(\nu)\gamma(W e_j) \psi\\
&=\sum_{k=1}^{n-1} \Big(\sum_{j=1}^{n-1}\langle \nabla'_{e_k} e_k, e_j\rangle \nabla_{e_j}\psi\Big)\\
&\qquad +\dfrac{1}{2}\sum_{k=1}^{n-1} \gamma(\nu) \gamma\bigg( W\sum_{j=1}^{n-1}\langle \nabla'_{e_k} e_k, e_j\rangle e_j \bigg)\\
&=\sum_{k=1}^{n-1} \nabla_{\nabla'_{e_k} e_k} \psi +\dfrac{1}{2}\sum_{k=1}^{n-1} \gamma(\nu)\gamma \Big( W \nabla'_{e_k} e_k\Big).
\end{align*}
Furthermore,
\begin{align*}
S_2&=\sum_{j=1}^{n-1} \bigg\{-\nabla_{e_j}\nabla_{e_j}\psi-\dfrac{1}{2}\,\gamma(We_j)\gamma(W e_j)\psi
-\dfrac{1}{2}\,\gamma(\nu)\gamma\big(\nabla_{e_j}(W e_j)\big)\psi\\
&\qquad -\dfrac{1}{2}\,\gamma(\nu)\gamma(We_j)\nabla_{e_j}\psi+\dfrac{1}{2}\,\gamma(We_j)\gamma(\nu)\nabla_{e_j}\psi+ \dfrac{1}{4}\,\gamma(We_j)\gamma(\nu)\gamma(\nu)\gamma(W e_j)\psi\bigg\}\\
&=\sum_{j=1}^{n-1} \bigg\{-\nabla_{e_j}\nabla_{e_j}\psi-\dfrac{1}{2}\,\gamma(\nu)\gamma\big(\nabla_{e_j}(W e_j)\big)\psi-\gamma(\nu)\gamma(We_j)\nabla_{e_j}\psi\bigg\}\psi
- \dfrac{1}{4}\,|W|^2\psi,
\end{align*}
and then
\begin{multline*}
(\nabla^\Sigma)^* \nabla^\Sigma\psi=
\sum_{j=1}^{n-1} \bigg[ \nabla_{\nabla'_{e_j} e_j} \psi
-\nabla_{e_j}\nabla_{e_j}\psi\\
+\dfrac{1}{2}\gamma(\nu)\gamma \Big( W \nabla'_{e_j} e_j-\nabla_{e_j}(W e_j)\Big)\psi
-\gamma(\nu)\gamma(We_j)\nabla_{e_j}\psi\bigg]
- \dfrac{1}{4}\,|W|^2\psi.
\end{multline*}
Using $\nabla'_{e_j} (W e_j)=\nabla_{e_j} (W e_j)+|W e_j|^2\nu$ and Leibniz rule we have
\[
W \nabla'_{e_j} e_j-\nabla_{e_j}(W e_j)=W\nabla'_{e_j} e_j-\nabla'_{e_j} (W e_j)+|W e_j|^2\nu
=-(\nabla'_{e_j} W)e_j +|W e_j|^2\nu
\]
implying $\gamma(\nu)\gamma \big( W \nabla'_{e_j} e_j-\nabla_{e_j}(W e_j)\big)\psi=-\gamma(\nu)\gamma\big((\nabla'_{e_j} W)e_j\big)\psi+|W e_j|^2\psi$,
and then
\begin{multline}
    \label{lapl1}
(\nabla^\Sigma)^* \nabla^\Sigma\psi=
\sum_{j=1}^{n-1} \bigg[ \nabla_{\nabla'_{e_j} e_j} \psi
-\nabla_{e_j}\nabla_{e_j}\psi\\
-\dfrac{1}{2} \,\gamma(\nu)\,\gamma\big((\nabla'_{e_j} W)e_j\big)\psi
-\gamma(\nu)\,\gamma(We_j)\nabla_{e_j}\psi\bigg]
+ \dfrac{1}{4}\,|W|^2\psi.
\end{multline}
On the other hand,
\begin{equation}
   \label{eq-dd0}
\begin{aligned}
D^\Sigma D^\Sigma\psi&=\Big(\dfrac{H_1}{2}-\gamma(\nu) \sum_{j=1}^{n-1} \gamma(e_j) \nabla_{e_j}\Big)\Big(\dfrac{H_1\,\psi}{2}-\gamma(\nu) \sum_{j=1}^{n-1} \gamma(e_j) \nabla_{e_j}\psi\Big)\\
&=\dfrac{H_1^2}{4}-\dfrac{1}{2} \Big( \gamma(\nu) \sum_{j=1}^{n-1} \gamma(e_j) \nabla_{e_j} H_1\Big)\, \psi - \dfrac{H_1}{2}\,\gamma(\nu) \sum_{j=1}^{n-1} \gamma(e_j) \nabla_{e_j}\psi\\
&\quad-\dfrac{H_1}{2}\,\gamma(\nu) \sum_{j=1}^{n-1} \gamma(e_j) \nabla_{e_j}\psi+\gamma(\nu)\sum_{j,k=1}^{n-1} \gamma(e_j)\gamma(We_j)\gamma(e_k)\nabla_{e_k}\psi\\
&\quad + \gamma(\nu)\sum_{j,k=1}^{n-1} \gamma(e_j)\gamma(\nu)\gamma(\nabla_{e_j}e_k)\nabla_{e_k}\psi\\
&\quad+\gamma(\nu)\sum_{j,k=1}^{n-1}\gamma(e_j)\gamma(\nu)\gamma(e_k)\nabla_{e_j}\nabla_{e_k}\psi.
\end{aligned}
\end{equation}
The sum of the second, third and forth terms is zero, in fact, 
\begin{align*}
&-\dfrac{H_1}{2}\,\gamma(\nu) \sum_{j=1}^{n-1} \gamma(e_j) \nabla_{e_j}\psi
-\dfrac{H_1}{2}\,\gamma(\nu) \sum_{j=1}^{n-1} \gamma(e_j) \nabla_{e_j}\psi\\
&\quad+\gamma(\nu)\sum_{j,k=1}^{n-1} \gamma(e_j)\gamma(We_j)\gamma(e_k)\nabla_{e_k}\psi\\
&=
-H_1\,\gamma(\nu) \sum_{j=1}^{n-1} \gamma(e_k) \nabla_{e_k}\psi
+\gamma(\nu)\sum_{j,k=1}^{n-1} \gamma(e_j)\gamma(We_j)\gamma(e_k)\nabla_{e_k}\psi\\
&=\gamma(\nu)\sum_{k=1}^{n-1} \Big( -H_1 +\sum_{j=1}^{n-1}\gamma(e_j)\gamma(We_j)\Big)\gamma(e_k)\nabla_{e_k}\psi=0
\end{align*}
as the term in the parentheses identically vanishes due to \eqref{eqwh1}. Therefore, Eq.~\eqref{eq-dd0} rewrites as
\begin{equation}
   \label{eq-dd1}
\begin{aligned}
(D^\Sigma)^2\psi&=\dfrac{H_1^2}{4}-\dfrac{1}{2} \, \gamma(\nu) \sum_{j=1}^{n-1} \gamma(e_j) (\nabla_{e_j} H_1)\, \psi
- \sum_{j,k=1}^{n-1} \gamma(e_j)\gamma(\nabla_{e_j}e_k)\nabla_{e_k}\psi\\
&\quad -\sum_{j,k=1}^{n-1}\gamma(e_j)\gamma(e_k)\nabla_{e_j}\nabla_{e_k}\psi.
\end{aligned}
\end{equation}
We transform the last term in this expression as follows:
\begin{multline*}
\sum_{j,k=1}^{n-1}\gamma(e_j)\gamma(e_k)\nabla_{e_j}\nabla_{e_k}\psi\\
\begin{aligned}
=\,&\dfrac{1}{2}\,\sum_{j,k=1}^{n-1}\Big(\gamma(e_j)\gamma(e_k)\nabla_{e_j}\nabla_{e_k}\psi+\gamma(e_k)\gamma(e_j)\nabla_{e_k}\nabla_{e_j}\psi\Big)\\
=\, &\dfrac{1}{2}\,\sum_{j,k=1}^{n-1}\Big(\gamma(e_j)\gamma(e_k)+\gamma(e_k)\gamma(e_j)\Big)\nabla_{e_j}\nabla_{e_k}\psi\\
&+\dfrac{1}{2}\sum_{j,k=1}^{n-1}\gamma(e_k)\gamma(e_j) \Big( \nabla_{e_k}\nabla_{e_j}-\nabla_{e_j}\nabla_{e_k}\Big)\psi\\
=\,&\sum_{j=1}^{n-1}\nabla_{e_j}\nabla_{e_j}\psi+\dfrac{1}{2}\,J,
\end{aligned}
\end{multline*}
where
\[
J:=\sum_{j,k=1}^{n-1}\gamma(e_j)\gamma(e_k) \Big( \nabla_{e_j}\nabla_{e_k}-\nabla_{e_k}\nabla_{e_j}\Big)\psi
\equiv
\sum_{j,k=1}^{n-1}\gamma(e_j)\gamma(e_k) \nabla_{[e_j,e_k]}\psi.
\]
Representing $[e_j,e_k]=\sum_{l=1}^{n-1} \big\langle e_l,[e_j,e_k]\big\rangle e_k$
we have
\[
J=\sum_{j,k,l=1}^{n-1} \gamma(e_j)\gamma(e_k)\Big[\langle e_l,\nabla'_{e_j}e_k\rangle-\langle e_l,\nabla'_{e_k}e_j\rangle\Big]\nabla_{e_l}\psi,
\]
and using
\[
\sum_{j=1}^{n-1} e_j\langle e_l,\nabla'_{e_k}e_j\rangle=-\sum_{j=1}^{n-1} e_j\langle \nabla'_{e_k} e_l,e_j\rangle=-\nabla'_{e_k} e_l
\]
we rewrite
\begin{align*}
J&=-\sum_{j,k=1}^{n-1}\gamma(e_j)\,\gamma(\nabla'_{e_j}e_k) \nabla_{e_k}\psi+\sum_{j,k=1}^{n-1}\gamma(\nabla'_{e_j}e_k)\,\gamma(e_j)\, \nabla_{e_k}\psi\\
&=-2\sum_{j,k=1}^{n-1}\gamma(e_j)\,\gamma(\nabla'_{e_j}e_k) \nabla_{e_k}\psi\\
&\qquad+\sum_{j,k=1}^{n-1}\Big( \gamma(e_j)\,\gamma(\nabla'_{e_j}e_k)+\gamma(\nabla'_{e_j}e_k)\,\gamma(e_j)\,\Big) \nabla_{e_k}\psi\\
&=-2\sum_{j,k=1}^{n-1}\gamma(e_j)\,\gamma(\nabla'_{e_j}e_k) \nabla_{e_k}\psi+2\sum_{j,k=1}^{n-1}\langle e_j,\nabla'_{e_j}e_k\rangle \nabla_{e_k}\psi\\
&=-2\sum_{j,k=1}^{n-1}\gamma(e_j)\,\gamma(\nabla'_{e_j}e_k) \nabla_{e_k}\psi-2\sum_{j,k=1}^{n-1}\langle \nabla'_{e_j}e_j,e_k\rangle \nabla_{e_k}\psi\\
&=-2\sum_{j,k=1}^{n-1}\gamma(e_j)\,\gamma(\nabla'_{e_j}e_k) \nabla_{e_k}\psi-2\sum_{j=1}^{n-1} \nabla_{\nabla'_{e_j}e_j}\psi.
\end{align*}
The substitution into \eqref{eq-dd1} gives
\begin{equation*}
\begin{aligned}
D^2_\Sigma \psi&=\dfrac{H_1^2}{4}-\dfrac{1}{2} \, \gamma(\nu) \sum_{j=1}^{n-1} \gamma(e_j) (\nabla_{e_j} H_1)\, \psi
- \sum_{j,k=1}^{n-1} \gamma(e_j)\gamma(\nabla_{e_j}e_k)\nabla_{e_k}\psi\\
&\quad -\sum_{j=1}^{n-1}\nabla_{e_j}\nabla_{e_j}\psi+\sum_{j,k=1}^{n-1}\gamma(e_j)\,\gamma(\nabla'_{e_j}e_k) \nabla_{e_k}\psi+\sum_{j=1}^{n-1} \nabla_{\nabla'_{e_j}e_j}\psi.
\end{aligned}
\end{equation*}
The sum of the third and the fifth terms simplifies as
\begin{multline*}
- \sum_{j,k=1}^{n-1} \gamma(e_j)\gamma(\nabla_{e_j}e_k)\nabla_{e_k}\psi+\sum_{j,k=1}^{n-1}\gamma(e_j)\,\gamma(\nabla'_{e_j}e_k) \nabla_{e_k}\psi\\
=\sum_{j,k=1}^{n-1} \gamma(e_j)\gamma(\nabla'_{e_j}e_k-\nabla_{e_j}e_k) \nabla_{e_k}\psi
=\sum_{j,k=1}^{n-1} \gamma(e_j)\gamma\big(\langle W e_j,e_k\rangle\nu\big) \nabla_{e_k}\psi\\
=\sum_{j,k=1}^{n-1} \gamma\big(e_j \langle  e_j, W e_k\rangle\big)\gamma(\nu) \nabla_{e_k}\psi=
\sum_{k=1}^{n-1}\gamma(W e_k) \gamma(\nu)\nabla_{e_k}\psi,
\end{multline*}
hence,
\begin{multline*}
D^2_\Sigma \psi=\dfrac{H_1^2}{4}-\dfrac{1}{2} \, \gamma(\nu) \sum_{j=1}^{n-1} \gamma(e_j) (\nabla_{e_j} H_1)\, \psi\\
+\sum_{j=1}^{n-1}\gamma(W e_j) \gamma(\nu)\nabla_{e_j}\psi
-\sum_{j=1}^{n-1}\nabla_{e_j}\nabla_{e_j}\psi+\sum_{j=1}^{n-1} \nabla_{\nabla'_{e_j}e_j}\psi.
\end{multline*}
By comparing the last expression with \eqref{lapl1} we obtain
\begin{multline*}
D^2_\Sigma \psi - (\nabla^\Sigma)^*\nabla^\Sigma\psi\\
=
\dfrac{H_1^2}{4}-\dfrac{1}{4} |W|^2 -\dfrac{1}{2} \, \gamma(\nu) \sum_{j=1}^{n-1} \gamma(e_j) (\nabla_{e_j} H_1)\, \psi
+\sum_{j=1}^{n-1}\gamma(W e_j) \gamma(\nu)\nabla_{e_j}\psi\\
+\dfrac{1}{2} \sum_{j=1}^{n-1}\,\gamma(\nu)\,\gamma\big((\nabla'_{e_j} W)e_j\big)\psi
+\sum_{j=1}^{n-1}\gamma(\nu)\,\gamma(We_j)\nabla_{e_j}\psi.
\end{multline*}
Noting that the sum of the fouth term and the sixth term on the right hand is zero,
we arrive at
\begin{multline*}
(D^\Sigma)^2 \psi - (\nabla^\Sigma)^*\nabla^\Sigma\psi\\
= \dfrac{H_2}{2}\,\psi-\dfrac{1}{2} \, \gamma(\nu) \sum_{j=1}^{n-1} \gamma(e_j) (\nabla_{e_j} H_1)\, \psi
+\dfrac{1}{2} \sum_{j=1}^{n-1}\,\gamma(\nu)\,\gamma\big((\nabla'_{e_j} W)e_j\big)\psi\\
=\dfrac{H_2}{2}\,\psi + \dfrac{1}{2}\gamma(\nu) \gamma \bigg(\sum_{j=1}^{n-1} (\nabla'_{e_j} W) e_j - \sum_{j=1}^{n-1} (\nabla_{e_j} H_1)e_j\bigg)\psi.
\end{multline*}
Therefore, in order to show the sought identity \eqref{lichnapp} it is sufficient to prove the  equality
\begin{equation}
  \label{ric0}
\sum_{j=1}^{n-1} (\nabla'_{e_j} W)e_j=\sum_{j=1}^{n-1} (\nabla_{e_j} H_1)e_j.
\end{equation}
In order to check \eqref{ric0} let us remark that $\nabla_X \nabla_Y Z-\nabla_Y \nabla_Z - \nabla_{[X,Y]} Z=0$ for any $X,Y,Z\in T\Sigma$.
Using the definition of $\nabla'$ we have
\begin{align*}
0=\,&\nabla_X \big( \nabla'_Y Z - \langle WY,Z\rangle\nu\big)-\nabla_Y\big(\nabla'_X Z -\langle WX,Z\rangle \nu\big)- \nabla'_{[X,Y]} Z +\big\langle W[X,Y],Z\big\rangle\nu\\
=\,&\nabla'_X\big( \nabla'_Y Z - \langle WY,Z\rangle\nu\big)- \big\langle WX, \nabla'_Y Z - \langle WY,Z\rangle\nu\big\rangle
- \nabla'_Y\big(\nabla'_X Z -\langle WX,Z\rangle \nu\big)\\
& +\big\langle WY, \nabla'_X Z -\langle WX,Z\rangle \nu\big\rangle- \nabla'_{[X,Y]} Z +\big\langle W[X,Y],Z\big\rangle\nu.
\end{align*}
Using $\nabla'_X \nu=\nabla_X \nu=WX$ we then arrive at
\begin{align*}
0=\,&\nabla'_X\nabla'_Y Z
- \big\langle (\nabla'_X W)Y,Z\big\rangle\nu-\big\langle W(\nabla'_X Y),Z\big\rangle\nu-\langle WY,\nabla'_X Z\rangle\nu-\langle WY,Z\rangle WX\\
&- \langle WX, \nabla'_Y Z\rangle\nu-\nabla'_Y \nabla'_X Z+ \big\langle (\nabla'_Y W)X,Z\big\rangle \nu+\langle W\nabla'_Y X,Z\rangle \nu\\
&+\langle WX,\nabla'_Y Z\rangle \nu+\langle WX,Z\rangle WY+\langle WY, \nabla'_X Z\rangle
- \nabla'_{[X,Y]} Z +\big\langle W[X,Y],Z\big\rangle\nu\\
=\,& \nabla'_X\nabla'_Y Z-\nabla'_Y \nabla'_X Z- \nabla'_{[X,Y]} Z
-\langle WY,Z\rangle WX +\langle WX,Z\rangle WY\\
& +\big\langle (\nabla'_Y W)X,Z\big\rangle \nu - \big\langle (\nabla'_X W)Y,Z\big\rangle\nu\\
& +\langle W\nabla'_Y X,Z\rangle \nu-\big\langle W(\nabla'_X Y),Z\big\rangle\nu +\big\langle W[X,Y],Z\big\rangle\nu.
\end{align*}
As $\nabla'_X Y-\nabla'_Y X=[X,Y]$, the sum of the terms in the last line vanishes, and considering the normal components of the remaining equality
we obtain $\big\langle (\nabla'_Y W)X,Z\big\rangle = \big\langle (\nabla'_X W)Y,Z\big\rangle$, and then
$\big\langle (\nabla'_Y W)X,Z\big\rangle=\big\langle Y,(\nabla'_X W) Z\big\rangle$.
Taking $Y=Z=e_k$ and summing over $k$ we arrive at
\[
\sum_{k=1}^{n-1}\big\langle (\nabla'_{e_k} W) X,e_k\big\rangle=\sum_{j=1}^{n-1} \big\langle e_k,(\nabla'_X W ) e_k\big\rangle
\text{ i.e. }  \sum_{k=1}^{n-1}\big\langle X, (\nabla'_{e_k} W) e_k\big\rangle=\nabla_X H_1.
\]
Using the last equality for $X=e_j$ we obtain
\[
\sum_{j=1}^{n-1}\sum_{k=1}^{n-1}\big\langle e_j, (\nabla'_{e_k} W) e_k\big\rangle e_j=\sum_{j=1}^{n-1} (\nabla_{e_j} H_1) e_j,
\]
and the left-hand side simplifies to $\sum_{k=1}^{n-1} (\nabla'_{e_k} W) e_k$, which gives \eqref{ric0} and finishes the proof of~\eqref{lichnapp}.

\section*{Acknowledgments}

The authors thank Christian G\'erard for numerous useful discussions. A large part of this paper was written while Thomas Ourmi\`eres-Bonafos was supported by a public grant as part of the ``Investissement d'avenir'' project, reference ANR-11-LABX-0056-LMH, LabEx LMH, at the University Paris-Sud.
Now, Thomas Ourmi\`eres-Bonafos is supported by the ANR "D\'efi des autres savoirs (DS10) 2017" programm, reference ANR-17-CE29-0004, project molQED.
Konstantin Pankrashkin was in part supported by the PRC 1556 CNRS-RFBR 2017--2019 ``Multi-dimensional semi-classical problems of condensed matter physics and quantum mechanics''.

\end{document}